\title{Bandwidth Parameterized by Cluster Vertex Deletion Number%
\footnote{An extended abstract of this work was presented at the 18th International Symposium on Parameterized and Exact Computation
(IPEC 2023)~\cite{iwpec/Gima0KMV23}.}}
\titlerunning{Bandwidth Parameterized by Cluster Vertex Deletion Number}
\author{Tatsuya Gima}
{Hokkaido University, Sapporo, Japan}
{gima@ist.hokudai.ac.jp}
{https://orcid.org/0000-0003-2815-5699}
{Partially supported by JSPS KAKENHI Grant Number JP23KJ1066.}
\author{Eun Jung Kim}
{School of Computing, KAIST and Discrete Mathematics Group, Institute for Basic Science (IBS), Daejeon, South Korea / CNRS, France}
{eunjung.kim@kaist.ac.kr}
{https://orcid.org/0000-0002-6824-0516}
{Supported by ANR project ANR-18-CE40-0025-01 (ASSK), Institute for Basic Science (IBS-R029-C1) and National Research Foundation in Korea (RS-2025-00563533).}
\author{Noleen K\"{o}hler}
{University of Leeds, Leeds, UK}
{n.koehler@leeds.ac.uk}
{https://orcid.org/0000-0002-1023-6530}
{Supported by ANR project ANR-18-CE40-0025-01 (ASSK).}
\author{Nikolaos Melissinos}
{Department of Theoretical Computer Science, Faculty of Information Technology,
Czech Technical University in Prague, Czech Republic}{nikolaos.melissinos@fit.cvut.cz}{https://orcid.org/0000-0002-0864-9803}{Supported by the CTU Global postdoc fellowship program.}
\author{Manolis Vasilakis}
{Universit\'{e} Paris-Dauphine, PSL University, CNRS UMR7243, LAMSADE, Paris, France}
{emmanouil.vasilakis@dauphine.eu}
{https://orcid.org/0000-0001-6505-2977}
{Partially supported by ANR project ANR-21-CE48-0022 (S-EX-AP-PE-AL).}
\authorrunning{T. Gima,  E. J. Kim, N. K\"{o}hler, N. Melissinos, and M. Vasilakis}
\keywords{Bandwidth, Clique number, Cluster vertex deletion number, Parameterized complexity}
\begin{document}

\maketitle

\begin{abstract}
Given a graph $G$ and an integer $b$,
\textsc{Bandwidth} asks whether there exists a bijection $\pi$ from $V(G)$ to $\{1, \ldots, |V(G)|\}$
such that $\max_{\{u, v \} \in E(G)} | \pi(u) - \pi(v) | \leq b$.
This is a classical NP-complete problem, known to remain NP-complete even on very restricted classes of graphs,
such as trees of maximum degree 3 and caterpillars of hair length 3.
In the realm of parameterized complexity, these results imply that the problem remains NP-hard on graphs of bounded pathwidth,
while it is additionally known to be W[1]-hard when parameterized by the tree-depth of the input graph.
In contrast, the problem does become FPT when parameterized by the vertex cover number.
In this paper we make progress in understanding the parameterized (in)tractability of \textsc{Bandwidth}.
We first show that it is FPT when parameterized by the cluster vertex deletion number cvd plus the clique number $\omega$,
thus significantly strengthening the previously mentioned result for vertex cover number.
On the other hand, we show that \textsc{Bandwidth} is W[1]-hard when parameterized only by cvd.
Our results develop and generalize some of the methods of argumentation of the previous results and narrow some of the complexity gaps.
\end{abstract}

\section{Introduction}\label{sec:introduction}

Given an undirected graph $G$ and an integer $b$,
{\BW} asks whether there exists a bijection $\pi \colon V(G) \to \{ 1, \ldots, |V(G)| \}$ of the vertices of $G$ (called an \emph{ordering})
such that $\max_{\{u,v\} \in E(G)} \abs{\pi(u) - \pi(v)} \leq b$.
The main motivation behind its study dates back to over half a century;
a closely related problem in the field of matrix theory was first studied in the 1950's,
while in the 1960's it was formulated as a graph problem, finding applications in minimizing (average) absolute error in codes,
and has been extensively studied ever since~\cite{jgt/ChinnCDG82,tcs/CyganP10,dam/CyganP12,talg/CyganP12,jcss/DubeyFU11,algorithmica/FeigeT09,tcs/FurerGK13,jal/Gupta01,Harper64}.

{\BW} is long known to be NP-complete~\cite{jgt/ChinnCDG82,Papadimitriou76};
as a matter of fact, that is the case even on very restricted classes of graphs,
such as trees of maximum degree 3~\cite{GareyGJK78},
caterpillars of hair length 3~\cite{Burhardk86},
and cyclic caterpillars of hair length 1~\cite{tcs/Muradian03}.
Considering these NP-hardness results, in this paper we focus on the problem's parameterized complexity.
When parameterized by the natural parameter $b$, {\BW} is known to be in XP~\cite{jal/GurariS84,siammax/Saxe80},
whilst it is XNLP-complete even when the input graph is a tree~\cite{wg/Bodlaender21,iandc/BodlaenderGNS24},
which implies W[$t$]-hardness for all positive integers $t$.
In fact, {\BW} cannot be solved in time $f(b)n^{o(b)}$ even on trees of pathwidth at most two
unless the Exponential Time Hypothesis fails~\cite{icalp/DregiL14},
rendering the known $2^{\bO(b)} n^{b+1}$ algorithm essentially optimal~\cite{siammax/Saxe80}.
Regarding structural parameterizations,
the previously mentioned results imply that {\BW} is para-NP-complete when parameterized by the pathwidth or the treewidth plus the maximum degree of the input graph;
the latter implies NP-completeness also on graphs of constant tree-cut width~\cite{siamdm/GanianKS22}.
Moreover, it is known to be W[1]-hard parameterized by the tree-depth of the input graph~\cite{tcs/GimaHKKO22}.
In contrast, the problem does become fixed-parameter tractable (FPT) when parameterized by the vertex cover number~\cite{isaac/FellowsLMRS08},
the neighborhood diversity~\cite{mthesis/Bakken18}, or the max-leaf number~\cite{mst/FellowsLMMRS09}.

In the last few years, a plethora of structural parameters have been introduced
in an attempt to precisely determine the limits of tractability of algorithmic problems that are FPT parameterized by vertex cover number,
yet become W[1]-hard when parameterized by more general parameters, such as treewidth or clique-width.
Some of the most well-studied such parameters are
vertex integrity~\cite{tcs/GimaHKKO22},
tree-depth~\cite{ejc/NesetrilM06},
twin cover number~\cite{dmtcs/Ganian15},
cluster vertex deletion number~\cite{mfcs/DouchaK12},
shrub-depth~\cite{lmcs/GanianHNOM19},
neighborhood diversity~\cite{algorithmica/Lampis12},
and modular-width~\cite{iwpec/GajarskyLO13}.
The tractability of {\BW} with respect to those parameters has remained largely unexplored,
with the exception of neighborhood diversity~\cite{mthesis/Bakken18} and tree-depth~\cite{tcs/GimaHKKO22}.

Cluster vertex deletion number lies between clique-width and vertex cover number (more precisely twin cover number)
and is defined as the minimum size of a set of vertices whose removal induces a cluster graph,
i.e., a disjoint union of cliques.
Computing the cluster vertex deletion number of a graph is known to lie in FPT~\cite{mst/BoralCKP16,mst/HuffnerKMN10,mst/Tsur21} and
it was first considered as a structural parameter in~\cite{mfcs/DouchaK12},
while it has been used to show parameterized (in)tractability results in multiple occasions
ever since~\cite{jcss/BanikKR25,algorithmica/BruhnCJS16,tcs/ChlebikovaC17,mst/ChopinNNW14,iwoca/KareR19,algorithmica/KuceraS23,algorithmica/MajumdarR18}.
Notice that {\BW} is trivial on cluster graphs;
it suffices to check whether the clique number is at most $b+1$,
as there exists an optimal ordering that places the vertices of every clique consecutively,
for some ordering of the cliques.
Therefore, its tractability when parameterized by the cluster vertex deletion number of the input graph
poses a very natural question.

\begin{figure}[tb]
    \centering
    \tikzset{
    fpt/.style={
        shape=rectangle,
        fill=green!40!white,
        draw,
        line width= .2,
        rounded corners = 3pt,
        inner sep = 2,
        outer sep = 2,
        minimum height = 12,
    },
    w1/.style={
        shape=rectangle,
        fill=orange!40!white,
        rounded corners = 3pt,
        inner sep = 2,
        outer sep = 2,
        minimum height = 12,
    },
    nph/.style={
        shape=rectangle,
        fill=red!40!white,
        rounded corners = 1pt,
        inner sep = 3,
        outer sep = 3,
        draw,
        double,
        double distance=1,
        minimum height = 12,
    }
}

\begin{tikzpicture}[every node/.style={thick, align=center}, scale=0.9]
  \small

  \node[fpt] (vc) at (0,   0) {vertex cover number~\cite{isaac/FellowsLMRS08}};
  \node[nph,above = 6.5 of vc] (cw) {clique-width};

  \node[fpt,above = .3 of vc] (tco)  {twin cover number $\operatorname{+} \omega$};
  \node[above = .8 of tco] (anccd) {};

  \node[above left = .05 of anccd] (tc)  {twin cover number};
  \node[fpt,right = 1.2  of anccd.north] (cdo) {cluster vertex\\deletion number $\operatorname{+} \omega$ \\\relax[\cref{thm:FPTAlgo}]};

  \node[above = 0.3 of cdo] (vi)  {vertex integrity};
  \node[w1, above = 0.4 of vi] (td)  {tree-depth~\cite{tcs/GimaHKKO22}};
  \node[w1, right = 2.5 of vi.south] (bw) {bandwidth~\cite{iandc/BodlaenderGNS24}};
  \node[fpt,below = 1 of bw] (ml) {max-leaf number~\cite{mst/FellowsLMMRS09}};

  \node[nph,above = .4 of td] (pw) {pathwidth~\cite{Burhardk86}};

  \node[nph,above = .4 of pw] (tw) {treewidth};
  \node[nph,above =  .7of bw] (twD) {treewidth + $\Delta$~\cite{GareyGJK78}};
  \node[nph,above  =  .5of twD] (tcw) {tree-cut width};

  \node[w1,above = of anccd] (cd) {cluster vertex\\ deletion number\\\relax[\cref{thm:hardness}]};
  \node[w1,above = of cd] (sd) {shrub-depth};

  \node[fpt,above left = 1.3 of vc] (nd) {neighborhood\\diversity~\cite{mthesis/Bakken18}};
  \node[above = 2.3 of nd] (mw) {modular-width};

  \draw[thick,<-] (cw.east) -- (tw.north west);
  \draw[thick,<-] (tw) -- (pw);
  \draw[thick,<-] (tw.east) -- (tcw.north west);
  \draw[thick,<-] (pw) -- (td);
  \draw[thick,<-] (twD) -- (bw);
  \draw[thick,<-] (tcw) -- (twD);
  
  \draw[thick,<-] (pw.south east) -- (bw.north west);
  \draw[thick,<-] (bw) -- (ml);

  \draw[thick,<-] (sd) -- (cd);
  \draw[thick,<-] (sd) -- (td.north west);
  \draw[thick,<-] (sd) -- (nd);
  \draw[w1,thick,<-] (cw) -- (sd);

  \draw[thick,<-] (cd) -- (cdo.north west);
  \draw[thick,<-] (cdo.south west) -- (tco);
  \draw[thick,<-] (tco) -- (vc);
  \draw[thick,<-] (tc) -- (tco);
  \draw[thick,<-] (td) -- (vi);
  \draw[thick,<-] (vi) -- (cdo);
  \draw[thick,<-] (cd) -- (tc);
  
  \draw[thick,<-] (cw.south west) -- (mw);
  \draw[thick,<-] (mw) -- (tc);
  \draw[thick,<-] (mw) -- (nd);
  \draw[thick,<-] (nd) -- (vc.north west);

\end{tikzpicture}
    \caption{Our results and hierarchy of some related structural graph parameters,
    where $\omega$ and $\Delta$ denote the clique number and the maximum degree of the input graph, respectively.
    Arrows between parameters indicate generalization relations, that is, for any graph,
    if the parameter at the tail of an arrow is a constant then the parameter at the head of the arrow is also a constant.
    The reverse does not hold in this figure.
    The framed green, frameless orange, and double framed red rectangles indicate
    fixed-parameter tractable, W[$\ast$]-hard, and NP-complete cases, respectively.
    }
    \label{fig:intro:results}
\end{figure}
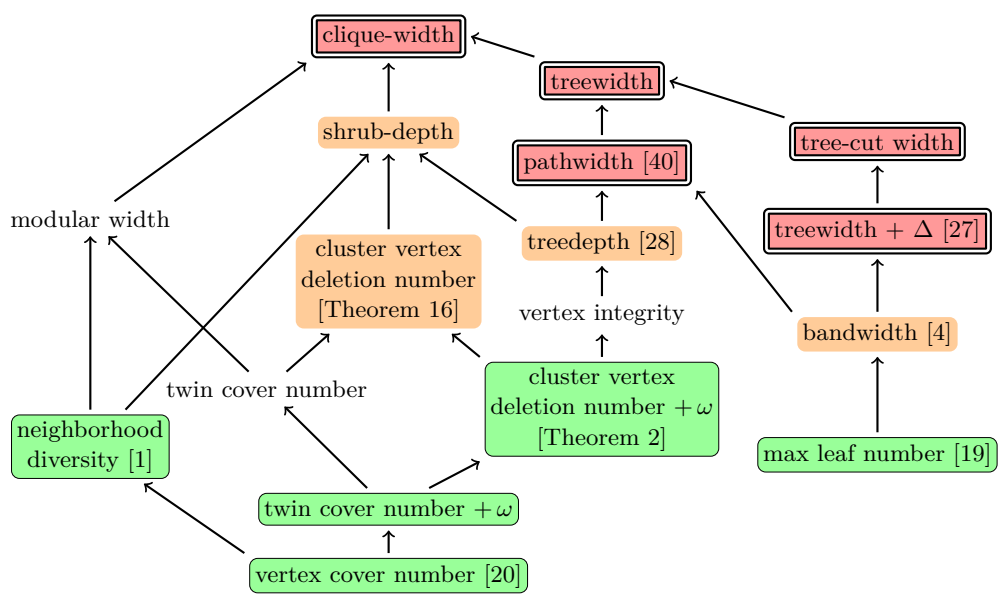

\subparagraph*{Our contribution.}
In the current work, we present both tractability and intractability results for {\BW} when cluster vertex deletion number is a parameter of the problem
(see \cref{fig:intro:results} for an overview of our results and the relationships between the structural parameters mentioned).
We first prove that {\BW} is FPT when parameterized by $\clusterDel + \omega$,
where $\clusterDel$ and $\omega$ denote the cluster vertex deletion number and clique number of the input graph respectively.
This significantly strengthens the tractability result for vertex cover number of~\cite{isaac/FellowsLMRS08},
and follows the same idea of encoding the problem as an \emph{integer linear program} (ILP) of a small number of variables.
Solving said ILP, one can verify whether there exists any ordering $\pi$ of the vertices of $G$ such that
(a) $|\pi(v) - \pi(u)| \leq b$ for all $\braces{u,v} \in E(G)$, and
(b) $\pi$ is ``nice'', where an ordering is nice if it has some specific properties.
Proving that there exists a nice ordering $\pi$ that minimizes $\max_{\braces{u,v} \in E(G)} |\pi(v) - \pi(u)|$ then yields the stated result.

A natural question that arises from the previous result is whether it is necessary for both $\clusterDel$ and $\omega$ to be parameters of the problem in order to assure fixed-parameter tractability.
Notice that {\BW} is NP-complete even when $\omega \leq 2$, since that is the case for trees.
Therefore, we proceed by studying the problem's tractability when parameterized only by $\clusterDel$.
In this setting, we show that {\BW} is W[1]-hard via a reduction from \UBP,
thus positively answering the previous question.

\subparagraph*{Related work.}
{\BW} is one of the so-called \emph{graph layout} problems (see the survey of~\cite{csur/DiazPS02}).
As far as the structural parameterized complexity of such problems is concerned,
Fellows, Lokshtanov, Misra, Rosamond, and Saurabh~\cite{isaac/FellowsLMRS08} were the first to prove FPT results for a multitude of them
when parameterized by the vertex cover number of the input graph, making use of ILP formulations.
Since then, not much progress has been made on that front, with a notable exception being \textsc{Imbalance},
which was shown to be FPT when parameterized by twin cover number plus $\omega$~\cite{tcs/MisraM21},
vertex integrity~\cite{tcs/GimaHKKO22}, or tree-cut width~\cite{siamdm/GanianKS22},
while it belongs to XP when parameterized by twin cover~\cite{tcs/MisraM21}.
\textsc{Minimum Linear Arrangement} is known to be FPT parameterized by max-leaf number, or edge clique number of the input graph~\cite{toct/FellowsHRS16},
as well as by the vertex cover number~\cite{arxiv/Lokshtanov17}.
Lastly, as far as \textsc{Cutwidth} is concerned,
a $2^{\bO(\vc)} n^{\bO(1)}$ time algorithm was presented in~\cite{algorithmica/CyganLPPS14},
improving over the ILP formulation of~\cite{isaac/FellowsLMRS08},
where $\vc$ denotes the vertex cover number.

\subparagraph*{Organization.}
In \cref{sec:preliminaries} we discuss the general preliminaries,
followed by the FPT-algorithm in \cref{sec:parameterizedTractability}
and the hardness result in \cref{sec:parametrizedHardness}.
Lastly, in \cref{sec:conclusion} we present the conclusion as well as some directions for future research.


\section{Preliminaries}\label{sec:preliminaries}

Throughout the paper we use standard graph notation~\cite{Diestel17},
and we assume familiarity with the basic notions of parameterized complexity~\cite{books/CyganFKLMPPS15}.
We assume that $\mathbb{N}$ is the set of all non-negative integers.
All graphs considered are undirected without loops.
The \emph{clique number} of a graph $G$, denoted by $\omega(G)$, is the size of its largest induced clique.
For $x, y \in \Z$, let $[x, y] = \setdef{z \in \Z}{x \leq z \leq y}$,
while $[x] = [1,x]$.
For $\mathcal{I}_i = [a_i, b_i]$, we say that intervals $\mathcal{I}_1, \ldots, \mathcal{I}_k$ \emph{partition} interval $\mathcal{I} = [a, b]$
if $\mathcal{I} = \bigcup_{i \in [k]} \mathcal{I}_i$ and $\mathcal{I}_i \cap \mathcal{I}_j = \varnothing$, for any $1 \leq i < j \leq k$.
Additionally, let $\mathcal{I}_i < \mathcal{I}_j$ if $b_i < a_j$.%
For a function $f \colon A \to B$ and $A' \subseteq A$,
let $f(A') = \setdef{f(a) \in B}{a \in A'}$.
Moreover, let $\maxofset{f}{A'} = \max \setdef{f(a)}{a \in A'}$ and $\minofset{f}{A'}$ defined analogously.

Let $G$ be a graph and $\pi \colon V(G) \to [n]$ an ordering of its vertices.
We define the \emph{stretch} of an edge $e = \braces{u, v} \in E(G)$ with regard to $\pi$ as $\stretchEdge{\pi}(e) = |\pi(u) - \pi(v)|$.
We define the \emph{stretch} of $\pi$ to be the maximum stretch of the edges of $G$ with regard to $\pi$,
i.e., $\str(\pi) = \max_{e \in E(G)} \stretchEdge{\pi}(e)$.
The \emph{bandwidth} of $G$, denoted $\bw(G)$,
is the minimum stretch over all its possible vertex orderings,
while {\BW} asks, given a graph $G$ and an integer $b$, whether $\bw(G) \le b$.

\begin{remark}\label{rem:automorphism}
    Note that the stretch of a vertex ordering is invariant under isomorphism,
    which means in particular that $\str(\pi) = \str(\pi \circ f)$ for any vertex ordering $\pi \colon V(G) \to [n]$
    and any automorphism $f \colon V(G) \to V(G)$ of $G$.
\end{remark}

A \emph{cluster deletion set} of a graph $G$ is a set $S \subseteq V(G)$ such that every component of $G-S$ is a clique.
If $S$ is a cluster deletion set, we call the components of $G-S$ \emph{clusters}.
The \emph{cluster vertex deletion number} of $G$, denoted $\clusterDel(G)$, is the size of its minimum cluster deletion set.

The feasibility variant of \emph{integer linear programming} (ILP) is to decide,
given a set $X$ of variables and a set $C$ of linear constraints (i.e., inequalities)
over the variables in $X$ with integer coefficients,  whether there is an assignment $\alpha \colon X \to \Z$
of the variables satisfying all constraints in $C$.
It is known that the feasibility of an instance of (ILP) can be tested in $\bO\parens{p^{2.5p+o(p)} \cdot L}$ time,
where $p$ is the number of variables and $L$ is the size of the input~\cite{FrankT87,Kannan87,Lenstra83}. 
In other words, computing the feasibility of an ILP formula is FPT parameterized by the number of variables.
Moreover, a solution can be computed in the same time if it exists.




\section{An FPT-algorithm parameterized by cluster vertex deletion number plus clique number}\label{sec:parameterizedTractability}

In this section we prove that {\BW} is FPT parameterized by the cluster vertex deletion number plus the clique number of the input graph.
We first state the main theorem of this section, followed by a short discussion regarding the considered parameterization
and a high-level overview of our proof.

\begin{theorem}\label{thm:FPTAlgo}
    {\BW} is fixed-parameter tractable when parameterized by $\clusterDel + \omega$,
    where $\clusterDel$ and $\omega$ denote the cluster vertex deletion number and clique number of the input graph respectively.
\end{theorem}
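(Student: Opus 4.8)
The plan is to encode the existence of an ordering of stretch at most $b$ as an integer linear program with $f(\clusterDel,\omega)$ variables and to invoke the ILP feasibility result stated in \cref{sec:preliminaries}. Let $S$ be a minimum cluster deletion set of $G$; we may compute one in time $3^{\clusterDel}\cdot n^{\bO(1)}$ by repeatedly branching on an induced $P_3$. Write $k=\clusterDel=|S|$ and $n=|V(G)|$. If $\omega>b+1$ we reject immediately, since a clique on $\omega$ vertices forces stretch at least $\omega-1$; hence every cluster of $G-S$ has at most $\omega\le b+1$ vertices. Call two clusters $C,C'$ \emph{equivalent} if there is an isomorphism $G[S\cup C]\to G[S\cup C']$ fixing $S$ pointwise; equivalence is determined by the multiset $\{N(v)\cap S : v\in C\}$, so the number of equivalence classes (\emph{types}) is at most $(\omega+1)^{2^{k}}=:T$, a function of $k$ and $\omega$. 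By \cref{rem:automorphism}, equivalent clusters are interchangeable: swapping the vertex sets of two equivalent clusters in an ordering while keeping all positions fixed does not change the stretch.

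Next I would establish a structural normal form. Fix an optimal ordering $\pi$, i.e.\ $\str(\pi)=\bw(G)\le b$, and let $p_1<\dots<p_k$ be the positions of the vertices of $S$; these split $[n]$ into $k+1$ gaps. The first point is a \emph{locality} property: if $u\in C$ is adjacent to $s\in S$, then every $v\in C$ satisfies $|\pi(v)-\pi(u)|\le\omega-1$ and $|\pi(u)-\pi(s)|\le b$, so all of $C$ lies within $b+\omega-1$ positions of $\pi(s)$; since an $S$-vertex lying outside a gap is at least as far as the nearer endpoint of the gap, within each gap every such cluster sits either in the \emph{left bucket} (the first $b+\omega-1$ positions of the gap) or the \emph{right bucket} (the last $b+\omega-1$ positions). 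Clusters with no neighbor in $S$ interact with nothing outside themselves. I would then argue, by a sequence of exchange steps, that $\pi$ can be transformed without increasing its stretch into a \emph{nice} ordering: one in which (i) every cluster occupies a contiguous block of positions; (ii) inside each block the vertices appear in a fixed canonical order of their $S$-neighborhoods; (iii) inside each bucket the blocks are grouped and ordered canonically by type; and (iv) all clusters with no neighbor in $S$ are placed contiguously at the end. A nice ordering is completely described by the $k$ numbers $p_1,\dots,p_k$ together with, for each gap, each of its two buckets, and each type $\tau$, the number $n_{i,\mathrm{side},\tau}\in\N$ of type-$\tau$ clusters placed there — in total at most $k+2(k+1)T+1$ integers.

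It then remains to write the ILP, with variables the $p_\ell$ and the $n_{i,\mathrm{side},\tau}$. The constraints are: each type $\tau$ is used exactly as often as it occurs in $G$, i.e.\ $\sum_{i,\mathrm{side}}n_{i,\mathrm{side},\tau}$ equals the number of type-$\tau$ clusters; $1\le p_1<\dots<p_k\le n$; in each gap the total length of the blocks placed in its buckets equals the gap length, so that the $p_\ell$ and all blocks tile $[n]$; and every edge has stretch at most $b$. An edge between two vertices of $S$ becomes $|p_\ell-p_{\ell'}|\le b$ directly. For an edge between $s=p_\ell$ and a cluster vertex, once the type, the bucket and $s$ are fixed, the position of the relevant vertex across the consecutively placed same-type blocks of that bucket ranges over an interval whose endpoints are affine functions of the $p$'s and $n$'s; since $|x-p_\ell|\le b$ holds on an interval iff it holds at its endpoints, this yields two linear constraints. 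Edges inside a cluster are automatically satisfied, as a contiguous block has at most $\omega\le b+1$ vertices. The number of variables is bounded by a function of $\clusterDel$ and $\omega$, so feasibility can be tested in FPT time; and by the normal-form lemma the ILP is feasible exactly when $\bw(G)\le b$.

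The main obstacle is the normal-form lemma, in particular the exchange arguments that turn an arbitrary optimal ordering into a nice one without increasing its stretch: making each cluster contiguous requires evicting foreign vertices from the span of a cluster while controlling the shifts this induces elsewhere, and rearranging the blocks inside a bucket into the canonical order must be done so that it never tightens an edge to $S$ that is already critical. I expect this is where essentially all of the technical work lies; the ILP formulation and the bound on the number of variables are then routine.
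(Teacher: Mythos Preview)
Your normal-form target is unattainable: property (i), that every cluster can be made to occupy a contiguous block without increasing the stretch, is false in general. Two vertices $u,v$ in the same cluster may have different neighbours in $S$, and those neighbours can be forced far apart in every optimal ordering; $u$ and $v$ are then pulled to different regions with other clusters' vertices sitting between them. The reduction in \cref{sec:parametrizedHardness} already exhibits this. After deleting $S$ there, each delimiter clique has an $L^i$-half whose only $S$-neighbours lie in $X\cup\{\ell^i_1,r^i_B\}$ and an $R^i$-half whose only $S$-neighbours lie in $Y\cup\{\ell^i_1,r^i_B\}$; the structural claims in that section show that in \emph{every} ordering of stretch $b$ the $L^i$-vertices occupy an interval near $\pi(\ell^i_1)$ and the $R^i$-vertices a disjoint interval near $\pi(r^i_B)$, with $\Theta(kB)$ foreign vertices in between. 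Your locality statement ``$|\pi(v)-\pi(u)|\le\omega-1$ for $u,v\in C$'' is likewise unfounded: it already presupposes contiguity, so it cannot be used to derive the bucket structure; from $\{u,v\}\in E(C)$ you only get $|\pi(u)-\pi(v)|\le b$, which may span several gaps.

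This is exactly why the paper does not keep clusters whole. It enlarges $S$ to an extended set $S'$ by adding a bounded number of representative clusters of each cluster-type, enumerates all $|S'|!$ orderings of $S'$, and for each ordering allows every remaining cluster to be split across the $|S'|+1$ buckets according to a \emph{distribution-type} $\tau$: a matrix recording, for each $S$-neighbourhood and each bucket, how many cluster vertices of that neighbourhood land there. The ILP variables are the numbers $x_\tau$ of clusters of each distribution-type. The nice-ordering properties \hyperref[property:1]{$(\Pi_1)$}--\hyperref[property:3]{$(\Pi_3)$} do not make clusters contiguous; they arrange the pieces of a split cluster within its leftmost and rightmost bucket so that the maximum intra-cluster stretch becomes a linear function of the $x_\tau$'s, yielding constraints (\ref{cons:C2}) and (\ref{cons:C3}). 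Your ILP, with clusters treated as indivisible blocks assigned to a single left or right bucket of a single gap, cannot express the split layouts that may be the only layouts of stretch at most $b$.
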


\subparagraph*{Parameterization by $\clusterDel + \omega$.}
Given a graph $G$ and an integer $k$, it is known that one can compute a cluster deletion set of $G$ of size at most $k$ in time $2^{\bO(k)} n^{\bO(1)}$,
or correctly determine that no such set exists~\cite{mst/BoralCKP16,mst/HuffnerKMN10,mst/Tsur21}.
Notice that one can use this algorithm in order to determine, for a graph $G$ and an integer $k$,
whether $\clusterDel(G) + \omega(G) \le k$ in time $2^{\bO(k)} n^{\bO(1)}$:
(i) run the previously mentioned algorithm for $i = 1, \ldots, \min\{ \clusterDel(G), k \}$ to compute a minimum cluster deletion set $S$ of size $|S| = \clusterDel(G)$,
provided such a set exists,
(ii) compute $\omega(G)$ by guessing the subset $S' \subseteq S$ which is the intersection of a maximum clique with $S$,
and then consider for each cluster $C$ of $G-S$ the graph induced by the vertices of $C$ and $S'$;
the considered maximum clique is then composed of $S'$ as well as the vertices of a cluster $C$
that are adjacent to all the vertices of $S'$.

\subparagraph*{Proof overview.}
Our proof is a generalization of the FPT result for vertex cover number from~\cite{isaac/FellowsLMRS08}.
The general idea for obtaining an ILP encoding of {\BW} given a vertex cover $S$ is to augment $S$ by a small number
(dependent only on the vertex cover number) of representative vertices of every neighborhood-type.
It can be easily seen that we can modify any ordering $\pi$ in such a way that
the leftmost and rightmost neighbor of any vertex in $S$ is contained in this augmented set $S'$ without increasing the stretch.
For any ordering $\sigma$ of $S'$ we can decide whether we can extend $\sigma$ to an ordering of $V(G)$ of stretch at most $b$ by
encoding how vertices of certain neighborhood-types are distributed into the gaps between the vertices of $S'$ into an ILP.
By ensuring that we distribute the vertices in such a way that the leftmost and rightmost neighbor of any vertex in $S$ is contained
in $S'$ we can bound the stretch of every edge by using one linear constraint for every edge in $G[S']$. 

In our setting we can use the vertex cover approach to bound the stretch of all edges incident to the deletion set.
To gain control over the stretch of edges within clusters, we show that we can convert any ordering into a \emph{nice} one without increasing the stretch.
Here niceness intuitively means, that we can order the vertices in between any two vertices of $S'$ in such a way that vertices of the same type appear consecutively,
where the type now depends on the isomorphism-type of the cluster union the deletion set. 
This will allow us to bound the stretch of such edges by a linear constraint as well.

\subsection{Types and buckets}

Throughout this subsection we let $G$ be a graph and $S$ a cluster deletion set of $G$ of size $k$.
For any vertex $v \in V(G-S)$, let $N_S(v) = N(v) \cap S$ be its \emph{$S$-neighborhood}. 
In the following we define types of clusters.
Cluster-types capture the isomophism-type of the union of a cluster with the deletion set $S$.
As clusters of the same size are isomorphic, our notion of type only needs to capture how the vertices of the cluster interact with the deletion set.
In particular, we define cluster-types to be vectors such that each entry specifies the number of vertices in the cluster which have a particular $S$-neighborhood;
in that case, two clusters are of the same cluster-type if and only if they agree on the number of their vertices with $S$-neighborhood $N$, for all $N \subseteq S$.
We define this formally as follows.
Let $\clusterTypes \subseteq \mathbb{N}^{2^k}$ be the set of non-negative integer vectors $\kappa$ with $2^k$
entries for which $\lVert \kappa \rVert_1 \leq \omega(G)$.%
\footnote{Here $\lVert \cdot \rVert_1$ denotes the $1$-norm, i.e., the sum of the absolute values of the entries.}
We assume that the entries of the vectors in $\clusterTypes$ are indexed by the subsets of $S$.
We say that a cluster $C$ has \emph{cluster-type} $\kappa \in \clusterTypes$ if $|\setdef{v \in V(C)}{N_S(v)=N}|=(\kappa)_N$ for every $N\subseteq S$ where $(\kappa)_N$ denotes the entry of $\kappa$ corresponding to $N$.
We further let $\#\kappa$ denote the number of clusters of cluster-type $\kappa$ in $G-S$.
We say that a set $\mathcal{C}$ of clusters is \emph{representative} if it consists of
$\min \braces{2 |S|, \#\kappa}$ distinct clusters of type $\kappa$ for every cluster-type $\kappa$.
We further say that a set $S'$ is an \emph{extended deletion set} if $S' = S \cup \bigcup_{C \in \mathcal{C}} V(C)$ for a representative set $\mathcal{C}$ of clusters.

\begin{lemmarep}\label{lem:stretchSClusters}
    For every extended deletion set $S'$ there is an ordering $\pi \colon V(G) \to [n]$ such that 
    $\str(\pi) = \bw(G)$ and for every $s \in S$, the set $S'$ contains vertices $v_{\min}^s$ and $v_{\max}^s$,
    where $\pi(v_{\min}^s) = \minofset{\pi}{N(s)}$ and
    $\pi(v_{\max}^s) = \maxofset{\pi}{N(s)}$,
    i.e., $S'$ contains the leftmost and rightmost neighbor of $s$.    
\end{lemmarep}

\begin{proof}
    Let $S'$ be an extended deletion set, where $S' = S \cup \bigcup_{C \in \mathcal{C}} V(C)$ for some representative cluster set $\mathcal{C}$,
    and $\pi$ an ordering of $V(G)$ such that $\str(\pi) = \bw(G)$.
    Notice that since $|\setdef{v^s_{\min}, v^s_{\max}}{s \in S}| \leq 2 |S|$,
    it follows that there are at most $2 |S|$ vertices among them not belonging to $S$ and belonging to clusters of the same cluster-type.

    Let $s \in S$ and suppose that $v^s \in \braces{v^s_{\min}, v^s_{\max}}$ does not belong to $S'$.
    Let $C \notin \mathcal{C}$ denote the cluster of $G-S'$ that contains $v^s$, and assume that it is of cluster-type $\kappa$.
    In that case, it follows that $\#\kappa > 2 |S|$ and that there exists a cluster $C' \in \mathcal{C}$ of cluster-type $\kappa$,
    such that $V(C') \cap \setdef{v^s_{\min}, v^s_{\max}}{s \in S} = \varnothing$.
    Exchanging the vertices of $C$ with vertices of $C'$ respecting their $S$-neighborhood yields an automorphism $f \colon V(G) \to V(G)$.    
    For $\pi' = \pi \circ f$, it holds that $\str(\pi') = \bw(G)$ due to \cref{rem:automorphism},
    while $v^s \in S'$.
    By repeatedly applying the previous argument, the statement follows.
\end{proof}

We say that an ordering $\pi \colon V(G) \to [n]$ is \emph{$S'$-extremal} if the second property in \cref{lem:stretchSClusters} is satisfied for $\pi$.

\begin{observation}
    Since there are at most $\omega(G) 2^{|S| \cdot \omega(G)}$ different cluster-types,
    it holds that the size of any extended deletion set $S'$ is at most $|S| + 2 |S| \omega(G) \cdot \omega(G) 2^{|S| \cdot \omega(G)}$.
\end{observation}

Let $\mathcal{C}$ be a representative set of clusters,
$S' = S \cup \bigcup_{C \in \mathcal{C}} V(C)$ the extended deletion set containing vertices from $\mathcal{C}$ and $S$, and $k' = |S'|$.
A \emph{bucket distribution} of $S'$ is a partition $\mathcal{B} = (B_0, \ldots, B_{k'})$ of the vertices of $G-S'$.
Fix a bucket distribution $\mathcal{B} = (B_0, \ldots, B_{k'})$ of $S'$.
We call the subsets $B_i$ \emph{buckets} of $\mathcal{B}$.

Having fixed a bucket distribution, we define a more refined notion of types of clusters.
In this refined notion, which we call distribution-type,
we specify how many vertices of a cluster with a particular $S$-neighborhood are contained in a particular bucket.
We define this formally in the following.
Let $\distributionTypes \subseteq \mathbb{N}^{2^k \times (k'+1)}$ be the set of matrices $\tau$ with $\lVert \tau \rVert_1\leq \omega(G)$.
We assume that the rows of matrices are indexed with subsets of $S$ and the columns with $[0,k']$.
We say that a cluster $C \notin \mathcal{C}$ has \emph{distribution-type} $\tau \in \distributionTypes$ in  $\mathcal{B}$
if $|\setdef{v \in V(C) \cap B_i}{N_S(v) = N}| = (\tau)_{N,i}$ for every $N \subseteq S$ and every $i \in [0,k']$.
For every $\kappa \in \clusterTypes$, let $\distributionTypes_\kappa \subseteq \distributionTypes$ denote the set of distribution-types $\tau$ such that
$\sum_{i\in[0,k']} (\tau)_{N,i} = (\kappa)_N$ for every $N \subseteq S$,
i.e., the set of $\tau \in \distributionTypes$ such that any cluster of distribution-type $\tau$ has cluster-type $\kappa$.

\begin{observation}\label{obs:sizeOfExtended}
      The number of distribution-types is at most ${\omega(G)}^{2^{|S|} \cdot (|S'| + 1)}$ for any extended deletion set $S'$.
\end{observation}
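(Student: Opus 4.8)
The plan is an elementary counting argument that simply unpacks the definition of a distribution-type. Writing $k=|S|$ and $k'=|S'|$, I would first recall that a distribution-type is by definition a matrix $\tau\in\mathbb{N}^{2^{k}\times(k'+1)}$, with rows indexed by the subsets of $S$ and columns indexed by $[0,k']$, subject only to the single constraint $\lVert\tau\rVert_1\le\omega(G)$. Such a matrix has exactly $2^{k}\cdot(k'+1)=2^{|S|}\cdot(|S'|+1)$ entries, and since these entries are non-negative integers summing to at most $\omega(G)$, each of them individually lies in $\{0,1,\dots,\omega(G)\}$.

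From here the bound is immediate. The crude version: a distribution-type is determined by choosing, independently for each of the $2^{|S|}(|S'|+1)$ positions, one of at most $\omega(G)+1$ admissible values, so $|\distributionTypes|\le(\omega(G)+1)^{2^{|S|}(|S'|+1)}$. To get the slightly tighter form in the statement I would instead note that a distribution-type is equivalently a multiset of at most $\omega(G)$ elements drawn from the set of $2^{|S|}(|S'|+1)$ pairs (neighborhood-type of a vertex, bucket into which it is placed); a standard stars-and-bars count, with one slack coordinate accounting for the inequality $\lVert\tau\rVert_1\le\omega(G)$, then yields the exact cardinality $\binom{\omega(G)+2^{|S|}(|S'|+1)}{\omega(G)}$, which I would then bound by $\omega(G)^{2^{|S|}(|S'|+1)}$ in the non-degenerate regime. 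Either estimate is equally usable in the FPT algorithm, since $|S'|$ is itself bounded by a function of $|S|$ and $\omega(G)$ by the preceding observation on the size of an extended deletion set.

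I do not expect any genuine obstacle: this is a routine cardinality estimate, and nothing about the structure of $G$, the clusters, or the buckets enters beyond the sizes of the two index sets. The only points needing a little care are the off-by-one between $\omega(G)$ and $\omega(G)+1$ and the trivial degenerate cases (e.g. $\omega(G)\le 1$, where $G$ is edgeless and \BW{} is trivial anyway), neither of which affects the fixed-parameter tractability conclusion.
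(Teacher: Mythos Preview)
Your proposal is correct and is exactly the implicit counting argument behind the observation; the paper states this bound without proof, and your unpacking of the definition of $\distributionTypes$ together with the crude per-entry estimate (or the stars-and-bars refinement) is precisely the intended justification. Your remark about the off-by-one between $\omega(G)$ and $\omega(G)+1$ and the degenerate case $\omega(G)\le 1$ is apt and does not affect the FPT conclusion, which is all the observation is used for.
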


Let $\sigma \colon S' \to [k']$ be an ordering of the vertices of $S'$.
We say that a vertex ordering $\pi \colon V(G) \to [n]$ is \emph{compatible} with $\sigma$ if
for any $s_1, s_2 \in S'$ it holds that $\pi(s_1) < \pi(s_2)$ if and only if $\sigma(s_1) < \sigma(s_2)$.
We say that a vertex ordering $\pi \colon V(G) \to [n]$ is \emph{compatible} with $\sigma$
and $\mathcal{B}$ if $\pi$ is compatible with $\sigma$ and 
$B_0 = \setdef{v \in V(G)}{\pi(v) < \pi(\sigma^{-1}(1))}$, $B_{k'} = \setdef{v \in V(G)}{\pi(v) > \pi(\sigma^{-1}(k'))}$ and
$B_i = \setdef{v \in V(G)}{\pi(\sigma^{-1}(i)) < \pi(v) < \pi(\sigma^{-1}(i+1))}$ for $i \in [k'-1]$.

\subsection{Nice orderings}\label{subsec:niceOrderings}
Throughout this subsection we let $G$ be a graph and $S$ a cluster deletion set of $G$.
Furthermore, for this subsection we let  $\mathcal{C}$ be a representative set  of clusters,
$S' = S \cup \bigcup_{C \in \mathcal{C}} V(C)$ the extended deletion set containing vertices from $\mathcal{C}$ and $S$, and $k' = |S'|$.
Additionally, we fix a bucket distribution $\mathcal{B}=(B_0, \ldots, B_{k'})$ of $S'$ and an ordering $\sigma \colon S' \to [k']$.

To obtain our nice ordering we use a series of exchange arguments that will not increase the stretch.
We call an ordering \emph{nice} if it has Properties~\hyperref[property:1]{$(\Pi_1)$},~\hyperref[property:2]{$(\Pi_2)$}, and~\hyperref[property:3]{$(\Pi_3)$}.
We will first give some intuition regarding the properties, before defining them formally.

Assume that $\pi \colon V(G) \to [n]$ is an optimal ordering minimizing the number of edges of maximum stretch and
assume that $\pi$ is compatible with $\sigma$ and $\mathcal{B}$.
Furthermore, let  $v\in V(C)$ be a vertex which is contained in an edge of maximum stretch with regard to $\pi$ and the cluster $C$ containing $v$ is distributed over more than one bucket.
In this case, $v$ must be either the leftmost or the rightmost vertex of $C$.
Assuming $v \in B_i$ is the leftmost vertex of $C$ (the other case is analogous),
we can observe that every vertex $v' \in B_i$ appearing further to the right than $v$ must have a neighbor contained in a bucket to the right of $B_i$ and no neighbor to the left of $v$.
Otherwise, we can reduce the stretch of the edge containing $v$ without increasing the stretch of any edge incident to $v'$
(and hence reducing the number of edges of maximum stretch without increasing the maximum stretch) by exchanging $v$ and $v'$. 
Therefore, we can assume that each bucket is partitioned into a left, a middle, and a right part
and every vertex having neighbors only to the buckets to the left (respectively, right) of $B_i$ appears in the left (respectively, right) part.
Additionally, this allows us to assume that within each bucket the vertices of one cluster appear consecutively (Property~\hyperref[property:1]{$(\Pi_1)$}).

Now assume that $\{v,w\}$ is an edge of maximum stretch as before ($v$ appears left of $w$ in $\pi$) and
$\{v',w'\}$ is another edge such that $v'$ appears in the same bucket as $v$ and $w'$ in the same bucket as $w$.
If $v'$ appears before $v$ then $w'$ has to appear before $w$ as $\{v,w\}$ is of maximum stretch.
On the other hand, if $v'$ appears after $v$ then $w'$ must appear after $w$ as otherwise exchanging $w$ and $w'$ either reduces
the number of edges of maximum stretch or reduces the maximum stretch itself.
Hence, we can assume that the relative order of the leftmost vertices of a set of clusters is the same as the relative order of the rightmost vertices of the same clusters (Property~\hyperref[property:2]{$(\Pi_2)$}).

Lastly, assume that $C$ and $C'$ are clusters of type $\tau \in \distributionTypes$ which are not contained in just one bucket and appear next to each other (in their leftmost bucket).
Assume $B_\ell$ is the bucket containing the leftmost vertex of $C$ and $C'$ and $B_r$ the bucket containing the rightmost vertex of $C$ and $C'$.
We can essentially exchange $V(C) \cap B_\ell$ with $V(C') \cap B_\ell$ and at the same time $V(C) \cap B_r$ with $V(C') \cap B_r$ if certain properties about the size of these sets hold.
This allows us to order the buckets in such a way, that clusters whose intersection with the leftmost (rightmost, respectively) bucket they intersect is of the same size, appear consecutively (Property~\hyperref[property:3]{$(\Pi_3)$}).

To state Properties~\hyperref[property:1]{$(\Pi_1)$},~\hyperref[property:2]{$(\Pi_2)$}, and~\hyperref[property:3]{$(\Pi_3)$}
formally we use the following notation.
For a distribution-type $\tau \in \distributionTypes$ and $i \in [0,k']$,
we write $\tau_i$ to denote the column of $\tau$ which is indexed by $i$.
We define $\leftBucket(\tau)$ to be the largest index $i \in [0,k']$ such that $\lVert \tau_j \rVert_1 = 0$ for any $j \in [0, i-1]$,
i.e., $B_i$ is the leftmost bucket containing vertices from clusters of type $\tau$.
We define $\rightBucket(\tau)$ analogously to be the minimum index $i \in [0,k']$ such that $\lVert \tau_j \rVert_1 = 0$ for any $j \in [i+1, k']$.
Additionally, we let $\leftSize(\tau)$ be $\lVert \tau_{\leftBucket(\tau)} \rVert_1$ and $\rightSize(\tau)$ be $\lVert \tau_{\rightBucket(\tau)} \rVert_1$.
For every $\ell \leq r \in [0,k']$  and every $n_L,  n_R \in [0,\omega(G)]$,
we define
\[
    \distributionTypes^{(\ell,r,n_L,n_R)} =
    \setdef{\tau\in \distributionTypes}{\leftBucket(\tau)=\ell, \, \rightBucket(\tau)=r, \, \leftSize(\tau)=n_L, \, \rightSize(\tau)=n_R}.
\]

\begin{definition}[Property $(\Pi_1)$]\label{property:1}
    We say that an $S'$-extremal ordering $\pi \colon V(G) \to [n]$ which is compatible
    with $\sigma$ and $\mathcal{B}$ has \emph{Property~\hyperref[property:1]{$(\Pi_1)$}} if for every $i \in [0,k']$
    \begin{enumerate}
        \item the vertices of $V(C) \cap B_i$ appear consecutively in $\pi$ for every cluster $C \notin \mathcal{C}$,

        \item we can partition the interval $\pi(B_i)$ into three (possibly empty) intervals $I_R^{i} < I_M^{i} < I_L^{i}$
        such that for every $\tau \in \distributionTypes$ and every cluster $C$ of distribution-type $\tau$
        \begin{itemize}
            \item $\pi(V(C) \cap B_i) \subseteq I_R^{i}$ if $\leftBucket(\tau) \neq i$ and $\rightBucket(\tau) = i$, 
            
            \item $\pi(V(C) \cap B_i) \subseteq I_L^{i}$ if $\leftBucket(\tau) = i$ and $\rightBucket(\tau) \neq i$, 
            
            \item $\pi(V(C) \cap B_i) \subseteq I_M^{i}$ if either $\leftBucket(\tau) \neq i$ and $\rightBucket(\tau) \neq i$,
                or $\leftBucket(\tau) = \rightBucket(\tau) = i$.
        \end{itemize}
    \end{enumerate}
\end{definition}
Notice that while $I_R^{i}$ contains the leftmost ordered vertices of $B_i$,
we use the index $R$ since those vertices are the rightmost vertices of their corresponding cliques.
Analogously, we use $I_L^{i}$ for the rightmost ordered vertices of $B_i$.

\begin{toappendix}
\begin{lemma}\label{lem:leftMiddleRightPartition_and_consecutiveCliques}
    Given an $S'$-extremal ordering $\pi \colon V(G) \to [n]$ which is compatible with $\sigma$ and $\mathcal{B}$,
    there exists an $S'$-extremal ordering $\pi' \colon V(G) \to [n]$ with $\str(\pi') \leq \str(\pi)$
    which is compatible with $\sigma$ and $\mathcal{B}$ and has Property~\hyperref[property:1]{$(\Pi_1)$}.
\end{lemma}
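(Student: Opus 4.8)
The plan is to extract $\pi'$ as a minimizer of a suitable potential over a set of ``legal'' orderings, and then read off property~\hyperref[property:1]{$(\Pi_1)$} from minimality via exchange arguments. I would let $\mathcal{O}$ be the set of orderings $\rho : V(G) \to [n]$ that are $S'$-extremal, compatible with $\sigma$ and $\mathcal{B}$, and satisfy $\str(\rho) \le \str(\pi)$. This set is nonempty ($\pi \in \mathcal{O}$) and, importantly, it is closed under permuting the vertices inside a single bucket: such a permutation fixes $S'$ pointwise and keeps each vertex of $V(G-S')$ in its own bucket, so compatibility with $\sigma$ and $\mathcal{B}$ is preserved; and since for every $s \in S$ its leftmost and rightmost neighbour already lies in $S'$ while all remaining neighbours keep their buckets, $S'$-extremality is preserved as well. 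Consequently, all the modifications we will apply are within-bucket rearrangements and automatically stay inside $\mathcal{O}$. I would fix $M = |E(G)| + 1$ and take $\pi'$ to minimize over $\mathcal{O}$ the pair $(\Phi(\rho), \Psi(\rho))$ lexicographically, where $\Phi(\rho) = \sum_{e \in E(G)} M^{\stretchEdge{\rho}(e)}$ and $\Psi(\rho)$ is a structural measure of how far $\rho$ is from~\hyperref[property:1]{$(\Pi_1)$} — e.g.\ the number of ``defects'': pairs of vertices in a common bucket $B_i$ that violate the consecutiveness requirement, or the prescribed placement $I_R^i < I_M^i < I_L^i$. The crucial property of $\Phi$ is that any rearrangement increasing no edge's stretch and strictly decreasing some edge's stretch strictly decreases $\Phi$; in particular $\Phi$ refines the coarser quantity ``number of edges of maximum stretch'' used in the informal discussion. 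Since $\pi \in \mathcal{O}$, we get $\str(\pi') \le \str(\pi)$ for free, and it remains to prove $\Psi(\pi') = 0$, i.e.\ that $\pi'$ has~\hyperref[property:1]{$(\Pi_1)$}.

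I would argue this by contradiction, following the intuition preceding the statement. Suppose $\pi'$ has a defect in some bucket $B_i$. If there is a defect witnessed by a vertex lying on an edge of maximum stretch whose cluster $C$ meets several buckets, let $\{v,w\}$ be such a maximum-stretch edge with $v$ the left-extreme vertex of $C$ (the right-extreme case is symmetric) and $\pi'(v) < \pi'(w)$; because $C$ is a clique reaching a bucket to the right of $B_i$, the endpoint $w$ in fact lies in such a later bucket. A case analysis — on whether the defect is a consecutiveness defect or a placement defect, and on whether the relevant conflicting vertex $v' \in B_i$ (with $\pi'(v) < \pi'(v')$) has a neighbour at a position $< \pi'(v)$ or a neighbour in a bucket to the right of $B_i$ — produces a within-bucket rearrangement of $B_i$, moving $v$ rightward and re-sorting the affected sub-block, that strictly shortens the edge $\{v,w\}$ while, by the choice of $v'$ and the failure of~\hyperref[property:1]{$(\Pi_1)$}, lengthening no edge beyond $\str(\pi')$. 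This strictly decreases $\Phi$, contradicting minimality. If instead $\Phi(\pi')$ is already minimal but some defect remains ($\Psi(\pi') > 0$), then the two vertices realizing that defect have their stretch-relevant neighbours on disjoint sides, so swapping them within $B_i$ changes no edge's stretch yet strictly decreases $\Psi$, again contradicting minimality. Hence $\Psi(\pi') = 0$, and $\pi'$ works.

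I expect the case analysis in the second paragraph to be the main obstacle. One has to treat clusters contained in a single bucket separately from clusters spread over several buckets, enumerate the subcases for where the binding maximum-stretch edge lands (into $S$ versus inside the cluster) and which extreme of its cluster the moving vertex is, and in each subcase both exhibit the correct within-bucket rearrangement and verify that no edge's stretch increases — the clique structure of the clusters and $S'$-extremality are exactly what make the latter verification possible. A further subtlety is designing $\Psi$ (equivalently, fixing the order in which the consecutiveness and the left/middle/right-placement requirements are enforced) so that repairing one kind of defect does not create another; bundling everything into the single lexicographic potential $(\Phi,\Psi)$ is the cleanest way to avoid such circularity.
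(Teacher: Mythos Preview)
Your approach is genuinely different from the paper's, which gives a one-shot explicit construction rather than a minimality argument. There, inside each bucket $B_i$ one simply sorts: first the vertices whose cluster has rightmost bucket $i$ (ordered so that each cluster's original rightmost-in-$B_i$ vertex keeps its relative rank and the rest of its cluster is slid just to the left of it), then the ``middle'' vertices cluster by cluster in arbitrary order, then the vertices whose cluster has leftmost bucket $i$ (ordered dually, anchoring on leftmost-in-$B_i$ vertices). Verification is a direct inequality: for any cluster spread over several buckets, its leftmost vertex can only move right and its rightmost only move left, so the unique maximum-stretch edge of that cluster shrinks; edges to $S$ are controlled by $S'$-extremality; clusters lying in a single bucket become consecutive and hence minimally stretched. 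No potential and no case split are needed.

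Your proposal has two concrete gaps. First, $\mathcal{O}$ is \emph{not} closed under arbitrary within-bucket permutations: such a permutation preserves compatibility and $S'$-extremality, as you say, but it may stretch a cluster-internal edge far beyond $\str(\pi)$, so the permuted ordering need not lie in $\mathcal{O}$. This is patchable (you only ever use specific permutations and could check the stretch bound separately), but as written the claim is false. Second, and more seriously, your exchange does not decrease $\Phi$. For $\Phi$ to drop you need that \emph{no} edge's stretch increases; yet moving $v$ rightward increases $|\pi'(s)-\pi'(v)|$ for every $s\in S$ lying to the left of $v$, and $S'$-extremality only guarantees this stays below $\str(\pi')$, not that it does not grow. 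So all you actually get is ``max stretch unchanged and one max-stretch edge shortened'', which is the cruder potential from the paper's informal discussion, not your $\Phi$. With that cruder potential, the $I_M^i$-versus-$I_L^i$ and $I_R^i$-versus-$I_M^i$ parts of $(\Pi_1)$ are not obviously forced: a middle vertex whose cluster reaches both sides has edges pulling in both directions, and a swap can trade one max-stretch edge for another without any net progress. The paper's direct construction sidesteps all of this, and I would recommend replacing the potential argument with it.
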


\begin{proof}
    Let $\pi \colon V(G) \to [n]$ be an $S'$-extremal ordering of $V(G)$ which is compatible with $\sigma$ and $\mathcal{B}$.
    For every $i \in [0,k']$, let $R_i \subseteq B_i$ contain the vertices that belong to the clusters of distribution-type $\tau$,
    for which it holds that $\leftBucket(\tau) \neq i$ and $\rightBucket(\tau) = i$.
    Additionally, let $L_i \subseteq B_i$ contain the vertices that belong to the clusters of distribution-type $\tau$,
    for which it holds that $\leftBucket(\tau) = i$ and $\rightBucket(\tau) \neq i$.
    Lastly, let $M_i = B_i \setminus (L_i \cup R_i)$,
    i.e., $M_i$ contains the vertices of $B_i$ that belong to the clusters of distribution-type $\tau$,
    for which it holds that either $\leftBucket(\tau) \neq i$ and $\rightBucket(\tau) \neq i$, or $\leftBucket(\tau) = \rightBucket(\tau) = i$.

    Fix an $i \in [0,k']$, and for every cluster $C$ with $V(C) \cap B_i \neq \varnothing$,
    let $v^{C,i}_{\min}$ and $v^{C,i}_{\max}$ denote its leftmost and rightmost vertex in $B_i$ on ordering $\pi$,
    i.e., $\pi(v^{C,i}_{\min}) = \minofset{\pi}{V(C) \cap B_i}$ and $\pi(v^{C,i}_{\max}) = \maxofset{\pi}{V(C) \cap B_i}$.

    We first define an ordering of the vertices of $R_i$.
    Let $C_1, \ldots, C_\ell$ denote the cliques of $G-S'$ that have vertices in $R_i$,
    where $\maxofset{\pi}{V(C_j) \cap R_i} < \maxofset{\pi}{V(C_{j+1}) \cap R_i}$,
    for all $j \in [\ell - 1]$.
    Then, define $\pi_{R_i} \colon R_i \to [|R_i|]$ as follows:
    \begin{itemize}
        \item $\pi_{R_i} (V(C_1) \cap R_i) = [|C_1|]$,
        \item for all $j \in [2,\ell]$, let $\pi_{R_i} (V(C_j) \cap R_i) = [\sum_{k=1}^{j-1} |V(C_k) \cap R_i| + 1, \sum_{k=1}^j |V(C_k) \cap R_i|]$,
    \end{itemize}
    where the ordering of the vertices of clique $C_j$ in $\pi_{R_i}$ is the same as in $\pi$,
    thus $v^{C_j,i}_{\max}$ denotes the rightmost vertex of $C_j$ in $\pi_{R_i}$.

    Next, we define an ordering of the vertices of $L_i$.
    Let $C_1, \ldots, C_r$ denote the cliques of $G-S'$ that have vertices in $L_i$,
    where $\minofset{\pi}{V(C_j) \cap L_i} < \minofset{\pi}{V(C_{j+1}) \cap L_i}$,
    for all $j \in [\ell - 1]$.
    Then, define $\pi_{L_i} \colon L_i \to [|L_i|]$ as follows:
    \begin{itemize}
        \item $\pi_{L_i} (V(C_1) \cap L_i) = [|C_1|]$,
        \item for all $j \in [2,r]$, let $\pi_{L_i} (V(C_j) \cap L_i) = [\sum_{k=1}^{j-1} |V(C_k) \cap L_i| + 1, \sum_{k=1}^j |V(C_k) \cap L_i|]$,
    \end{itemize}
    where the ordering of the vertices of clique $C_j$ in $\pi_{L_i}$ is the same as in $\pi$,
    thus $v^{C_j,i}_{\min}$ denotes the leftmost vertex of $C_j$ in $\pi_{L_i}$.

    Intuitively, ordering $\pi_{R_i}$ (respectively, $\pi_{L_i}$) is obtained by ``sliding'' all the
    non-rightmost (respectively, non-leftmost) vertices of cliques next to the rightmost (respectively, leftmost) vertex of the corresponding clique.

    Lastly, define ordering $\pi_{M_i} \colon M_i \to [|M_i|]$ such that for every clique having vertices in $M_i$,
    its vertices appear consecutively in $\pi_{M_i}$, for some arbitrary ordering of the cliques.
    
    Consider the ordering $\pi' \colon V(G) \to [|V(G)|]$, such that $\pi'(s) = \pi(s)$, for all $s \in S'$,
    while for every $i \in [0,k']$ and every $v \in B_i$,
    it holds that
    \begin{itemize}
        \item $\pi'(v) = i + \sum_{j=0}^{i-1} |B_j| + \pi_{R_i}(v)$, if $v \in R_i$,
        \item $\pi'(v) = i + \sum_{j=0}^{i-1} |B_j| + |R_i| + \pi_{M_i}(v)$, if $v \in M_i$,
        \item $\pi'(v) = i + \sum_{j=0}^{i-1} |B_j| + |R_i| + |M_i| + \pi_{L_i}(v)$, if $v \in L_i$.
    \end{itemize}
    Let $C_1$ and $C_2$ be cliques such that $V(C_1) \cap R_i \neq \varnothing$ while $V(C_2) \cap L_i \neq \varnothing$.
    Notice that $\maxofset{\pi'}{V(C_1) \cap B_i} = \pi' (v^{C_1,i}_{\max})$,
    while $\pi' (v^{C_1,i}_{\max}) \leq \pi(v^{C_1,i}_{\max})$,
    since for any vertex $v \in B_i$ such that $\pi'(v) < \pi' (v^{C_1,i}_{\max})$,
    it holds that $\pi (v) < \pi (v^{C_1,i}_{\max})$.
    In an analogous way,
    it follows that $\minofset{\pi'}{V(C_2) \cap B_i} = \pi' (v^{C_2,i}_{\min})$,
    while $\pi (v^{C_2,i}_{\min}) \leq \pi' (v^{C_2,i}_{\min})$,
    since for any vertex $v \in B_i$ such that $\pi' (v^{C_2,i}_{\min}) < \pi'(v)$,
    it holds that $\pi (v^{C_2,i}_{\min}) < \pi (v)$.

    It holds that $\pi'$ has Property~\hyperref[property:1]{$(\Pi_1)$}, it is compatible with $\sigma$ and $\mathcal{B}$,
    and since $\pi$ is $S'$-extremal, that is also the case for $\pi'$.
    It remains to argue that $\str(\pi') \leq \str(\pi)$.
    First, notice that since $\pi'$ is $S'$-extremal, it suffices to argue about the edges belonging to cliques of $G-S'$.
    Let $C$ be a cluster of distribution-type $\tau$, and let $e = \braces{v^{C,i_1}_{\min},v^{C,i_2}_{\max}} \in E(C)$ denote the edge of $E(C)$ of maximum stretch on ordering $\pi$,
    for some $i_1, i_2 \in [0,k']$.
    If all the vertices of $C$ are contained in a single bucket,
    then it holds that $i_1 = i_2 = i$, i.e., $V(C) \subseteq B_i$ for some $i \in [0,k']$.
    Since the vertices of $C$ appear consecutively in $\pi'$, it follows that $\stretchEdge{\pi'} (e) = |C| - 1$, which is minimum.
    Alternatively, notice that $e$ is the edge of $E(C)$ of maximum stretch also in $\pi'$,
    while $v^{C,i_1}_{\min} \in L_{i_1}$ and $v^{C,i_2}_{\max} \in R_{i_2}$.
    Then, $\stretchEdge{\pi'}(e) = \pi' (v^{C,i_2}_{\max}) - \pi' (v^{C,i_1}_{\min}) \leq \pi(v^{C,i_2}_{\max}) - \pi(v^{C,i_1}_{\min}) = \stretchEdge{\pi}(e)$ follows.
\end{proof}
\end{toappendix}

\begin{definition}[Property $(\Pi_2)$]\label{property:2}
    We say that an $S'$-extremal ordering $\pi \colon V(G) \to [n]$
    which is compatible with $\sigma$ and $\mathcal{B}$ has \emph{Property~\hyperref[property:2]{$(\Pi_2)$}} if
    for any two distribution-types $\tau, \tau' \in \distributionTypes$ and any two clusters $C$ and $C'$ of distribution-type $\tau$ and $\tau'$ respectively,
    the following holds.
    \begin{itemize} 
        \item If either $\leftBucket(\tau) = \leftBucket(\tau')$ or $\rightBucket(\tau) = \rightBucket(\tau')$,
        then for any $v \in V(C) \cap B_{\leftBucket(\tau)}$, $v' \in V(C') \cap B_{\leftBucket(\tau')}$,
        $w \in V(C) \cap B_{\rightBucket(\tau)}$, $w' \in V(C') \cap B_{\rightBucket(\tau')}$ 
        we have that $\pi(v) < \pi(v')$ if and only if $\pi(w) < \pi(w')$.
    \end{itemize}
\end{definition}

\begin{toappendix}
\begin{lemma}\label{lem:relativeOrder}
    Given an $S'$-extremal ordering $\pi \colon V(G) \to [n]$ which is compatible with $\sigma$ and $\mathcal{B}$,
    there exists an $S'$-extremal ordering $\pi' \colon V(G) \to [n]$ with $\str(\pi') \leq \str(\pi)$
    which is compatible with $\sigma$ and $\mathcal{B}$ and has Properties~\hyperref[property:1]{$(\Pi_1)$} and~\hyperref[property:2]{$(\Pi_2)$}.
\end{lemma}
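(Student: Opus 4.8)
The plan is to reduce to \cref{lem:leftMiddleRightPartition_and_consecutiveCliques} and then repair property~\hyperref[property:2]{$(\Pi_2)$} while preserving~\hyperref[property:1]{$(\Pi_1)$}. Applying \cref{lem:leftMiddleRightPartition_and_consecutiveCliques} to the given $\pi$ yields an $S'$-extremal ordering compatible with $\sigma$ and $\mathcal{B}$ that satisfies~\hyperref[property:1]{$(\Pi_1)$} and has stretch at most $\str(\pi)$, so the set of such orderings is nonempty. Among them I would pick one, $\pi'$, minimizing the number of \emph{$(\Pi_2)$-inversions}: pairs of clusters $C, C'$ of distribution-types $\tau, \tau'$ with $\leftBucket(\tau) = \leftBucket(\tau')$ or $\rightBucket(\tau) = \rightBucket(\tau')$ for which the left-blocks $V(C)\cap B_{\leftBucket(\tau)}$, $V(C')\cap B_{\leftBucket(\tau')}$ and the right-blocks $V(C)\cap B_{\rightBucket(\tau)}$, $V(C')\cap B_{\rightBucket(\tau')}$ appear in opposite orders under $\pi'$. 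I claim $\pi'$ satisfies~\hyperref[property:2]{$(\Pi_2)$}, which proves the lemma.

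Suppose not, and fix a witnessing pair $C, C'$. If one of them lies in a single bucket $B_i$, then by~\hyperref[property:1]{$(\Pi_1)$} its (unique) block lies in the $I_M^i$-part, which is sandwiched between the $I_R^i$-part and the $I_L^i$-part; a short case analysis on where the other cluster's block in $B_i$ lies (using exactly the order $I_R^i < I_M^i < I_L^i$) shows the two block-orders in fact agree, a contradiction. Hence both $C$ and $C'$ span at least two buckets and share a left or a right bucket; assume $\rightBucket(\tau) = \rightBucket(\tau') = r$, the other case being the mirror image obtained by reversing the ordering. Then both $B_r$-blocks lie in $I_R^r$, and one checks that~\hyperref[property:2]{$(\Pi_2)$} restricted to pairs of clusters whose blocks lie in $I_R^r$ is equivalent to the leftmost-vertex positions $\minofset{\pi'}{V(\cdot)}$ being increasing along the left-to-right order of blocks in $I_R^r$. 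Since it is not, there are clusters $C_1, C_2$ whose $B_r$-blocks are consecutive in $I_R^r$ with $C_1$ to the left of $C_2$ but $\minofset{\pi'}{V(C_1)} > \minofset{\pi'}{V(C_2)}$ (note also $\maxofset{\pi'}{V(C_1)} < \maxofset{\pi'}{V(C_2)}$).

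Let $\pi''$ be obtained from $\pi'$ by swapping the two consecutive blocks $V(C_1)\cap B_r$ and $V(C_2)\cap B_r$ in place, fixing all other vertices. Since every moved vertex stays in $B_r$, $\pi''$ is still $S'$-extremal, still compatible with $\sigma,\mathcal{B}$, and still satisfies~\hyperref[property:1]{$(\Pi_1)$}. The heart of the argument is $\str(\pi'') \le \str(\pi')$: only edges incident to $V(C_1)\cap B_r \cup V(C_2)\cap B_r$ change stretch, and for each such edge, (i) if it is intra-cluster, then after the swap $C_1$'s longest edge runs from position $\minofset{\pi'}{V(C_1)}$ to position $\maxofset{\pi'}{V(C_2)}$, which together with $\minofset{\pi'}{V(C_1)} > \minofset{\pi'}{V(C_2)}$ makes it strictly shorter than $C_2$'s old longest edge, hence at most $\str(\pi')$, while $C_2$'s longest edge only shortens (and every other intra-cluster edge is bounded by its cluster's new longest edge); (ii) if it joins a moved vertex $x$ to some $s\in S$, then since $x$ stays in $B_r$ and, by $S'$-extremality together with the fact that the positions of $S'$ are fixed by $\sigma$ and $\mathcal{B}$, the leftmost and rightmost $S$-neighbours of any vertex adjacent to a $B_r$-vertex are separators lying to the left of, resp.\ right of, all of $B_r$, the new position of $x$ is bracketed by $v^s_{\min}$ and $v^s_{\max}$, so its stretch stays at most $\str(\pi')$. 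Finally, swapping two consecutive blocks in $I_R^r$ flips the relative order of exactly the pair $\{C_1,C_2\}$ and leaves every left-block untouched, so it destroys that one $(\Pi_2)$-inversion and creates no new one; hence $\pi''$ has strictly fewer $(\Pi_2)$-inversions while still lying in the candidate set, contradicting the choice of $\pi'$.

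The step I expect to be the main obstacle is the stretch bound (i)--(ii): the intra-cluster estimate relies on the quantitative slack an inversion supplies (the cluster forced to shift outward had the strictly shorter longest edge to begin with), and the deletion-set estimate relies on combining $S'$-extremality with the fixed bucket boundaries; carrying both out cleanly, together with the bookkeeping that a single consecutive-block swap removes precisely one $(\Pi_2)$-inversion, is where the real work lies, whereas the single-bucket reduction and the mirrored shared-left-bucket case are routine.
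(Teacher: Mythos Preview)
Your proposal is correct and follows essentially the same approach as the paper: both start from~\cref{lem:leftMiddleRightPartition_and_consecutiveCliques} to secure~\hyperref[property:1]{$(\Pi_1)$} and then repeatedly swap two consecutive cluster-blocks in an extremal bucket ($I_L^\ell$ in the paper, $I_R^r$ in your write-up, which is the mirror case the paper dispatches ``analogously'') to remove a single $(\Pi_2)$-violation without increasing the stretch. The only cosmetic difference is that you package the iteration as an extremal argument (pick $\pi'$ minimizing the number of inversions), whereas the paper phrases it as exhaustive application; your explicit treatment of the $S$-edge stretch via $S'$-extremality in fact spells out a step the paper leaves implicit.
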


\begin{proof}
    Let $\pi \colon V(G) \to [n]$ be an $S'$-extremal ordering of $V(G)$ which is compatible with $\sigma$ and $\mathcal{B}$.
    By \cref{lem:leftMiddleRightPartition_and_consecutiveCliques} we can assume that $\pi$ has Property~\hyperref[property:1]{$(\Pi_1)$}.
    For every clique $C$, let $v^{C}_{\min}$ and $v^{C}_{\max}$ denote its leftmost and rightmost vertex in ordering $\pi$,
    i.e., $\pi(v^{C}_{\min}) = \minofset{\pi}{V(C)}$ and $\pi(v^{C}_{\max}) = \maxofset{\pi}{V(C)}$.

    Fix an $\ell \in [0,k']$ and let $C$ and $C'$ be two clusters of distribution-type $\tau$ and $\tau'$ respectively, such that
    \begin{itemize}
        \item $\leftBucket(\tau) = \leftBucket(\tau') = \ell$,
        \item $\pi (V(C') \cap B_\ell) = [\maxofset{\pi}{V(C) \cap B_\ell} + 1, \maxofset{\pi}{V(C) \cap B_\ell} + |V(C') \cap B_\ell|]$,
        i.e., the vertices of $V(C') \cap B_\ell$ appear right after the vertices of $V(C) \cap B_\ell$ in the ordering $\pi$,
        \item $\pi (V(C') \cap B_{\rightBucket(\tau')}) < \pi (V(C) \cap B_{\rightBucket(\tau)})$.
    \end{itemize}
    Notice that it holds that $\rightBucket(\tau) \neq \ell$ and $\rightBucket(\tau') \neq \ell$,
    since otherwise $\pi$ does not have Property~\hyperref[property:1]{$(\Pi_1)$}.
    Let $\pi_{C\leftrightarrow C'} \colon V(G) \to [|V(G)|]$ be the ordering
    obtained by ``sliding'' the vertices of $V(C') \cap B_\ell$ to the left of the vertices of $V(C) \cap B_\ell$,
    i.e.,
    \begin{itemize}
        \item $\pi_{C\leftrightarrow C'}(v) = \pi(v)$, for all $v \notin (V(C) \cup V(C')) \cap B_\ell$,
        \item $\pi_{C\leftrightarrow C'}(v) = \pi(v) -\leftSize(\tau)$, for all $v \in V(C') \cap B_\ell$,
        \item $\pi_{C\leftrightarrow C'}(v) = \pi(v) +\leftSize(\tau')$, for all $v \in V(C) \cap B_\ell$,
    \end{itemize}
    where the ordering of the vertices in every clique $C$ remains as in $\pi$,
    thus for $v^C_{\min}$ and $v^C_{\max}$ it holds that $\pi_{C\leftrightarrow C'}(v^{C}_{\min}) = \minofset{\pi_{C\leftrightarrow C'}}{V(C)}$
    and $\pi_{C\leftrightarrow C'}(v^{C}_{\max}) = \maxofset{\pi_{C\leftrightarrow C'}}{V(C)}$.
    Notice that due to Property~\hyperref[property:1]{$(\Pi_1)$}, for all vertices $v$ for which $\pi_{C\leftrightarrow C'} (v) \neq \pi(v)$, it holds that $\pi(v) \in I^{\ell}_L$.
    It holds that $\pi_{C\leftrightarrow C'}$ has Property~\hyperref[property:1]{$(\Pi_1)$}, it is compatible with $\sigma$ and $\mathcal{B}$,
    and since $\pi$ is $S'$-extremal, that is also the case for $\pi_{C\leftrightarrow C'}$.
    We argue that $\str(\pi_{C\leftrightarrow C'}) \leq \str(\pi)$.
    Notice that it suffices to argue about the stretches of edges in $E(C) \cup E(C')$.
    Let $e = \braces{v^C_{\min}, v^C_{\max}}$ and $e' = \braces{v^{C'}_{\min}, v^{C'}_{\max}}$ denote the edge of maximum stretch (in both orderings $\pi$ and $\pi_{C\leftrightarrow C'}$)
    of $E(C)$ and $E(C')$ respectively.
    Since $\pi (v^C_{\min}) < \pi_{C\leftrightarrow C'} (v^C_{\min})$,
    while $\pi (v^C_{\max}) = \pi_{C\leftrightarrow C'} (v^C_{\max})$, it follows that $\stretchEdge{\pi_{C\leftrightarrow C'}} (e) < \stretchEdge{\pi} (e)$.
    As for $e'$, it holds that $\pi_{C\leftrightarrow C'} (v^{C'}_{\min}) = \pi (v^{C}_{\min})$, while $\pi_{C\leftrightarrow C'} (v^{C'}_{\max}) < \pi (v^{C}_{\max})$,
    therefore $\stretchEdge{\pi_{C\leftrightarrow C'}} (e') < \stretchEdge{\pi} (e)$ follows.

    In an analogous way, one can prove that for any pair of clusters $C$ and $C'$ of distribution-type $\tau$ and $\tau'$ respectively,
    where
    \begin{itemize}
        \item $\rightBucket(\tau) = \rightBucket(\tau') = r$ and
        \item the vertices of $V(C') \cap B_r$ appear right after the vertices of $V(C) \cap B_r$,
    \end{itemize}
    there exists an $S'$-extremal ordering that has Property~\hyperref[property:1]{$(\Pi_1)$}, it is compatible with $\sigma$ and $\mathcal{B}$,
    is of no bigger stretch, reorders only vertices $v$ for which $\pi (v) \in I^r_R$,
    and the vertices of $V(C) \cap B_{\leftBucket(\tau)}$ appear before those of $V(C') \cap B_{\leftBucket(\tau')}$.
    
    By exhaustively applying both arguments, and due to the transitivity of $<$, the statement follows.
\end{proof}
\end{toappendix}

\noindent
Lastly, we want the buckets to be ordered by distribution-types which will enable us to express the stretch within clusters by linear constraints.
To achieve this, we define two orderings of distribution-types,
dictating in which order (in a nice, optimal vertex ordering) cliques of a certain type will appear within a bucket.
First, let $\distributionTypes_R^i = \bigcup_{\genfrac{}{}{0pt}{}{\ell \in [0,i-1],}{n_L, n_R \in [\omega(G)]}} \distributionTypes^{(\ell,i,n_L,n_R)}$ and 
define the ordering
$\rho_i \colon \distributionTypes_R^i \to [|\distributionTypes_R^i|]$ in the following way.
For any $\tau \in \distributionTypes^{(\ell,i,n_L,n_R)}$, $\tau' \in \distributionTypes^{(\ell',i,n_L',n_R')}$, we have that $\rho_i(\tau) < \rho_i(\tau')$ if either
\begin{itemize}
    \item  $\ell < \ell'$ or 
    \item  $\ell = \ell'$, $n_L \geq n_R$ and $n_L' < n_R'$ or
    \item  $\ell = \ell'$, $n_L \geq n_R$, $n_L' \geq n_R'$ and $n_R < n_R'$ or
    \item  $\ell = \ell'$, $n_L < n_R$, $n_L' < n_R'$ and $n_L > n_L'$ or
    \item  $\ell = \ell'$, $n_L \geq n_R$, $n_L' \geq n_R'$, $n_R = n_R'$ and $\tau\leq_{\operatorname{lex}}\tau'$ or 
    \item  $\ell = \ell'$, $n_L < n_R$, $n_L' < n_R'$, $n_L = n_L'$ and $\tau\leq_{\operatorname{lex}}\tau'$. 
\end{itemize}

Here $\leq_{\operatorname{lex}}$ refers to the lexicographic order on matrices in $\distributionTypes$ where we read the entries by lines top to bottom. However, we can replace this by any total ordering ($\leq_{\operatorname{lex}}$ is an arbitrary choice).

Moreover, let $\distributionTypes_L^i=\bigcup_{\genfrac{}{}{0pt}{}{r \in [i+1,k'],}{n_L, n_R \in [\omega(G)]}} \distributionTypes^{(i,r,n_L,n_R)}$ 
and define the ordering $\lambda_i \colon \distributionTypes_L^i \to [|\distributionTypes_L^i|]$
by letting $\lambda_i(\tau) < \lambda_i(\tau')$ for any $\tau \in \distributionTypes^{(i,r,n_L,n_R)}$, $\tau' \in \distributionTypes^{(i,r',n_L',n_R')}$ if either
\begin{itemize}
    \item  $r < r'$ or 
    \item  $r = r'$ and $\rho_i(\tau) < \rho_i(\tau')$.
\end{itemize}

\begin{remark}\label{lem:computingTypeOrders}
    Note that we can compute all $\rho_i$ and $\lambda_i$ in time quadratic in the size of $\distributionTypes$. 
\end{remark}

\begin{definition}[Property $(\Pi_3)$]\label{property:3}
    We say that an $S'$-extremal ordering $\pi \colon V(G) \to [n]$ which is compatible with $\sigma$ and $\mathcal{B}$
    has \emph{Property~\hyperref[property:3]{$(\Pi_3)$}} if for every $i \in [0,k']$ we can partition the interval $\pi(B_i)$ into (possibly empty) intervals
    \[
       J_{R}^{(i,1)} < \dots < J_{R}^{(i,|\distributionTypes_R^i|)} < J_M^i < J_{L}^{(i,1)} < \dots < J_L^{(i,|\distributionTypes_L^i|)}
     \]
    such that for every distribution-type $\tau \in \distributionTypes$ and every cluster $C$ of type $\tau$ and every $j \in [|\distributionTypes_R^i|]$, $j' \in [|\distributionTypes_L^i|]$,
    \begin{itemize}
       \item $\pi(V(C) \cap B_i) \subseteq J_R^{(i,j)}$ if $\rho_i(\tau)=j$ and 
       \item $\pi(V(C) \cap B_i) \subseteq J_L^{(i,j')}$ if $\lambda_i(\tau)=j'$. 
    \end{itemize}
\end{definition}

\begin{lemmarep}\label{lem:finalOrder}
    Given an $S'$-extremal ordering $\pi \colon V(G) \to [n]$ which is compatible with $\sigma$ and $\mathcal{B}$,
    there exists an $S'$-extremal ordering $\pi' \colon V(G) \to [n]$ with $\str(\pi') \leq \str(\pi)$ which
    is compatible with $\sigma$ and $\mathcal{B}$ and
    has Properties~\hyperref[property:1]{$(\Pi_1)$},~\hyperref[property:2]{$(\Pi_2)$}, and~\hyperref[property:3]{$(\Pi_3)$}.
\end{lemmarep}

\begin{proof}
    Let $\pi \colon V(G) \to [n]$ be an $S'$-extremal ordering which is compatible with $\sigma$ and $\mathcal{B}$.
    By \cref{lem:relativeOrder} we can assume that $\pi$ has Properties~\hyperref[property:1]{$(\Pi_1)$} and~\hyperref[property:2]{$(\Pi_2)$}.
    For every $i \in [0,k']$ let $I_R^i < I_M^i < I_L^i$ be the partition of $\pi(B_i)$ as in Property~\hyperref[property:1]{$(\Pi_1)$}. 
    Since $\pi$ has Property~\hyperref[property:2]{$(\Pi_2)$} it holds that for any fixed $i \in [0,k']$ we can partition the interval $I_R^i$ into intervals $I_R^{(i,0)} < \dots < I_R^{(i,i-1)}$
    such that $\pi(V(C) \cap B_i) \subseteq I_R^{(i,j)}$ if $C$ is of distribution-type $\tau$, $\rightBucket(\tau)=i$, and $\leftBucket(\tau)=j$. 
    Equivalently, we can partition $I_L^i$ into intervals $I_L^{(i,i+1)} < \dots < I_L^{(i,k')}$ such that
    $\pi(V(C) \cap B_i) \subseteq I_L^{(i,j)}$ if $C$ is of distribution-type $\tau$, $\leftBucket(\tau)=i$, and $\rightBucket(\tau)=j$.
    
    Fix any $\ell < r \in [0,k']$ and let $\tau,\tau'$ be two distribution-types with $\leftBucket(\tau)=\leftBucket(\tau')=\ell$ and $\rightBucket(\tau)=\rightBucket(\tau')=r$.
    We further let $C$ be a cluster of distribution-type $\tau$ and $C'$ a cluster of distribution-type $\tau'$.
    We say that $C$ \emph{directly precedes} $C'$ if $\pi(V(C) \cap B_\ell) < \pi(V(C') \cap B_\ell)$ and $\pi(V(C) \cap B_\ell) \cup \pi(V(C') \cap B_\ell)$ is an interval.
    Note that if $\pi(V(C) \cap B_\ell) \cup \pi(V(C') \cap B_\ell)$ is an interval 
    then so is $\pi(V(C) \cap B_r) \cup \pi(V(C') \cap B_r)$, due to Property~\hyperref[property:2]{$(\Pi_2)$}.
    Moreover, note that $\pi(V(C) \cap B_i)$ is an interval for every clique $C$ and $i\in [0,k']$ as $\pi$ has Property~\hyperref[property:1]{$(\Pi_1)$}.
    If $C$ directly precedes $C'$ we define a new ordering  $\pi_{C\leftrightarrow C'} \colon V(G) \to [n]$ which corresponds to exchanging $C$ and $C'$ in $B_\ell$ and $B_r$ as follows.
    \begin{itemize}
        \item $\pi_{C\leftrightarrow C'}(v)=\pi(v)-\leftSize(\tau)$ for $v\in V(C') \cap B_\ell$,
        \item $\pi_{C\leftrightarrow C'}(v)=\pi(v)+\leftSize(\tau')$ for $v\in V(C) \cap B_\ell$,
        \item $\pi_{C\leftrightarrow C'}(v)=\pi(v)-\rightSize(\tau)$ for $v\in V(C') \cap B_r$,
        \item $\pi_{C\leftrightarrow C'}(v)=\pi(v)+\rightSize(\tau')$ for $v\in V(C) \cap B_r$, and
        \item $\pi_{C\leftrightarrow C'}(v)=\pi(v)$ for any vertex $v\notin (V(C)\cup V(C')) \cap (B_\ell\cup B_r)$.
    \end{itemize}
    \cref{claim:exchange} identifies when exchanging $C$ and $C'$ cannot increase the stretch.
    
    \begin{claim}\label{claim:exchange}
        If one of the following holds, then $\str(\pi_{C\leftrightarrow C'}) \leq \str(\pi)$.
        \begin{enumerate}
            \item $\leftSize(\tau) \leq \rightSize(\tau)$ and $\leftSize(\tau') \geq \rightSize(\tau')$.
            \item $\leftSize(\tau) \geq \rightSize(\tau)$, $\leftSize(\tau') \geq \rightSize(\tau')$, and $\rightSize(\tau) \geq \rightSize(\tau')$.
            \item $\leftSize(\tau) \leq \rightSize(\tau)$, $\leftSize(\tau') \leq \rightSize(\tau')$, and $\leftSize(\tau) \leq \leftSize(\tau')$.
        \end{enumerate}
    \end{claim}
    
    \begin{figure}[ht]
        \centering
        \centerline{ \includestandalone[width=1\textwidth]{figures/ordering_cliques_in_buckets}}
        \caption{Exchange arguments used in \cref{claim:exchange}.}
        \label{fig:exArgument}
    \end{figure}
    
    \begin{claimproof}
        First note that $\stretchEdge{\pi}(e)=\stretchEdge{\pi_{C\leftrightarrow C'}}(e)$ for any edge $e\notin E(C)\cup E(C')$.
        Let $v_{\min}^C$ (respectively, $v_{\max}^C$) denote the leftmost (respectively, rightmost) vertex of $V(C)$ with regard to $\pi_{C\leftrightarrow C'}$,
        that is, $\pi(v_{\min}^C) = \minofset{\pi_{C\leftrightarrow C'}}{V(C)}$ and
        $\pi(v_{\max}^C) = \maxofset{\pi_{C\leftrightarrow C'}}{V(C)}$.
        Define $v_{\min}^{C'}$ and $v_{\max}^{C'}$ analogously.
        Note that $\pi(v_{\min}^{C}) = \minofset{\pi}{V(C)}$,
        $\pi(v_{\max}^{C}) = \maxofset{\pi}{V(C)}$,
        $\pi(v_{\min}^{C'}) = \minofset{\pi}{V(C')}$,
        and $\pi(v_{\max}^{C'}) = \maxofset{\pi}{V(C')}$
        as the order of vertices within $C$ and $C'$ is the same in $\pi$ and in $\pi_{C\leftrightarrow C'}$.
        Note that $\stretchEdge{\pi_{C\leftrightarrow C'}}(e) \leq \stretchEdge{\pi_{C\leftrightarrow C'}}(\{v_{\min}^C,v_{\max}^C\})$ for any edge $e \in E(C)$
        and $\stretchEdge{\pi_{C\leftrightarrow C'}}(e') \leq \stretchEdge{\pi_{C\leftrightarrow C'}}(\{v_{\min}^{C'},v_{\max}^{C'}\})$ for any edge $e'\in E(C')$.
        Hence, we only consider the stretch of the two edges $\{v_{\min}^{C},v_{\max}^C\}$ and $\{v_{\min}^{C'},v_{\max}^{C'}\}$.
        Consider how the stretch of the edge   $\{v_{\min}^C,v_{\max}^C\}$ is affected by the exchange of $C$ and $C'$.
        Prior to the exchange, all $\#L(\tau')$ vertices of $V(C')\cap B_\ell$ contributed to the stretch of $\{v_{\min}^C,v_{\max}^C\}$
        while after the exchange they do not.
        On the other hand, after the exchange of $C$ and $C'$ all $\#R(\tau')$ vertices in $V(C')\cap B_r$ contribute to the stretch of $\{v_{\min}^C,v_{\max}^C\}$
        but prior to the exchange they did not.
        Other than that, the stretch of $\{v_{\min}^C,v_{\max}^C\}$ is not affected by the exchange (see \cref{fig:exArgument} for an illustration).
        Hence, the difference between the stretch of $\{v_{\min}^C,v_{\max}^C\}$ before and after the exchange is $\leftSize(\tau')-\rightSize(\tau')$.
        In a very similar way, we can determine the difference between the stretch of edges before and after the exchange of $C$ and $C'$ to obtain the following:
        \begin{romanenumerate}
            \item\label{dif:1} $\stretchEdge{\pi}(\{v_{\min}^C,v_{\max}^C\})-\stretchEdge{\pi_{C\leftrightarrow C'}}(\{v_{\min}^C,v_{\max}^C\})=\leftSize(\tau')-\rightSize(\tau')$,
            
            \item\label{dif:2} $\stretchEdge{\pi}(\{v_{\min}^{C'},v_{\max}^{C'}\})-\stretchEdge{\pi_{C\leftrightarrow C'}}(\{v_{\min}^{C'},v_{\max}^{C'}\})=\rightSize(\tau)-\leftSize(\tau)$,
            
            \item\label{dif:3} $\stretchEdge{\pi}(\{v_{\min}^C,v_{\max}^C\})-\stretchEdge{\pi_{C\leftrightarrow C'}}(\{v_{\min}^{C'},v_{\max}^{C'}\})=\rightSize(\tau)-\rightSize(\tau')$, and 
            
            \item\label{dif:4} $\stretchEdge{\pi}(\{v_{\min}^{C'},v_{\max}^{C'}\})-\stretchEdge{\pi_{C\leftrightarrow C'}}(\{v_{\min}^{C},v_{\max}^{C}\})=\leftSize(\tau')-\leftSize(\tau)$.
        \end{romanenumerate}
        
        Assuming $\leftSize(\tau)\leq \rightSize(\tau)$ and $\leftSize(\tau')\geq \rightSize(\tau')$,
        Equalities~(\ref{dif:1}) and~(\ref{dif:2}) imply that $\str(\pi_{C\leftrightarrow C'})\leq \str(\pi)$. 
        Assuming $\leftSize(\tau)\geq \rightSize(\tau)$, $\leftSize(\tau')\geq \rightSize(\tau')$, and $\rightSize(\tau)\geq \rightSize(\tau')$
        we obtain the statement using Equalities~(\ref{dif:1}) and~(\ref{dif:3}).
        Lastly, assuming $\leftSize(\tau)\leq \rightSize(\tau)$, $\leftSize(\tau')\leq \rightSize(\tau')$, and $\leftSize(\tau)\leq \leftSize(\tau')$
        we get the desired bound on the stretch using Equalities~(\ref{dif:2}) and~(\ref{dif:4}).
        For an illustration of the different cases see \cref{fig:exArgument}.
    \end{claimproof}
    
    We now successively exchange clusters $C$ of distribution-type $\tau$ with clusters $C'$ of distribution-type $\tau'$
    for any $\tau,\tau'\in \distributionTypes$ with $\leftBucket(\tau)=\leftBucket(\tau')=\ell$ and $\rightBucket(\tau)=\rightBucket(\tau')=r$
    whenever $C$ is directly preceding $C'$ and $\rho_\ell(\tau') < \rho_\ell(\tau)$.
    Call the resulting ordering $\pi'$ and note that in each step,
    the stretch does not increase by \cref{claim:exchange} and hence $\str(\pi') \leq \str(\pi)$.
    Furthermore, since any exchange maintains $\mathcal{B}$, the ordering $\pi'$ is compatible with $\sigma$ and $\mathcal{B}$.
    Even more so, the exchanges maintain the partition of intervals $I_R^{(i,0)} < \dots < I_R^{(i,i-1)} < I_M^i < I_L^{(i,i+1)} < \dots < I_L^{(i,k')}$.
    By our exchange strategy we obtain a partition of $I_R^{(i,\ell)}$ for every $i\in [0,k']$, $\ell\in [0,i-1]$ into intervals
    $J_R^{(i,\rho_i(\tau))}$ for every $\tau\in \bigcup_{n_L, n_R \in [\omega(G)]} \distributionTypes^{(\ell,i,n_L,n_R)}$ such that
    $\pi(V(C)\cap B_i)\subseteq J_R^{(i,\rho_{i}(\tau))}$ if $C$ is of type $\tau$.
    Similarly, for every $i\in [0,k']$, $r\in [i+1,k']$ we obtain a partition of $I_L^{(i,r)}$
    into intervals $J_L^{(i,\lambda_i(\tau))}$ for every $\tau\in\bigcup_{n_L, n_R \in [\omega(G)]} \distributionTypes^{(i,r,n_L,n_R)}$ such that $\pi(V(C)\cap B_i) \subseteq J_L^{(i,\lambda_{i}(\tau))}$ if $C$ is of type $\tau$.
    Furthermore, the exchange strategy guaranties that $J_R^{(i,\ell)}<J_R^{(i,\ell')}$  and $J_L^{(i,r)}<J_L^{(i,r')}$ for $\ell\leq \ell'$, $r\leq r'$. 
\end{proof}

\subsection{ILP formulation}
Throughout this subsection we let $G$ be a graph and $S$ a cluster deletion set of $G$ with $k = |S|$.
Furthermore, for the remainder of this subsection we fix $\mathcal{C}$ to be a representative set of clusters,
$S' = S \cup \bigcup_{C \in \mathcal{C}} V(C)$ the extended deletion set containing vertices from $\mathcal{C}$ and $S$,
and $k' = |S'|$. 

For every ordering $\sigma \colon S' \to [k']$, we will use an ILP to determine whether
there is an $S'$-extremal ordering $\pi \colon V(G) \to [n]$ of stretch at most $b$ which is compatible with $\sigma$.
The ILP  has two variables $x_\tau, y_\tau$ for every distribution-type $\tau \in \distributionTypes$.
The variable $x_\tau$ expresses how many clusters of $G-S'$ have distribution-type $\tau$ in an optimal $S'$-extremal ordering compatible with $\sigma$.
The variable $y_\tau$ is an indicator variable which is $1$ if and only if $x_\tau > 0$ and $0$ otherwise.
We further use $z_i$ for $i \in [0,k']$ in our ILP formulation as a placeholder for the expression
$\sum_{\tau \in \distributionTypes} (x_\tau \cdot \lVert \tau_i \rVert_1)$
which expresses the number of vertices in bucket $i$.
For an assignment $\alpha \colon \setdef{x_\tau, y_\tau}{\tau \in \distributionTypes} \to \mathbb{N}$ of the variables of our ILP,
we write $\alpha(z_i)$ to stand for the expression $\sum_{\tau \in \distributionTypes}(\alpha(x_\tau) \cdot \lVert \tau_i \rVert_1)$.

We further need the leftmost and rightmost neighbor of any vertex of $S$ in $S'$,
thus define $v_{\min,\sigma}^s, v_{\max,\sigma}^s \in S'$ such that
$\sigma (v_{\min,\sigma}^s) = \minofset{\sigma}{N(s)}$ and $\sigma (v_{\max,\sigma}^s) = \maxofset{\sigma}{N(s)}$,
for every ordering $\sigma \colon S' \to [k']$ and  $s \in S$.
Note that by choosing  $S$ to be minimum, we can assume that $S$ contains no vertex with no neighbors in $G-S$ and hence $v_{\min,\sigma}^s$ and $v_{\max,\sigma}^s$ are well-defined. 

For a fixed ordering $\sigma \colon S' \to [k']$, we can now formulate our set of linear constraints.
The first three constraints ensure that we choose the number of clusters that have a certain distribution-type in a feasible way.
That is, (\ref{cons:T1}) ensures that the quantities of distribution-types corresponding to an assignment of the variables $x_\tau$ correspond
to a valid choice of allocating each available cluster in the input graph $G$ a distribution-type.
As for (\ref{cons:T2}), it ensures that $v_{\min,\sigma}^s$ is indeed the leftmost neighbor of $s$
while $v_{\max,\sigma}^s$ is the rightmost neighbor of $s$ for every $s\in S$ by ensuring that
any distribution-type placing a neighbor of $s$ in a bucket left of $v_{\min,\sigma}^s$ or right of $v_{\max,\sigma}^s$ does not occur.
Finally, (\ref{cons:T3}) guarantees that $y_\tau$ indeed indicates whether or not distribution-type $\tau$ is used in the solution.

\begin{description}
    \item[(T1)\namedlabel{cons:T1}{T1}] For every $\kappa \in \clusterTypes$,
    \[\#\kappa = \min\{\#\kappa, 2k\} + \sum_{\tau \in \distributionTypes_\kappa} x_\tau.\]
    
    \item[(T2)\namedlabel{cons:T2}{T2}]\label{equ:T3} For every $s \in S$ and every $\tau \in \distributionTypes$ for which $\tau_{N,i} > 0$
    for some $N \ni s$ and $i \in [0, \sigma(v_{\min,\sigma}^s)-1] \cup [\sigma(v_{\max,\sigma}^s),k']$,
    \[x_\tau=0. \] 

    \item[(T3)\namedlabel{cons:T3}{T3}] $x_\tau \cdot (1-y_\tau) = 0$ and  $(1 - x_\tau) \cdot y_\tau \leq 0$ for every $\tau \in \distributionTypes$.
\end{description}
    
The purpose of all remaining constraints is to ensure that for the assignment of variables,
which essentially corresponds to choosing a bucket distribution $\mathcal{B}$,
there is an $S'$-extremal ordering $\pi \colon V(G) \to [n]$ which is compatible with $\sigma$ and $\mathcal{B}$ for which $\str(\pi)\leq b$.
(\ref{cons:DS}) expresses that the stretch of edges in $G[S']$ is bounded by $b$.

\begin{description}
\item[(DS)\namedlabel{cons:DS}{DS}] For every $s,s'\in S'$ with $\{s,s'\} \in E(G)$, $\sigma(s)<\sigma(s')$,
\[b \geq \sigma(s') - \sigma(s) + \sum_{i\in [\sigma(s), \sigma(s')-1]} z_i.\]


\end{description}

The last three constraints deal with bounding the stretch of edges within clusters.
For this we assume that the $S'$-extremal ordering which is consistent with $\sigma$ and $\mathcal{B}$ is nice,
i.e., has Properties~\hyperref[property:1]{$(\Pi_1)$},~\hyperref[property:2]{$(\Pi_2)$}, and~\hyperref[property:3]{$(\Pi_3)$}.
The first constraint (\ref{cons:C1}) is necessary to bound the stretch of clusters that are fully contained in one bucket.
To bound the stretch of clusters contained in multiple buckets, we have one constraint for every
distribution-type $\tau\in \distributionTypes^{(\ell,r,n_L,n_R)}$ for any $\ell < r \in [0,k']$, $n_L, n_R \in [\omega(G)]$.
By Property~\hyperref[property:3]{$(\Pi_3)$} we know that there are intervals $J_L^{(\ell, \lambda_\ell(\tau))}$
containing all vertices from $B_\ell \cap V(C)$ and $J_R^{(r, \rho_r(\tau))}$ containing all vertices $B_r \cap V(C)$ for every cluster $C$ of distribution-type $\tau$.
The trick now is to observe that if $n_L \geq n_R$ then the first cluster appearing in $J_L^{(\ell, \lambda_\ell(\tau))}$
observes the maximum stretch while if $n_L<n_R$ it is the last clique.
Using this we can express with constraints (\ref{cons:C2}) and (\ref{cons:C3}) that the stretch of every cluster of distribution-type $\tau$ is bounded by $b$.

\begin{description}
    \item[(C1)\namedlabel{cons:C1}{C1}] $b \geq \omega(G)-1$.
    
    \item[(C2)\namedlabel{cons:C2}{C2}] For every $\ell < r \in [0,k']$, $n_L \geq n_R \in [\omega(G)]$, and $\tau \in \distributionTypes^{(\ell,r,n_L,n_R)}$,   
    \begin{align*}
        b \geq y_\tau \cdot
        \bigg(
        \sum_{\tau'\in \lambda_{\ell}^{-1}\big([\lambda_\ell(\tau), |\distributionTypes^\ell_L|]\big)}
        \leftSize(\tau') \cdot x_{\tau'}+\sum_{\ell<i<r}z_i + (r-\ell) \\
        +\sum_{\tau'\in \rho_{r}^{-1}\big([1,\rho_r(\tau)-1]\big)}\rightSize(\tau')\cdot x_{\tau'}+
        n_R -1\bigg).
    \end{align*}
            
    \item[(C3)\namedlabel{cons:C3}{C3}] For every $\ell < r\in [0,k']$, $n_L < n_R \in [\omega(G)]$, and $\tau \in \distributionTypes^{(\ell,r,n_L,n_R)}$,
        \begin{align*}
             b \geq y_\tau \cdot \bigg(n_L+
            \sum_{\tau'\in \lambda_{\ell}^{-1}\big([\lambda_\ell(\tau)+1,|\distributionTypes^\ell_L|]\big)}\leftSize(\tau')\cdot x_{\tau'}+\sum_{\ell<i<r}z_i +
            (r-\ell) \\
            +\sum_{\tau'\in \rho_{r}^{-1}\big([1,\rho_r(\tau)]\big)}\rightSize(\tau')\cdot x_{\tau'}
             -1\bigg).
        \end{align*}


\end{description}

\begin{lemmarep}\label{lem:ILP_correctness}
    For any ordering $\sigma \colon S' \to [k']$,
    there is an $S'$-extremal ordering $\pi \colon V(G) \to [n]$ of stretch at most $b$ which is compatible with $\sigma$ if and only if
    the system of linear equation (\ref{cons:T1}, \ref{cons:T2}, \ref{cons:T3},  \ref{cons:DS},  \ref{cons:C1}, \ref{cons:C2}, \ref{cons:C3}) for $\sigma$ admits a solution.
\end{lemmarep}

\begin{proof} 
    Towards showing the forward direction, assume that there is an $S'$-extremal ordering $\pi \colon V(G) \to [n]$ which is compatible with $\sigma$ and $\str(\pi) \leq b$.
    Let $\mathcal{B} = (B_0, \ldots, B_{k'})$ be the bucket distribution associated to $\pi$,
    i.e., $B_0 = \setdef{v \in V(G)}{\pi(v) < \pi(\sigma^{-1}(1))}$,
    $B_{k'} = \setdef{v \in V(G)}{\pi(v) > \pi(\sigma^{-1}(k'))}$, and
    $B_i = \setdef{v \in V(G)}{\pi(\sigma^{-1}(i)) < \pi(v) < \pi(\sigma^{-1}(i+1))}$ for $i \in [k'-1]$.
    By \cref{lem:finalOrder} there is an $S'$-extremal ordering $\pi' \colon V(G) \to [n]$ with $\str(\pi') \leq \str(\pi) \leq b$ which
    is compatible with $\sigma$ and $\mathcal{B}$ and
    has Properties~\hyperref[property:1]{$(\Pi_1)$},~\hyperref[property:2]{$(\Pi_2)$}, and~\hyperref[property:3]{$(\Pi_3)$}.  
    
    We now describe how to obtain an assignment $\alpha \colon \setdef{x_\tau, y_\tau}{\tau\in \distributionTypes} \to \mathbb{N}$
    which yields a solution to the system of linear equations~(\ref{cons:T1},~\ref{cons:T2},~\ref{cons:T3},~\ref{cons:DS},~\ref{cons:C1},~\ref{cons:C2},~\ref{cons:C3}).
    For every $\tau \in \distributionTypes$ we set $\alpha(x_\tau)$ to be the number of clusters in $G-S'$ that have distribution-type
    $\tau$ in $\mathcal{B}$ while we set $\alpha(y_\tau) = 0$ if $\alpha(x_\tau) = 0$ and $\alpha(y_\tau) = 1$ otherwise.
    We will argue in the following that this yields the desired solution.

    Firstly,~(\ref{cons:T1}) is satisfied since $\min\{\#\kappa,2k\}$ is counting the clusters of cluster-type $\kappa$ that are in $\mathcal{C}$ while
    by choice of $\alpha(x_\tau)$ the sum $\sum_{\tau\in \distributionTypes_\kappa} \alpha(x_\tau)$ is counting
    all clusters of cluster-type $\kappa$ that are not contained in $\mathcal{C}$. 
    
    If (\ref{cons:T2}) was not satisfied then there is $s\in S$, $N\ni s$,  $i\in [0,\sigma(v_{\min,\sigma}^s)-1]\cup [\sigma(v_{\max,\sigma}^s),k']$,
    and $\tau\in \distributionTypes$ such that $\tau_{N,i}>0$ and $\alpha(x_\tau)>0$.
    Hence there is a cluster $C$ of distribution-type $\tau$.
    Because $\tau_{N,i}>0$ there is a vertex $c\in V(C)$ which is adjacent to $s$ such that $c \in B_i$.
    But since $i\in [0,\sigma(v_{\min,\sigma}^s)-1]\cup [\sigma(v_{\max,\sigma}^s),k']$ we get that either
    $\pi(c)<\minofset{\pi}{N(s)\cap S'}$ or $\pi(c)>\maxofset{\pi}{N(s)\cap S'}$,
    which is a contradiction since $\pi$ is $S'$-extremal.
    Hence~(\ref{cons:T2}) is satisfied.

    Furthermore, the constraint~(\ref{cons:T3}) is clearly satisfied.
    
    To show that~(\ref{cons:DS}) is satisfied first observe that $\pi'(s)=\sigma(s)+\sum_{i\in [0,\sigma(s)-1]} z_i$ for every $s\in S'$ where $z_i=|B_i|$,
    that is the position of $s$ in $\sigma$ plus the size of all the buckets in between elements of $S'$ that appear before $s$ in $\sigma$.
    Hence for any $s,s'\in S$ with $\{s,s'\}\in E(G)$ and $\sigma(s)<\sigma(s')$ it holds that
    $\stretchEdge{\pi'}(\{s,s'\})=|\sigma(s)-\sigma(s')|+\sum_{i\in [\sigma(s),\sigma(s')-1]}z_i$. Since $\str(\pi')\leq b$  this immediately implies (\ref{cons:DS}) is satisfied.
    
    Next observe that~(\ref{cons:C1}) is trivially satisfied as for any graph the bandwidth must be at least its clique number.

    To prove~(\ref{cons:C2}) is satisfied we use the nice ordering properties of $\pi'$.
    Let $J_{R}^{(i,1)} < \dots < J_{R}^{(i,|\distributionTypes_R^i|)} < J_M^i < J_{L}^{(i,1)} < \dots < J_{L}^{(i,|\distributionTypes_L^i|)}$ be
    the partition of $\pi'(B_i)$ for which we have Property~\hyperref[property:3]{$(\Pi_3)$}.
    Fix some $\ell < r \in [0,k']$, $n_L\leq n_R\in [\omega(G)]$, and  $\tau \in \distributionTypes^{(\ell,r,n_L,n_R)}$.
    If there is no cluster of distribution-type $\tau$ then $\alpha(x_\tau) = \alpha(y_\tau)=0$ and (\ref{cons:C2}) is satisfied.
    Hence assume that there is at least one cluster of distribution-type $\tau$.
    For every cluster $C$ of distribution-type $\tau$ we let $v_{\min}^C,v_{\max}^C\in V(C)$ be the vertices such that
    $\pi'(v_{\min}^C)\leq \pi'(v)\leq \pi'(v_{\max}^C)$ for every $v\in V(C)$.
    Note that $v_{\min}^C \in B_\ell$ and $v_{\max}^C \in B_r$ as $C$ has distribution-type $\tau$ and $\leftBucket(\tau)=\ell$, $\rightBucket(\tau)=r$.
    We let $C$ be the cluster of distribution-type $\tau$ for which $\pi'(v_{\min}^C)< \pi'(v_{\min}^{C'})$ for any other cluster $C'$ of distribution-type $\tau$.
    We now let $X$ be the set of all vertices $v\in V(G)$ such that  $\pi'(v_{\min}^C)\leq \pi'(v)\leq \pi'(v_{\max}^C)$.
    By choice of $C$ we know that $J^{(\ell, \lambda_\ell(\tau))}_L \subseteq X$ as this includes all vertices
    $V(C')\cap B_\ell$ for every cluster $C'$ of distribution-type $\tau$.
    Additionally, $J_L^{(\ell, j)}\subseteq X$ for all $j \in [\lambda_\ell (\tau) +1, |\distributionTypes^\ell_L|]$
    according to the ordering of intervals $J_L^{(\ell,j)}$.
    Since $|J_L^{(\ell,\lambda_\ell(\tau'))}|=\leftSize(\tau') x_{\tau'}$
    we conclude that $|X\cap B_\ell|\geq \sum_{\tau'\in \lambda_{\ell}^{-1}\big([\lambda_\ell(\tau), |\distributionTypes^\ell_L|]\big)}\leftSize(\tau')\cdot x_{\tau'}$.
    On the other hand, $X\cap B_r$ contains every interval $J_R^{(r, i)}$ for which $i \in [1, \rho_r(\tau)-1]$ since
    these are all intervals of bucket $r$ preceding $J_R^{(r,\rho_r(\tau))}$.
    Additionally, $X \cap B_r$ contains the $n_R$ vertices belonging to $V(C) \cap B_r$.
    Hence, $|X\cap B_r|\geq \sum_{\tau'\in \rho_{r}^{-1}\big([1,\rho_r(\tau)-1]\big)}\rightSize(\tau')\cdot x_{\tau'}+ n_R$.
    Finally, $X \setminus (B_\ell \cup B_r)$ consists of precisely the buckets and vertices from $S'$ residing in between $B_\ell$ and $B_r$.
    This further gives us $|X \setminus (B_\ell\cup B_r)| = \sum_{\ell < i < r} z_i + (r - \ell)$.
    Since $\braces{v_{\min}^C,v_{\max}^C} \in E(G)$ and $\stretchEdge{\pi'}(\{v_{\min}^C,v_{\max}^C\})=|X|-1\leq b$ the constraint (\ref{cons:C2}) is satisfied.
    
    Validity of constraint (\ref{cons:C3}) can be shown with a symmetric argument in which
    we choose $C$ to be the cluster with $\pi'(v_{\max}^C)>\pi'(v_{\max}^{C'})$ for any other cluster $C'$ of distribution-type $\tau$
    and then mirror the roles of the left and right bucket.
    This concludes the proof that we indeed find a solution of the system of linear equations
    (\ref{cons:T1}, \ref{cons:T2}, \ref{cons:T3},  \ref{cons:DS}, \ref{cons:C1}, \ref{cons:C2}, \ref{cons:C3}).

    \bigskip

    Towards showing the backwards direction, assume that the ILP admits a solution and let
    $\alpha \colon \setdef{x_\tau,y_\tau}{\tau\in \distributionTypes} \to \mathbb{N}$
    be an assignment to the variables satisfying all constraints.
    We define an ordering $\pi \colon V(G) \to [n]$ in the following way. 
    For every $\kappa\in \clusterTypes$ we pick arbitrarily a partition $(P_\tau)_{\tau\in \distributionTypes_\kappa}$ of the set
    of clusters of type $\kappa$ excluding clusters in $\mathcal{C}$ such that $|P_\tau|=\alpha(x_\tau)$.
    This is possible because (\ref{cons:T1}) is satisfied.
    We now pick a bucket distribution $\mathcal{B}=(B_0,\dots,B_{k'})$ of $S'$ as follows.
    For every $\tau\in \distributionTypes$ and every $C\in P_\tau$ we pick
    arbitrarily $\tau_{N,i}$ vertices $v$ of $C$ with $N_S(v)=N$ and add them to bucket $i$ for every $i \in [0, k']$.
    This is possible since any clique allocated to $P_\tau$ for some $\tau\in \distributionTypes_\kappa$ has
    cluster-type $\kappa$ and therefore contains precisely $\kappa_N = \sum_{i \in [0,k']}\tau_{N,i}$ vertices $v$ with $N_S(v)=N$ by definition of $\distributionTypes_\kappa$.
    Clearly, every $C\in P_\tau$ has distribution-type $\tau$ in the bucket-distribution $\mathcal{B}$.
    We now obtain $\pi$ by choosing an arbitrary ordering which is compatible with $\sigma$ and $\mathcal{B}$,
    i.e., we order the vertices of $S'$ as prescribed by $\sigma$ and
    place vertices in $B_i$ in between vertex $\sigma^{-1}(i)$ and $\sigma^{-1}(i+1)$ while ordering the vertices in every $B_i$ arbitrarily. 
    
    We first argue that the ordering $\pi$ is $S'$-extremal.
    Recall that, for $s \in S$, $v_{\min,\sigma}^s$ and $v_{\max,\sigma}^s$ denote the leftmost and rightmost neighbor of $s \in S$ in $S'$,
    i.e., $\sigma (v_{\min,\sigma}^s) = \minofset{\sigma}{N(s) \cap S'}$ and $\sigma (v_{\max,\sigma}^s) = \maxofset{\sigma}{N(s) \cap S'}$.    
    Assume that $\pi$ is not $S'$-extremal and let $v\in V(G)\setminus S'$ be a vertex such that for some $s\in S$
    either $\pi(v) = \minofset{\pi}{N(s)}$ or $\pi(v) = \maxofset{\pi}{N(s)}$,
    which implies that either $\pi(v) < \pi(v_{\min,\sigma}^s)$ or $\pi(v) > \pi(v_{\max,\sigma}^s)$ respectively. 
    Let $C$ be the cluster containing $v$, $\tau$ the distribution-type of $C$ in $\mathcal{B}$, and $i\in [0,k']$ for which $v\in B_i$.
    By definition the existence of $v$ and $C$ of distribution-type $\tau$ implies that
    $\tau_{N_S(v),i}>0$ and $\alpha(x_\tau)>0$.
    Since~(\ref{cons:T2}) is satisfied it follows that $i \in [\sigma(v_{\min,\sigma}^s),\sigma(v_{\max,\sigma}^s)-1]$.
    Consequently, $\pi(v_{\min,\sigma}^s)<\pi(v)<\pi(v_{\max,\sigma}^s)$ since $\pi$ is compatible with $\sigma$ and $\mathcal{B}$ which yields a contradiction.

    Note that the ordering $\pi$ does not necessarily have $\str(\pi)\leq b$.
    But by \cref{lem:finalOrder} we can obtain an $S'$-extremal ordering $\pi' \colon V(G) \to [n]$ which is compatible with
    $\sigma$ and $\mathcal{B}$ and has Properties~\hyperref[property:1]{$(\Pi_1)$},~\hyperref[property:2]{$(\Pi_2)$}, and~\hyperref[property:3]{$(\Pi_3)$}.
    We claim that $\pi'$ satisfies that $\str(\pi')\leq b$ and hence is the desired ordering.

    To argue that $\str(\pi')\leq b$ we first consider any edge $e=\{v,s\}$ for which $v\in V(G)$ and $s\in S$.
    Consider the case that $\pi'(v) < \pi'(s)$ (the other case can be argued in a symmetric way).
    Since $\pi'$ is $S'$-extremal and $v$ is adjacent to $s$ we know that $\pi'(v_{\min,\sigma}^s) \leq \pi'(v)$.
    Hence
    \[
        \stretchEdge{\pi'}(e)\leq
        \stretchEdge{\pi'}(\{v_{\min,\sigma}^s,s\})=
        |\sigma(s)-\sigma(v_{\min,\sigma}^s)|+\sum_{i\in [\sigma(v_{\min,\sigma}^s), \sigma(s)-1]}z_i\leq b,
    \] 
    where the last inequality is due to (\ref{cons:DS}).
    
    \medskip

    Now consider any edge $e=\{u, w\}\in E(G)$ such that $u,w \notin S'$.
    Since $S' \supseteq S$, where $S$ is a cluster deletion set,
    there is a cluster $C$ in $G-S'$ such that $u,w \in V(C)$.
    Assume $\tau \in \distributionTypes$ is the distribution-type of $C$ and additionally let $\ell=\leftBucket(\tau)$, $r=\rightBucket(\tau)$,
    $n_L=\leftSize(\tau)$, and $n_R=\rightSize(\tau)$.
    We let $v_{\min}^C, v_{\max}^C\in V(C)$ be the two vertices with $\pi'(v_{\min}^C) \leq \pi'(v) \leq \pi'(v_{\max}^C)$ for every $v\in V(C)$.
    Since $\stretchEdge{\pi'}(e) \leq \stretchEdge{\pi'}(\braces{v_{\min}^C,v_{\max}^C})$ it is sufficient to show that
    $\stretchEdge{\pi'}(\braces{v_{\min}^C,v_{\max}^C}) \leq b$.
    First assume that $\ell=r$, which implies that every vertex of $C$ is contained in the same bucket.
    Due to Property~\hyperref[property:1]{$(\Pi_1)$} it follows that all vertices of $C$ appear consecutively in $\pi'$,
    therefore $\stretchEdge{\pi'}(\braces{v_{\min}^C,v_{\max}^C}) \leq |C|-1 \leq \omega(G)-1 \leq b$, where the last inequality follows from~(\ref{cons:C1}).
    
    \medskip
    
    Now consider the case where $\ell < r$.
    We use Property~\hyperref[property:3]{$(\Pi_3)$} to obtain a bound on the stretch.
    Therefore, for every $i\in [0,k']$ let $J_{R}^{(i,1)} < \dots < J_{R}^{(i,|\distributionTypes^i_R|)} < J_M^i < J_{L}^{(i,1)} < \dots < J_L^{(i,|\distributionTypes^i_L|)}$
    be the partition of $\pi'(B_i)$ as in Property~\hyperref[property:3]{$(\Pi_3)$}.
    We let $X$ be the set of all vertices $v \in V(G)$ such that  $\pi'(v_{\min}^C)\leq \pi'(v)\leq \pi'(v_{\max}^C)$.  
    
    We first assume that $n_L\geq n_R$.
    Let $Y$ be the set of vertices $v$ such that $v\in V(C')$ for some cluster $C'$ of type $\tau$ and $\pi'(v)<\pi'(v_{\min}^C)$.
    Since $\pi'$ has Property~\hyperref[property:1]{$(\Pi_1)$} we know that there is a set of clusters $\mathcal{C}_Y$ such that $Y=\bigcup_{C'\in\mathcal{C}_Y}V(C')\cap B_\ell$.
    Since $\pi'$ has Property~\hyperref[property:2]{$(\Pi_2)$} we observe that $X\cap J_L^{(\ell,\lambda_\ell(\tau))} = J_L^{(\ell,\lambda_\ell(\tau))} \setminus \big( \bigcup_{C'\in \mathcal{C}_Y} (V(C')\cap B_\ell) \big)$ while
    $X\cap J_R^{(r,\rho_r(\tau))} = \bigcup_{C'\in \mathcal{C}_Y}(V(C')\cap B_r)\cup (V(C)\cap B_r)$.
    We further know that $X\setminus (J_L^{(\ell,\lambda_\ell(\tau))}\cup J_R^{(r,\rho_r(\tau))})$ consists of precisely the vertices in $J_L^{(\ell,i)}$ for every $i>\lambda_{\ell}(\tau)$,
    the vertices in $B_i$ for $\ell < i < r$,
    the vertices $\setdef{s \in S'}{\ell < \sigma(s) < r}$,
    as well as the vertices in $J_R^{(r,i)}$ for every $i<\rho_{r}(\tau)$.
    We conclude that
    \begin{align*}
        X
        = \Big(J_L^{(\ell,\lambda_\ell(\tau))} \setminus \bigcup_{C'\in \mathcal{C}_Y} (V(C') &\cap B_\ell) \Big) \cup
        \bigcup_{i \in [\lambda_{\ell}(\tau)+1, |\distributionTypes^i_L|]} J_L^{(\ell,i)}\\
        &\cup \bigcup_{\ell<i<r}B_i \cup \setdef{s\in S'}{\ell<\sigma(s)<r}\\
        &\cup \bigcup_{i\in [1,\rho_{r}(\tau)-1]}J_R^{(r,i)} \cup
        \bigcup_{C'\in \mathcal{C}_Y}\big(V(C')\cap B_r\big) \cup
        \big(V(C)\cap B_r\big).
    \end{align*}
    To determine the size of $X$ note that $J_L^{(\ell,\lambda_\ell(\tau'))}=\leftSize(\tau')\cdot \alpha(x_{\tau'})$ and $J_R^{(r,\rho_r(\tau'))}=\rightSize(\tau')\cdot \alpha(x_{\tau'})$. Hence we obtain
    \begin{align*}
        \stretchEdge{\pi'}(\{v_{\min}^C,v_{\max}^C\})&=|X|-1\\
        &= \sum_{\tau'\in \lambda_{\ell}^{-1}\big([\lambda_\ell(\tau), |\distributionTypes^i_L|]\big)}\leftSize(\tau')\cdot \alpha(x_{\tau'}) -
        \sum_{C'\in \mathcal{C}_Y}n_L\\
        &\qquad +\sum_{\ell<i<r}\alpha(z_i) +
                (r-\ell) \\
        &\qquad +\sum_{\tau'\in \rho_{r}^{-1}\big([1,\rho_r(\tau)-1]\big)}\rightSize(\tau')\cdot \alpha(x_{\tau'})+\sum_{C'\in \mathcal{C}_Y}n_R+n_R-1\\
        &\leq \sum_{\tau'\in \lambda_{\ell}^{-1}\big([\lambda_\ell(\tau),|\distributionTypes^i_L|]\big)}\leftSize(\tau')\cdot \alpha(x_{\tau'})
        +\sum_{\ell<i<r}\alpha(z_i) +
                (r-\ell) \\
                &\qquad +\sum_{\tau'\in \rho_{r}^{-1}\big([1,\rho_r(\tau)-1]\big)}\rightSize(\tau')\cdot \alpha(x_{\tau'})+
                n_R -1 \\
        &\leq b,
    \end{align*}
    where the first inequality follows due to the assumption that $n_L\geq n_R$
    (thus $\sum_{C'\in \mathcal{C}_Y} n_R - \sum_{C'\in \mathcal{C}_Y} n_L\leq 0$)
    and the last inequality follows due to~(\ref{cons:C2}) as $\alpha(y_\tau) = 1$
    (as there exists some cluster of distribution-type $\tau$).

    In the case that $n_L<n_R$ we can conclude that $\stretchEdge{\pi'}(\{v_{\min}^C,v_{\max}^C\})\leq b$ by a symmetric argument.
    Hence $\str(\pi') \leq b$ concluding the proof of the statement.
\end{proof}
    
Using \cref{lem:ILP_correctness} we obtain an FPT-algorithm
which computes $S$, $\#\kappa$ for every cluster-type $\kappa$, and arbitrary picks an extended deletion set $S'$.
Then, for every ordering $\sigma \colon S' \to [k']$, the algorithm verifies whether the ILP admits a solution in which case the input is a YES-instance of \BW.

\begin{toappendix}
\begin{proof}[Proof of \cref{thm:FPTAlgo}]
    Let $(G, b)$ be an instance of \BW,
    and let $S \subseteq V(G)$ be a cluster deletion set of $G$ of size $|S| = \clusterDel(G) = k$
    (such a set can be computed in time $1.811^k n^{\bO(1)}$ using the algorithm of~\cite{mst/Tsur21}).
    Notice that one can compute $\#\kappa$ for every cluster-type $\kappa \in \clusterTypes$ in polynomial time,
    by checking the $S$-neighborhood of every vertex in every clique of $G-S$.
    Moreover, the size of every clique of $G - S$ is at most $\omega(G)$.
    Let $S' = S \cup \bigcup_{C \in \mathcal{C}} C$ be an extended deletion set of $G$,
    for some representative clique set $\mathcal{C}$.
    Due to \cref{obs:sizeOfExtended,lem:computingTypeOrders},
    one can compute $\lambda_i$ and $\rho_i$ for $i \in [0, |S'|]$
    in time FPT in $\clusterDel(G) + \omega(G)$.
    Due to \cref{lem:stretchSClusters}, there exists an $S'$-extremal ordering $\pi$ of $V(G)$ such that $\str(\pi) = \bw(G)$.
    Notice that $\pi$ is $S'$-extremal and compatible with $\sigma$, for some ordering $\sigma$ of $S'$. 

    Fix an ordering $\sigma \colon S' \to [|S'|]$ of $S'$.
    Due to \cref{lem:ILP_correctness},
    one can verify whether there exists an $S'$-extremal ordering of $V(G)$ compatible with $\sigma$ of stretch at most $b$ by solving the ILP.
    Notice that there are exactly $|S'|!$ different orderings $\sigma$,
    thus one can decide whether $\bw(G) \le b$ by solving an ILP for each such ordering.
    
    Due to \cref{obs:sizeOfExtended} as well as the formulation of the ILP,
    it follows that both the number of ILPs solved as well as the number of variables of the ILPs are bounded
    by a function of $\clusterDel(G) + \omega(G)$.
    Since ILP is FPT parameterized by the number of variables, it follows that {\BW} is FPT parameterized by $\clusterDel(G) + \omega(G)$.
\end{proof}
\end{toappendix}

\begin{remark}
    Using a minimization ILP, we can in fact \emph{construct} an ordering of \emph{minimum} stretch
    (and not just argue about the existence of an ordering of stretch at most $b$),
    since all the exchange arguments of \cref{subsec:niceOrderings} are constructive.
\end{remark}

\section{W[1]-hardness parameterized by cluster vertex deletion number}\label{sec:parametrizedHardness}

In this section we prove that {\BW} is W[1]-hard when parameterized by the cluster vertex deletion number of the input graph.
In order to do so, we present a parameterized reduction from {\UBP} parameterized by the number of bins,
which is well-known to be W[1]-hard~\cite{jcss/JansenKMS13}.

\problemdef{\UBP}
{A multiset $A = \braces{a_1, \ldots, a_n}$ of integers in unary, as well as $k \in \N$.}
{Determine whether there is a partition of $[n]$ into $k$ subsets $\mathcal{I}_1, \ldots, \mathcal{I}_k$,
such that for all $i \in [k]$, $\sum_{j \in \mathcal{I}_i} a_j = \sum_{j \in [n]} a_j / k$.}

Before we present the details of the construction, we first give some high-level intuition.
For an instance $(A,k)$ of {\UBP} we want to construct an equivalent instance $(G, b)$ of \BW,
such that $\clusterDel(G) = f(k)$ for some function $f$.
Roughly, the graph $G$ consists of cliques representing the items of the {\UBP} instance and
cliques that act as delimiters separating the items contained in some bucket from the items contained in the next bucket.
However, in order to guarantee that the entirety of every item clique is placed in between two consecutive delimiter cliques
and that the values of the items in between two delimiter cliques add up to $B$ (the capacity of the bins in the {\UBP} instance $(A,k)$),
some extra structure is needed. 
First we introduce two cliques of size $b+1$ that will be used as boundaries.
By making each item clique and each delimiter clique of the graph adjacent to some vertex in both of the boundary cliques,
it follows that in any ordering of stretch at most $b$,
all item cliques and all delimiter cliques of the graph will be positioned in between the two boundary cliques.

As the size of the deletion set cannot depend on the number or values of the items,
item cliques cannot be incident to individual deletion set vertices.
This makes it tricky to enforce that every vertex of an item clique is contained in between
the same two delimiter cliques as a majority of the item cliques would not be incident to any
edge of maximum stretch and therefore allow them a lot of freedom of movement.
In order to cope with this issue, we introduce a perfect copy of the delimiter and item cliques,
as well as edges between the original cliques and their copies resulting in them becoming twice
as big consisting of a left part, the original vertices, and a right part, the copy vertices.
The left part of all cliques will be connected to the left boundary clique and will therefore appear to the left of the right parts.
The right part will be connected to the right hand boundary cliques.
The item cliques will now be kept in place by having maximum stretch between the vertices of the left part and the vertices of the right part.


\begin{theorem}\label{thm:hardness}
    {\BW} is W[1]-hard when parameterized by the cluster vertex deletion number of the input graph.
\end{theorem}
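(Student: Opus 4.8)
The plan is to give a parameterized reduction from \UBP{}, which is W[1]-hard parameterized by the number~$k$ of bins. Starting from an instance $(A,k)$ with $A=\{a_1,\dots,a_n\}$ and target load $B=\frac{1}{k}\sum_{j\in[n]}a_j$, I formalise the construction sketched above into a graph $G$ and an integer~$b$. It comprises two \emph{boundary cliques} $L,R$ of size $b+1$ with distinguished vertices $\ell\in L$ and $r\in R$; for each item $a_j$ a \emph{doubled item clique} of size $\Theta(a_j)$ split into a left part adjacent to $\ell$ and a right part adjacent to $r$, the two parts joined by edges that become the most stretched edges of any good layout, so that the left part precedes the right part and cannot drift away from it; and $k+1$ \emph{doubled delimiter cliques} $\Delta_0,\dots,\Delta_k$ built the same way, together with a small \emph{ordering gadget} of $O(k)$ extra vertices whose role is to force, in every layout of stretch at most $b$, the delimiter left parts to appear in the order $\Delta_0^L,\dots,\Delta_k^L$ with exactly $B$ item-left-part vertices between consecutive ones. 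I pick $b$, and hence $|L|=|R|=b+1$, so that $|V(G)|$ leaves no slack and so that the forced gaps realise the bin capacity. Taking $S=\{\ell,r\}$ together with the $O(k)$ gadget vertices yields a cluster deletion set, because $G-S$ is the disjoint union of $L-\ell$, $R-r$ and all the doubled cliques; hence $\clusterDel(G)\le|S|=O(k)$. Since $A$ is given in unary, $|V(G)|$ is polynomial in the input, so $(A,k)\mapsto(G,b)$ is a polynomial-time (in particular FPT) reduction, and it remains to prove that it preserves the answer.

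For the forward direction, from a partition of $A$ into bins $\mathcal{S}_1,\dots,\mathcal{S}_k$, each of load $B$, I produce the canonical layout: $L$ on the leftmost $b+1$ positions; then all left parts in the order $\Delta_0^L$, the left parts of the items of $\mathcal{S}_1$, $\Delta_1^L$, those of $\mathcal{S}_2$, \dots, $\Delta_k^L$; then all right parts in the same order $\Delta_0^R$, the items of $\mathcal{S}_1$, \dots, $\Delta_k^R$; and finally $R$ on the rightmost $b+1$ positions. Every edge inside a boundary, item or delimiter clique has stretch below $b$ since such cliques have size at most $b+1$; every edge between a left part and the matching right part spans a block of fixed width, namely the delimiter left parts together with the $kB$ item-left-part vertices, which by the choice of $b$ equals $b$; and the gadget edges have stretch at most $b$ by construction. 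Checking these inequalities is a routine computation once the parameters are instantiated.

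For the reverse direction, let $\pi$ have stretch at most~$b$. A clique of size $c$ spans at most $b+1$ positions, so the size-$(b+1)$ boundary and delimiter cliques occupy solid blocks of $b+1$ consecutive positions; since an item clique spans at most $b+1$ positions it cannot have vertices on both sides of such a block, so no item straddles a delimiter. A counting argument, using that $L$, $R$ and the remaining vertices tile $[1,|V(G)|]$ with nothing to spare, forces $L$ onto the leftmost block and $R$ onto the rightmost. Then, for each item, its left part lies within $b$ of $\ell$ and its right part within $b$ of $r$, which, combined with the pinned positions of $L,R$ and the maximum-stretch edges joining the two parts, confines the left part to a ``left zone'' and the right part to a ``right zone'' with no slack; the ordering gadget pins the delimiter left parts in their prescribed order and at their prescribed distances, so precisely $B$ item-left-part vertices sit between $\Delta_{i-1}^L$ and $\Delta_i^L$. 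Since no item straddles a delimiter, these are exactly the left parts of a sub-collection $\mathcal{S}_i$ of items with $\sum_{j\in\mathcal{S}_i}a_j=B$, and $(\mathcal{S}_1,\dots,\mathcal{S}_k)$ is a valid packing.

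The main obstacle is the reverse direction, and within it the step showing that every item ends up \emph{entirely} in one bin with \emph{no slack} left anywhere. An item clique carries no stretch-$b$ edge on its own and would otherwise float freely; the doubling is exactly the device that fixes this, tethering the left part to $\ell$ and the right part to $r$ and thereby preventing either from leaking past the corresponding part of a delimiter. Arranging that the vertex count, the forced delimiter gaps and all stretch bounds fit together simultaneously — choosing $b$, the delimiter sizes and the gadget so that there is no leftover room — is the remaining and most delicate bookkeeping.
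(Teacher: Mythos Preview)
Your high-level plan coincides with the paper's: reduce from \UBP{}, build two boundary cliques of size $b{+}1$, realise every item as a ``doubled'' clique whose left half is tethered to the left boundary and whose right half is tethered to the right boundary, and interleave delimiter cliques so that in any layout of stretch at most $b$ the items are forced to fill $k$ bins of size $B$. The paper does exactly this, with the cluster deletion set consisting of $4k$ carefully chosen vertices.

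However, the proposal has a genuine gap: the ``ordering gadget of $O(k)$ extra vertices'' is never specified, and this is not mere bookkeeping but the technical core of the reduction. In the paper, the mechanism that pins each delimiter to a precise position is a rather delicate one: the boundary clique $X$ is partitioned into blocks $X^0,\dots,X^k$ and the designated delimiter vertex $\ell_1^i$ is made adjacent to \emph{all} of $X^i\cup\dots\cup X^k$, so that a tight three-edge path $x^i_{\min}\text{--}\ell_1^i\text{--}r_B^i\text{--}y^i_{\max}$ of total length exactly $3b$ forces $\pi(\ell_1^i)$ and $\pi(r_B^i)$ to their unique positions (\cref{claim:structure1}); a second layer of edges (from the last vertex of $X^i$ to every vertex of $L^i$) then pins the remaining delimiter vertices within a window (\cref{claim:structure2}). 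These designated vertices $\ell_1^i,r_B^i$ together with one vertex per block $X^i,Y^i$ \emph{are} the deletion set. Your proposal asserts that such a gadget exists and behaves as required, but gives no construction and no argument that the forcing actually holds; without it, there is no proof.

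There is also an internal inconsistency in the reverse direction: you write that ``the size-$(b{+}1)$ boundary and delimiter cliques occupy solid blocks of $b{+}1$ consecutive positions; since an item clique spans at most $b{+}1$ positions it cannot have vertices on both sides of such a block, so no item straddles a delimiter''. But you earlier describe the delimiter cliques as \emph{doubled} and built ``the same way'' as item cliques, hence of size far smaller than $b{+}1$. If they are small, the blocking argument fails outright; if they are of size $b{+}1$, the doubling and the layout you give in the forward direction (left parts, then right parts) no longer make sense, and the vertex count blows up. In the paper the delimiters have size $2B\ll b{+}1$ and items are prevented from straddling them not by a blocking argument but by the tethering of item halves combined with the exact pinning of $\ell_1^i$ and $r_B^i$ and a final induction (\cref{claim:structure3}). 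Your sketch does not supply an alternative to this.
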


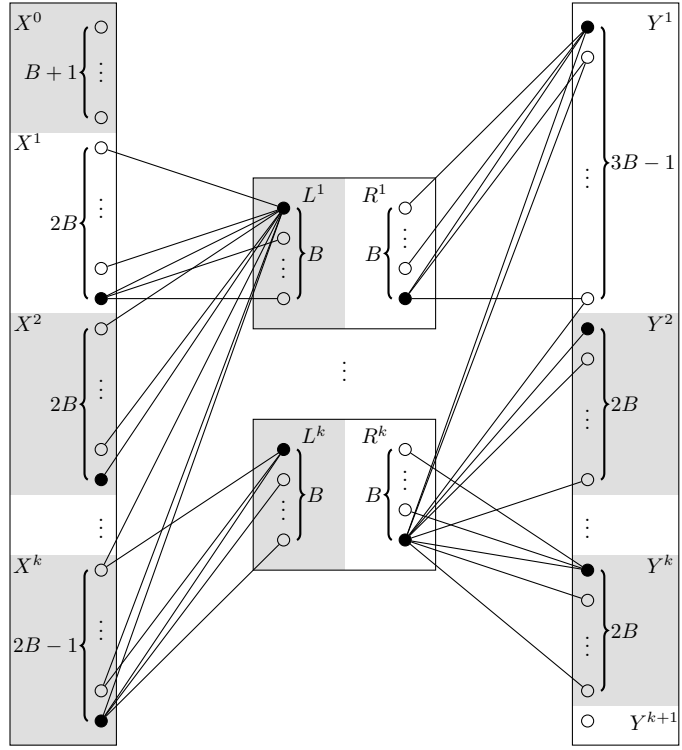
\begin{figure}[ht]
\centering
\begin{tikzpicture}[scale=0.8, transform shape]

\node[black_vertex] (c3C3) at (5,1) {};
\node[vertex] (c2C3) at (5,1.5) {};
\node[] () at (5,2.6) {$\vdots$};
\node[vertex] (c1C3) at (5,3.5) {};
\node[] () at (3.8,3.6) {$X^k$};
\draw [decorate,decoration = {brace,raise=5pt},thick] (5,1) --  (5,3.5) node[pos=0.5,left=0.25]{$2B-1$};

\node[] () at (5,4.3) {$\vdots$};

\node[black_vertex] (c3C2) at (5,5) {};
\node[vertex] (c2C2) at (5,5.5) {};
\node[] () at (5,6.6) {$\vdots$};
\node[vertex] (c1C2) at (5,7.5) {};
\node[] () at (3.8,7.6) {$X^2$};
\draw [decorate,decoration = {brace,raise=5pt},thick] (5,5) --  (5,7.5) node[pos=0.5,left=0.25]{$2B$};

\node[black_vertex] (c3C1) at (5,8) {};
\node[vertex] (c2C1) at (5,8.5) {};
\node[] () at (5,9.6) {$\vdots$};
\node[vertex] (c1C1) at (5,10.5) {};
\node[] () at (3.8,10.6) {$X^1$};
\draw [decorate,decoration = {brace,raise=5pt},thick] (5,8) --  (5,10.5) node[pos=0.5,left=0.25]{$2B$};

\node[vertex] (c3C0) at (5,11) {};
\node[] () at (5,11.85) {$\vdots$};
\node[vertex] (c1C0) at (5,12.5) {};
\node[] () at (3.8,12.6) {$X^0$};
\draw [decorate,decoration = {brace,raise=5pt},thick] (5,11) --  (5,12.5) node[pos=0.5,left=0.25]{$B+1$};

\begin{scope}[on background layer]
    \fill[lightgray!50] (3.5,0.6) rectangle (5.25,3.75);
    \fill[lightgray!50] (3.5,4.75) rectangle (5.25,7.75);
    \fill[lightgray!50] (3.5,10.75) rectangle (5.25,12.9);
\end{scope}
\draw[] (3.5,0.6) rectangle (5.25,12.9);

\node[vertex] (l3L1) at (8,8) {};
\node[] () at (8,8.6) {$\vdots$};
\node[vertex] (l2L1) at (8,9) {};
\node[black_vertex] (l1L1) at (8,9.5) {};
\draw [decorate,decoration = {brace,mirror,raise=5pt},thick] (8,8) --  (8,9.5) node[pos=0.5,right=0.25]{$B$};
\node[] () at (8.5,9.75) {$L^1$};

\draw[] (l1L1)--(c1C1);
\draw[] (l1L1)--(c2C1);
\draw[] (l1L1)--(c3C1);
\draw[] (l1L1)--(c1C2);
\draw[] (l1L1)--(c2C2);
\draw[] (l1L1)--(c3C2);
\draw[] (l1L1)--(c1C3);
\draw[] (l1L1)--(c2C3);
\draw[] (l1L1)--(c3C3);

\draw[] (c3C1)--(l2L1);
\draw[] (c3C1)--(l3L1);







\begin{scope}[shift={(0,3)}]
    
    \node[vertex] (l3L3) at (8,1) {};
    \node[] () at (8,1.6) {$\vdots$};
    \node[vertex] (l2L3) at (8,2) {};
    \node[black_vertex] (l1L3) at (8,2.5) {};
    \draw [decorate,decoration = {brace,mirror,raise=5pt},thick] (8,1) --  (8,2.5) node[pos=0.5,right=0.25]{$B$};
    \node[] () at (8.5,2.75) {$L^k$};
    
    \draw[] (l1L3)--(c1C3);
    \draw[] (l1L3)--(c2C3);
    \draw[] (l1L3)--(c3C3);
    
    \draw[] (c3C3)--(l2L3);
    \draw[] (c3C3)--(l3L3);
\end{scope}

\node[vertex] (Xk1) at (13,1) {};
\node[] () at (14.1,1) {$Y^{k+1}$};

\node[vertex] (x3X3) at (13,1.5) {};
\node[] () at (13,2.35) {$\vdots$};
\node[vertex] (x2X3) at (13,3) {};
\node[black_vertex] (x1X3) at (13,3.5) {};
\node[] () at (14.2,3.6) {$Y^k$};
\draw [decorate,decoration = {brace,mirror,raise=5pt},thick] (13,1.5) --  (13,3.5) node[pos=0.5,right=0.25]{$2B$};

\node[] () at (13,4.3) {$\vdots$};

\node[vertex] (x3X2) at (13,5) {};
\node[] () at (13,6.1) {$\vdots$};
\node[vertex] (x2X2) at (13,7) {};
\node[black_vertex] (x1X2) at (13,7.5) {};
\node[] () at (14.2,7.6) {$Y^2$};
\draw [decorate,decoration = {brace,mirror,raise=5pt},thick] (13,5) --  (13,7.5) node[pos=0.5,right=0.25]{$2B$};

\node[vertex] (x3X1) at (13,8) {};
\node[] () at (13,10.1) {$\vdots$};
\node[vertex] (x2X1) at (13,12) {};
\node[black_vertex] (x1X1) at (13,12.5) {};
\node[] () at (14.2,12.6) {$Y^1$};
\draw [decorate,decoration = {brace,mirror,raise=5pt},thick] (13,8) --  (13,12.5) node[pos=0.5,right=0.25]{$3B-1$};

\begin{scope}[on background layer]
    \fill[lightgray!50] (12.75,1.25) rectangle (14.5,3.75);
    \fill[lightgray!50] (12.75,4.75) rectangle (14.5,7.75);
\end{scope}
\draw[] (12.75,0.6) rectangle (14.5,12.9);

\node[black_vertex] (r3R1) at (10,8) {};
\node[vertex] (r2R1) at (10,8.5) {};
\node[] () at (10,9.1) {$\vdots$};
\node[vertex] (r1R1) at (10,9.5) {};
\draw [decorate,decoration = {brace,raise=5pt},thick] (10,8) --  (10,9.5) node[pos=0.5,left=0.25]{$B$};
\node[] () at (9.5,9.75) {$R^1$};
\draw[] (7.5,7.5) rectangle (10.5,10);
\begin{scope}[on background layer]
    \fill[lightgray!50] (7.5,7.5) rectangle (9,10);
\end{scope}

\draw[] (r3R1)--(x1X1);
\draw[] (r3R1)--(x2X1);
\draw[] (r3R1)--(x3X1);

\draw[] (x1X1)--(r1R1);
\draw[] (x1X1)--(r2R1);




\node[] () at (9,6.9) {$\vdots$};

\begin{scope}[shift={(0,3)}]
\node[black_vertex] (r3R3) at (10,1) {};
\node[vertex] (r2R3) at (10,1.5) {};
\node[] () at (10,2.1) {$\vdots$};
\node[vertex] (r1R3) at (10,2.5) {};
\draw [decorate,decoration = {brace,raise=5pt},thick] (10,1) --  (10,2.5) node[pos=0.5,left=0.25]{$B$};
\node[] () at (9.5,2.75) {$R^k$};
\draw[] (7.5,0.5) rectangle (10.5,3);
\begin{scope}[on background layer]
    \fill[lightgray!50] (7.5,0.5) rectangle (9,3);
\end{scope}

\draw[] (r3R3)--(x1X1);
\draw[] (r3R3)--(x2X1);
\draw[] (r3R3)--(x3X1);
\draw[] (r3R3)--(x1X2);
\draw[] (r3R3)--(x2X2);
\draw[] (r3R3)--(x3X2);
\draw[] (r3R3)--(x1X3);
\draw[] (r3R3)--(x2X3);
\draw[] (r3R3)--(x3X3);

\draw[] (x1X3)--(r1R3);
\draw[] (x1X3)--(r2R3);
\end{scope}

\end{tikzpicture}
\caption{Part of $G$, showing only the boundary and the delimiter cliques.
Rectangles denote cliques,
brackets denote number of vertices,
and black vertices compose a cluster deletion set.}
\label{fig:reduction_no_items}
\end{figure}

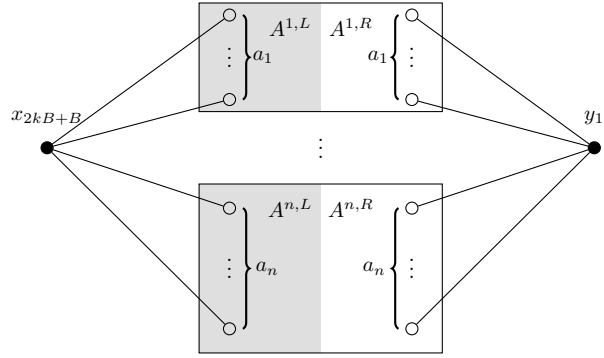
\begin{figure}[ht]
\centering
\begin{tikzpicture}[scale=0.8, transform shape]

\node[] () at (5,5.5) {$x_{2kB + B}$};
\node[black_vertex] (l) at (5,5) {};
\node[] () at (14,5.5) {$y_1$};
\node[black_vertex] (r) at (14,5) {};

\node[vertex] (i1l1) at (8,2) {};
\node[] () at (8,3.1) {$\vdots$};
\node[vertex] (i1l3) at (8,4) {};
\draw [decorate,decoration = {brace,mirror,raise=5pt},thick] (8,2) --  (8,4) node[pos=0.5,right=0.3]{$a_n$};
\draw[] (l)--(i1l1);
\draw[] (l)--(i1l3);

\node[] () at (9,4) {$A^{n,L}$};
\begin{scope}[on background layer]
    \fill[lightgray!50] (7.5,1.6) rectangle (9.5,4.4);
\end{scope}

\node[vertex] (i1r1) at (11,2) {};
\node[] () at (11,3.1) {$\vdots$};
\node[vertex] (i1r3) at (11,4) {};
\draw [decorate,decoration = {brace,raise=5pt},thick] (11,2) --  (11,4) node[pos=0.5,left=0.3]{$a_n$};
\draw[] (r)--(i1r1);
\draw[] (r)--(i1r3);

\node[] () at (10,4) {$A^{n,R}$};

\draw[] (7.5,1.6) rectangle (11.5,4.4);

\node[] () at (9.5,5.1) {$\vdots$};

\node[vertex] (i2l1) at (8,5.8) {};
\node[] () at (8,6.6) {$\vdots$};
\node[vertex] (i2l3) at (8,7.2) {};
\draw [decorate,decoration = {brace,mirror,raise=5pt},thick] (8,5.8) --  (8,7.2) node[pos=0.5,right=0.25]{$a_1$};
\draw[] (l)--(i2l1);
\draw[] (l)--(i2l3);

\node[] () at (9,7) {$A^{1,L}$};
\begin{scope}[on background layer]
    \fill[lightgray!50] (7.5,5.6) rectangle (9.5,7.4);
\end{scope}

\node[vertex] (i2r1) at (11,5.8) {};
\node[] () at (11,6.6) {$\vdots$};
\node[vertex] (i2r3) at (11,7.2) {};
\draw [decorate,decoration = {brace,raise=5pt},thick] (11,5.8) --  (11,7.2) node[pos=0.5,left=0.25]{$a_1$};
\draw[] (r)--(i2r1);
\draw[] (r)--(i2r3);
\node[] () at (10,7) {$A^{1,R}$};

\draw[] (7.5,5.6) rectangle (11.5,7.4);

\end{tikzpicture}
\caption{Rectangles denote cliques.
Black vertices compose a cluster deletion set.}
\label{fig:reduction_items}
\end{figure}

\subparagraph*{Construction.}
Let $(A,k)$ be an instance of \UBP, where $A = \braces{a_1, \dots, a_n}$.
Moreover, let $B = \sum_{j \in [n]} a_j / k$ be the capacity of every bin,
where $B \in \N$, since otherwise this would have been a trivial instance.
Set $b = 2kB+B-1$.
We will construct an equivalent instance $(G,b)$ of {\BW} as follows.

\proofsubparagraph*{Boundary cliques.}
First, we create two cliques $X$ and $Y$,
referred to as \emph{boundary cliques}, where
$V(X) = \braces{x_1, \ldots, x_{2kB+B}}$ and $V(Y) = \braces{y_1, \ldots, y_{2kB+B}}$.
We consider the following partition of the vertices of $X$:
let $X^0 = \braces{x_1, \ldots, x_{B+1}}$ and
for every $i \in [k-1]$ we denote the set $\braces{x_{2iB-B+2}, \ldots, x_{2iB+B+1}}$ by $X^i$,
while $X^k = \braces{x_{2kB - B + 2}, \ldots, x_{2kB + B}}$.
Note that $|X^0| = B+1$, $|X^k| = 2B-1$, and $|X^i| = 2B$, for all $i \in [k-1]$.
Moreover, we partition the vertices of $Y$ in a similar but slightly asymmetric way:
let $Y^1 = \braces{y_1, \ldots, y_{3B-1}}$ and
for every $i \in [2,k]$ we denote the set $\braces{y_{2iB-B}, \ldots, y_{2iB+B-1}}$ by $Y^i$,
while $Y^{k+1} = \braces{y_{2kB + B}}$.
Note that $|Y^1| = 3B - 1$, $|Y^{k+1}| = 1$, and $|Y^i| = 2B$, for all $i \in [2,k]$.

\proofsubparagraph*{Delimiter cliques.}
For every $i \in [k]$ we create a clique on vertex set $\{\ell_1^i,\dots,\ell_B^i, r_1^i,\dots,r_B^i\}$ of size $2B$.
We denote the set $\{\ell_1^i,\dots,\ell_B^i\}$ by $L^i$ and the set $\{r_1^i,\dots,r_B^i\}$ by $R^i$.
Moreover, let $L = \bigcup_{i=1}^k L^i$ and $R = \bigcup_{i=1}^k R^i$.
We add the following edges:
\begin{itemize}
    \item For every $i \in [k]$, $x \in \bigcup_{j = i}^k X^j$, we add the edge $\{\ell_1^i, x\}$.
    \item For every $i \in [k-1]$, $\ell \in L^i$, we add the edge $\braces{x_{2iB+B+1}, \ell}$.
    Moreover, we add an edge between $x_{2kB+B}$ and every vertex of $L^k$.
    \item For every $i\in [k]$, $y \in \bigcup_{j = 1}^i Y^j$, we add the edge $\{r_B^i, y\}$.
    \item For every $i \in [2,k]$, $r\in R^i$, we add the edge $\{y_{2iB-B}, r\}$. 
    Moreover, we add an edge between $y_1$ and every vertex of $R^1$.
\end{itemize}
For an illustration of the boundary and delimiter cliques, see \cref{fig:reduction_no_items}.

\proofsubparagraph*{Item cliques.}
For element $a_i \in A$, we  construct a  clique $A^i$ on vertex set $\setdef{a^{i,L}_j,a^{i,R}_j}{j \in [a_i]}$ of size $2a_i$.
We denote the set of vertices $\setdef{a^{i,L}_j}{j \in [a_i]}$ by $A^{i,L}$ and 
the set of vertices $\setdef{a^{i,R}_j}{j \in [a_i]}$ by $A^{i,R}$.
We add edges $\{x_{2kB+B},a \}$ for every $a \in \bigcup_{i\in [k]}A^{i,L}$ and
edges $\{y_1,a\}$ for every $a \in \bigcup_{i\in [k]}A^{i,R}$.
For an illustration, see \cref{fig:reduction_items}.

This concludes the construction of $G$.
\cref{fig:smallExample} illustrates an example of an ordering of stretch $b$ obtained by a YES-instance of \UBP.
In the following, we prove the equivalence of $(G,b)$ to the initial instance of \UBP.

\begin{figure}[ht]
    \centering
    \centerline{ \includegraphics[width=1.05\textwidth]{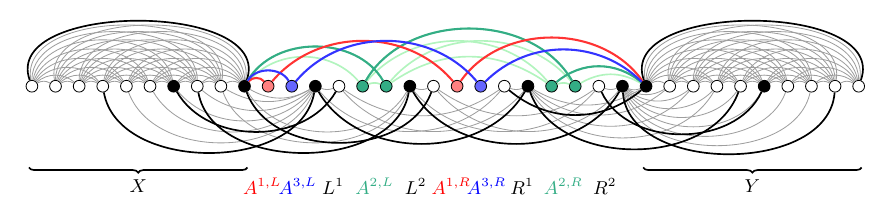}}
    \caption{For the instance $(\{a_1,a_2,a_3\}, 2)$ of {\UBP} with $a_1=1$, $a_2=2$, and $a_3=1$
    the figure shows the graph $G$ from the corresponding instance $(G,9)$ of \BW.
    Here the ordering of the vertices of $G$ with stretch $9$ corresponds to the solution of $(\{a_1,a_2,a_3\}, 2)$ in which $a_1,a_3$ are placed in the first bin and $a_2$ in the second.}
    \label{fig:smallExample}
\end{figure}

\begin{lemmarep}
    If $(A,k)$ is a YES-instance of \UBP, then $(G,b)$ is a YES-instance of \BW.
\end{lemmarep}

\begin{proof}
    Let $(\mathcal{I}_1, \ldots, \mathcal{I}_k)$ be a partition of $[n]$, where $\sum_{j \in \mathcal{I}_i} a_j = B$, for all $i \in [k]$.
    For all $i \in [k]$, set $\mathcal{S}_i = \bigcup_{j \in \mathcal{I}_i} A^{j,L}$ and 
    $\mathcal{S}'_i = \bigcup_{j \in \mathcal{I}_i} A^{j,R}$.
    Notice that $|\mathcal{S}_i| = |\mathcal{S}_i'| = B$,
    since $|A^{j,L}| = |A^{j,R}| = a_j$.
    
    For $i \in [k]$, we define $\pi_{\mathcal{S}_i} \colon \mathcal{S}_i \to [|\mathcal{S}_i|]$  to be an arbitrary ordering 
    of $\mathcal{S}_i$ and $\pi_{\mathcal{S}'_i} \colon \mathcal{S}'_i \to [|\mathcal{S}'_i|]$ 
    to be an arbitrary ordering of $\mathcal{S}'_i$.
    In order to define the ordering of the vertices of the graph $\pi \colon V(G) \to [|V(G)|]$,
    we will make use of these orderings.
    In particular, define $\pi$ as follows:
    \begin{itemize}
        \item First, let $\pi (x_i) = i$ for all $x_i \in V(X)$.
        \item For every $i \in [k]$, let
        \begin{itemize}
            \item $\pi (v) = 2kB + B + (i-1) 2B + \pi_{\mathcal{S}_i} (v)$ for every $v \in \mathcal{S}_i$,
            \item $\pi (\ell^i_j) = 2kB + 2B + (i-1) 2B + j = 2kB + 2iB + j$, for every $\ell^i_j \in L^i$, 
            \item $\pi (v) = 4kB + B + (i-1) 2B + \pi_{\mathcal{S}'_i} (v)$ for every $v \in \mathcal{S}'_i$,
            \item $\pi (r^i_j) = 4kB + 2B + (i-1) 2B + j = 4kB + 2iB + j$, for every $r^i_j \in R^i$. 
        \end{itemize}
        \item Lastly, let $\pi (y_i) = 6kB + B + i$ for all $y_i \in V(Y)$.
    \end{itemize}

    On a high level, we place first the vertices of $X$,
    then the vertices of $\mathcal{S}_1$, followed by the vertices of $L^1$,
    then of $\mathcal{S}_2$, and so on. 
    After the vertices of $L^k$, we proceed in an analogous way, by placing the vertices of $\mathcal{S}'_1$,
    followed by the vertices of $R^1$, then of $\mathcal{S}'_2$, and so on.
    After the vertices of $R^k$, we place the vertices of $Y$.
    The ordering of the vertices within $X, Y, L^i$, and $R^i$ is dictated by their index,
    whereas the ordering of $\mathcal{S}_i$ and $\mathcal{S}'_i$ is dictated by their corresponding ordering $\pi_{\mathcal{S}_i}$ and $\pi_{\mathcal{S}'_i}$.
    See \cref{fig:smallExample} for an illustration of the ordering by means of a small example instance of \UBP.
    
    It remains to show that $\stretchEdge{\pi}(e) \leq b = 2kB + B - 1$ for all $e \in E(G)$.
    In order to do so, we will consider all the possible cases regarding the endpoints of $e$.

    We first consider the edges $e \in E(X) \cup E(Y)$.
    Notice that the vertices of each of the cliques $X$ and $Y$ have been placed consecutively 
    in $\pi$, thus any edge $e \in E(X) \cup E(Y)$ has stretch at most equal to 
    the size of these cliques minus one, therefore $\stretchEdge{\pi}(e) \leq 2kB + B - 1 = b$.

    Now we consider the edges $e = \braces{x, \ell}$, where $x \in V(X)$ and $\ell \in L$.
    Notice that $\pi(x) < \pi(\ell)$.
    Suppose that $x \in X^i$ for $i \in [k]$,
    and set $x^i_{\min}$ and $x^i_{\max}$ such that $\pi (x^i_{\min}) = \minofset{\pi}{X^i}$ and $\pi(x^i_{\max}) = \maxofset{\pi}{X^i}$.
    Notice that it holds $x^i_{\min} = x_{2iB-B+2}$.
    Moreover $x^i_{\max} = x_{2iB+B+1}$ if $i \neq k$,
    otherwise $x^k_{\max} = x_{2kB+B}$, and $\pi(x^i_{\max}) \geq 2iB + B$ for all $i \in [k]$ follows.
    Notice that if $x \neq x^i_{\max}$, then $\maxofset{\pi}{N(x) \cap L} = \pi(\ell^i_1)$,
    since then $N(x) \cap L = \setdef{\ell^j_1}{j \in [i]}$.
    In that case, it holds that $\pi(\ell) - \pi(x) \leq \pi(\ell^i_1) - \pi(x^i_{\min}) = 2kB + 2iB + 1 - (2iB - B + 2) = 2kB + B - 1 = b$.
    Alternatively, if $x = x^i_{\max}$, then notice that $\maxofset{\pi}{N(x) \cap L} = \pi (\ell^i_B)$, since then $N(X) \cap L = \bigcup_{j=1}^i L^j$.
    In that case, it holds that $\pi(\ell) - \pi(x) \leq \pi(\ell^i_B) - \pi(x) \leq 2kB + 2iB + B - (2iB + B) = 2kB < b$.

    Next we consider the edges $e = \braces{y, r}$, where $y \in V(Y)$ and $r \in R$.
    Notice that $\pi(r) < \pi(y)$.
    Suppose that $y \in Y^i$ for $i \in [k]$,
    and set $y^i_{\max}$ and $y^i_{\min}$ such that $\pi (y^i_{\max}) = \maxofset{\pi}{Y^i}$ and $\pi(y^i_{\min}) = \minofset{\pi}{Y^i}$.
    Notice that it holds that $y^i_{\max} = y_{2iB+B-1}$.
    Moreover, $y^i_{\min} = y_{2iB-B}$ if $i \neq 1$,
    otherwise $y^1_{\min} = y_1$, and $\pi (y^i_{\min}) \leq 6kB + B + 2iB - B$ for all $i \in [k]$ follows.
    Notice that if $y \neq y^i_{\min}$, then $\minofset{\pi}{N(y) \cap R} = \pi(r^i_B)$,
    since $N(y) \cap R = \setdef{r^j_B}{j \in [i,k]}$.
    In that case, it holds that $\pi(y) - \pi(r) \leq \pi(y^i_{\max}) - \pi(r^i_B) = 6kB + B + 2iB + B - 1 - (4kB + 2iB + B) =2kB + B - 1 = b$.
    Alternatively, if $y = y^i_{\min}$, then notice that $\minofset{\pi}{N(y) \cap R} = \pi (r^i_1)$, since then $N(y) \cap R = \bigcup_{j=i}^k R^j$.
    In that case, it holds that $\pi(y) - \pi(r) \leq \pi(y) - \pi(r^i_1) \leq 6kB + B + 2iB - B - (4kB + 2iB + 1) = 2kB - 1 < b$.

    Now we consider the edges $e = \{ u,v \}$ where $ u,v \in L^i \cup R^i$ for some $i \in [k]$. 
    Notice that the single edge of maximum stretch out of those edges is $\braces{\ell^i_1, r^i_B}$,
    therefore $\stretchEdge{\pi}(e) \leq \pi(r^i_B) - \pi(\ell^i_1) = 4kB + 2iB + B - (2kB + 2iB + 1) = 2kB + B - 1 = b$.

    Lastly, consider the edges that are incident to the vertices of $A^j$ for some $j \in [n]$.
    We consider three cases.
    
    First, assume that both $u, v \in V(A^j)$.
    Then, notice that there exists $i \in [k]$ such that $A^{j,L} \subseteq \mathcal{S}_i$ and $A^{j,R} \subseteq \mathcal{S}'_j$.
    In that case, $\stretchEdge{\pi}(e) \leq \maxofset{\pi}{\mathcal{S}'_j} - \minofset{\pi}{\mathcal{S}_i}$.
    It holds that $\maxofset{\pi}{\mathcal{S}'_j} = \pi(r^i_1) - 1 = 4kB + 2iB$,
    while $\minofset{\pi}{\mathcal{S}_i} = \pi(\ell^{i-1}_B) + 1 = 2kB + 2iB - B + 1$ if $i \neq 1$ and $\minofset{\pi}{\mathcal{S}_1} = 2kB + B + 1$ otherwise, i.e., $\minofset{\pi}{\mathcal{S}_i} = 2kB + 2iB - B + 1$ for all $i \in [k]$.
    Consequently, it holds that $\stretchEdge{\pi}(e) \leq 4kB + 2iB - (2kB + 2iB - B + 1) = 2kB + B - 1 = b$.

    For the second case, assume that $e = \braces{u, x_{2kB+B}}$, where $u \in A^{j,L}$ for some $j \in [n]$.
    Notice that then $\pi(x_{2kB+B}) < \pi(u)$,
    while $\pi(u) \leq \pi(\ell^k_1) - 1 = 4kB$,
    therefore $\stretchEdge{\pi}(e) = \pi(u) - \pi(x_{2kB+B}) \leq 4kB - (2kB + B) = 2kB - B < b$.

    Finally, assume that $e = \braces{u, y_1}$, where $u \in A^{j,R}$ for some $j \in [n]$.
    Notice that then $\pi(u) < \pi(y_1)$,
    while $\pi(u) \geq \pi(\ell^k_B) + 1 = 4kB + B + 1$,
    therefore $\stretchEdge{\pi}(e) = \pi(y_1) - \pi(u) \leq 6kB + B + 1 - (4kB + B + 1) = 2kB < b$.
    
    Since it holds that $\stretchEdge{\pi}(e) \leq b$ for all $e \in E(G)$,
    it follows that $(G, b)$ is a YES-instance of \BW.
\end{proof}

\begin{lemmarep}
    If $(G,b)$ is a YES-instance of \BW, then $(A,k)$ is a YES-instance of \UBP.
\end{lemmarep}

\begin{proof}
    Let $\pi \colon V(G) \to [|V(G)|]$ be an ordering of the vertices of $G$ such that $\stretchEdge{\pi}(e) \leq b$,
    for all $e \in E(G)$.
    We first prove that the vertices of $X$ and $Y$ appear at the start and the end of the ordering respectively, or vice versa.

    \begin{claim}\label{claim:boundaries}
        One of the following holds:
        \begin{itemize}
            \item either $\pi (V(X)) = [b+1]$ and $\pi (V(Y)) = [|V(G)| - b, |V(G)|]$,
            \item or $\pi (V(Y)) = [b+1]$ and $\pi (V(X)) = [|V(G)| - b, |V(G)|]$.
        \end{itemize}
    \end{claim}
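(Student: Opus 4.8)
The plan is to first pin the two large boundary cliques $X$ and $Y$ to blocks of consecutive positions, and then argue that every other vertex is squeezed into the gap between these two blocks, forcing the blocks to sit at the very ends of the ordering. First I would observe that, since $X$ and $Y$ are cliques with $|V(X)|=|V(Y)|=2kB+B=b+1$, in any ordering $\pi$ with $\str(\pi)\le b$ the image $\pi(V(X))$ consists of $b+1$ \emph{consecutive} integers, and likewise for $\pi(V(Y))$: writing $m=\min\pi(V(X))$ and $M=\max\pi(V(X))$, injectivity gives $M-m\ge b$ while the edge joining the vertices placed at $m$ and $M$ gives $M-m\le b$, so $\pi(V(X))=[m,m+b]$. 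Since $V(X)\cap V(Y)=\emptyset$, these two blocks are disjoint intervals, hence one precedes the other; swapping the names of $X$ and $Y$ (which produces the two alternatives in the statement), I may assume $\pi(V(X))=[p,p+b]$ and $\pi(V(Y))=[q,q+b]$ with $p+b<q$.

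\textbf{Nothing non-boundary escapes a block.} Every vertex outside $V(X)\cup V(Y)$ lies in $L\cup R\cup\bigcup_i V(A^i)$. I would first note that every vertex of $R$ and of $\bigcup_i A^{i,R}$ has a neighbour in $V(Y)$: each vertex of $R^1$ and each vertex of $\bigcup_i A^{i,R}$ is adjacent to $y_1$, and for $i\ge2$ each vertex of $R^i$ is adjacent to $y_{2iB-B}\in V(Y)$. Such a vertex therefore sits at a position $\ge q-b\ge p+1$ and so cannot lie left of $[p,p+b]$. Now suppose, toward a contradiction, that some $w\in L^i$ (or $w\in A^{j,L}$) has $\pi(w)<p$. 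Since $L^i\cup R^i$ (resp.\ $A^j$) is a clique, every vertex of $R^i$ (resp.\ $A^{j,R}$) sits at a position $\le\pi(w)+b\le p+b-1$; combined with the lower bound $\ge q-b\ge p+1$ just obtained, all of $R^i$ (resp.\ $A^{j,R}$) would be confined to $[p+1,p+b-1]\subseteq[p,p+b]$ — impossible, since that block is already entirely occupied by $V(X)$, while $R^i\ne\emptyset$ (resp.\ $A^{j,R}\ne\emptyset$, as $a_j\ge1$). Hence nothing non-boundary lies left of $X$'s block. A symmetric argument handles the other side: every vertex of $L$ and of $\bigcup_i A^{i,L}$ has a neighbour in $V(X)$ (each vertex of $L^k$ and each vertex of $\bigcup_i A^{i,L}$ is adjacent to $x_{2kB+B}$, and for $i<k$ each vertex of $L^i$ is adjacent to $x_{2iB+B+1}\in V(X)$), so it sits at a position $\le p+2b<q+b+1$ and cannot lie right of $[q,q+b]$; and a vertex of $R^i$ or $A^{j,R}$ placed at a position $>q+b$ would, via the clique $L^i\cup R^i$ (resp.\ $A^j$), force all of $L^i$ (resp.\ $A^{j,L}$) into positions $[q+1,p+2b]\subseteq[q,q+b]$, the block occupied by $V(Y)$ — again a contradiction.

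\textbf{Conclusion, and the main obstacle.} Putting the two halves together, every vertex of $G$ receives a position in $[p,q+b]$, so positions $1,\dots,p-1$ and $q+b+1,\dots,|V(G)|$ carry no vertex; since $\pi$ is a bijection onto $[|V(G)|]$, this is possible only if $p=1$ and $q+b=|V(G)|$, i.e.\ $\pi(V(X))=[b+1]$ and $\pi(V(Y))=[|V(G)|-b,|V(G)|]$ (the reversed ordering of the two blocks giving the other alternative). The main obstacle is the middle step: one has to spot the correct dichotomy for a non-boundary vertex — either it is itself adjacent to a boundary clique, and is thereby trapped within distance $b$ of the corresponding block, or it belongs to a ``doubled'' clique ($L^i\cup R^i$ or $A^j$) whose complementary half is adjacent to a boundary clique, so that moving the vertex past a block would have to cram that complementary half into the $b+1$ positions reserved for the boundary clique. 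Once this is seen, the remaining verifications are routine bookkeeping with $b=2kB+B-1$.
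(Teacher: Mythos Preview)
Your argument is correct and complete in substance, but it takes a noticeably different route from the paper's. The paper dispatches the claim in two lines: once each of $X$ and $Y$ is known to occupy a block of $b+1$ consecutive positions, it observes that $G-V(Z)$ is \emph{connected} for $Z\in\{X,Y\}$, so if $Z$'s block were not at an extreme end there would be an edge of $G-V(Z)$ straddling the block, hence of stretch at least $b+1$. You instead carry out an explicit case analysis on the four vertex types $L$, $R$, $A^{j,L}$, $A^{j,R}$, using that each is either adjacent to a boundary clique or sits in a ``doubled'' clique whose other half is; this is longer but equally valid, and in fact your analysis is essentially a hands-on verification of the connectivity the paper appeals to.

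One small presentational point: your ``swapping the names of $X$ and $Y$'' to justify the WLOG is not quite right, since $X$ and $Y$ play asymmetric roles in $G$ (different vertices are adjacent to each). What actually justifies the reduction to the case $\pi(V(X))<\pi(V(Y))$ is passing to the reverse ordering $\pi'(v)=|V(G)|+1-\pi(v)$, which has the same stretch; your entire adjacency argument then applies verbatim to $\pi'$. With that amendment the proof is fine.
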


    \begin{claimproof}
        First, observe that since $X$ (respectively, $Y$) is a clique of size $b+1$,
        if its vertices were not occupying $b+1$ consecutive positions in $\pi$,
        there would exist an edge of stretch more than $b$, which is a contradiction.
        Consequently, the vertices of $X$ (respectively, $Y$) occupy $b+1$ consecutive positions in $\pi$.

        Now, assume that the vertices of $Z \in \braces{X, Y}$ appear neither at the start nor at the end of the ordering $\pi$.
        Then, there exist vertices placed in the ordering $\pi$ both before and after the positions occupied by the vertices of $Z$.
        Notice that $G' = G - V(Z)$ is connected, and let $v_1, v_2 \in V(G')$ such that $\braces{v_1, v_2} \in E(G')$
        and $\pi(v_1) < \pi(z) < \pi(v_2)$, for every $z \in V(Z)$.
        Since $|V(Z)| = b+1$, it follows that $\pi(v_2) - \pi(v_1) > b$, which is a contradiction.
    \end{claimproof}

    As a consequence of \cref{claim:boundaries}, it follows that the remaining $|V(G)| - 2 (b+1) = 2kB + 2 \sum_{i=1}^n a_i = 4kB$
    vertices of the delimiter and item cliques are positioned between the vertices of the boundary cliques in $\pi$.
    Without loss of generality, assume that the first case of \cref{claim:boundaries} holds,
    thus the vertices of $X$ appear at the start and the vertices of $Y$ at the end of ordering $\pi$
    (if that is not the case, all arguments made in the following for $\pi$ hold for the reverse ordering
    $\pi' (x) = |V(G)| + 1 - \pi(x)$ and hence the statement follows using a symmetric argument).
    Next, we will prove that the vertices of $X^0, X^1, X^2, \ldots, X^k$ and of $Y^1, \ldots, Y^k, Y^{k+1}$ appear sequentially in $\pi$,
    as depicted in \cref{fig:boundary_vertices_ordering}.
    Moreover, we will specify the positions of vertices $\ell^i_1$ and $r^i_B$, for $i \in [k]$.

    \begin{figure}[ht]
\centering
\begin{tikzpicture}[scale=0.8, transform shape]

\node[vertex] () at (3,5) {};
\node[] () at (3.5,5) {$\cdots$};
\node[] () at (3.5,5.5) {$X^0$};
\node[vertex] () at (4,5) {};
\draw [decorate,decoration = {brace,mirror,raise=5pt},thick] (3,5) --  (4,5) node[pos=0.5,below=0.25]{$B + 1$};

\node[vertex] () at (4.5,5) {};
\node[] () at (5,5) {$\cdots$};
\node[] () at (5,5.5) {$X^1$};
\node[vertex] () at (5.5,5) {};
\draw [decorate,decoration = {brace,mirror,raise=5pt},thick] (4.5,5) --  (5.5,5) node[pos=0.5,below=0.25]{$2B$};

\node[] () at (6,5) {$\cdots$};

\begin{scope}[shift={(2,0)}]
    \node[vertex] () at (4.5,5) {};
    \node[] () at (5,5) {$\cdots$};
    \node[] () at (5,5.5) {$X^k$};
    \node[vertex] () at (5.5,5) {};
    \draw [decorate,decoration = {brace,mirror,raise=5pt},thick] (4.5,5) --  (5.5,5) node[pos=0.5,below=0.25]{$2B-1$};
\end{scope}

\node[vertex] () at (8,5) {};
\node[] () at (9,5) {$\cdots$};
\node[] () at (9,5.5) {\tiny delimiter/item cliques};
\node[vertex] () at (10,5) {};
\draw [decorate,decoration = {brace,mirror,raise=5pt},thick] (8,5) --  (10,5) node[pos=0.5,below=0.25]{$4kB$};

\begin{scope}[shift={(6,0)}]
    \node[vertex] () at (4.5,5) {};
    \node[] () at (5,5) {$\cdots$};
    \node[] () at (5,5.5) {$Y^1$};
    \node[vertex] () at (5.5,5) {};
    \draw [decorate,decoration = {brace,mirror,raise=5pt},thick] (4.5,5) --  (5.5,5) node[pos=0.5,below=0.25]{$3B-1$};
\end{scope}

\node[] () at (12,5) {$\cdots$};

\begin{scope}[shift={(8,0)}]
    \node[vertex] () at (4.5,5) {};
    \node[] () at (5,5) {$\cdots$};
    \node[] () at (5,5.5) {$Y^k$};
    \node[vertex] () at (5.5,5) {};
    \draw [decorate,decoration = {brace,mirror,raise=5pt},thick] (4.5,5) --  (5.5,5) node[pos=0.5,below=0.25]{$2B$};
\end{scope}

\begin{scope}[shift={(11.5,0)}]
    \node[vertex] () at (2.5,5) {};
    \node[] () at (2.75,5.5) {$Y^{k+1}$};
\end{scope}

\end{tikzpicture}
\caption{Layout of $X^0, \ldots, X^k, Y^1, \ldots, Y^{k+1}$ in $\pi$, as derived in~\cref{claim:structure1}.}
\label{fig:boundary_vertices_ordering}
\end{figure}

    \begin{claim}\label{claim:structure1}
        Regarding the vertices of $X$ and $Y$,
        the following hold:
        \begin{itemize}
            \item $\pi(X^0) = [B+1]$,
            \item for every $i \in [k]$, $\pi(X^i) = [w^X_i, w^X_i + |X^i| - 1]$, where $w^X_i = B+1 + (i-1) 2B + 1 = 2iB - B + 2$, 
            \item $\pi(Y^1) = [6kB + B + 1, 6kB + 4B - 1]$,
            \item for every $i \in [2,k+1]$, $\pi(Y^i) = [w^Y_i, w^Y_i + |Y^i| - 1]$, where $w^Y_i = 6kB + 4B - 1 + (i-2) 2B + 1 = 6kB + 2iB$.
        \end{itemize}
        Additionally, for all $i \in [k]$, it holds that
        \begin{itemize}
            \item $\pi (\ell^i_1) = \minofset{\pi}{L^i} = 2kB + B + B + 2B \cdot (i-1) + 1 = 2kB + 2iB + 1$ 
            and
            \item $\pi (r^i_B) = \maxofset{\pi}{R^i} = 2kB + B + 2kB + i \cdot 2B = 4kB + 2iB + B$.
        \end{itemize}
    \end{claim}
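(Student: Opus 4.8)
The plan is to first pin down, for every $i\in[k]$, the positions of the two ``anchor'' vertices $\ell^i_1$ and $r^i_B$, and then to read off the block structure of $X^0,\dots,X^k$ and of $Y^1,\dots,Y^{k+1}$ by an elementary packing count. Throughout, assume (without loss of generality, by the argument already given) the first alternative of~\cref{claim:boundaries}, so $\pi(V(X))=[1,2kB+B]$, $\pi(V(Y))=[6kB+B+1,8kB+2B]$, and every delimiter and item vertex lies in the middle window $M=[2kB+B+1,6kB+B]$. Fix $i\in[k]$. The vertex $\ell^i_1$ is adjacent to all $2B(k-i+1)-1$ vertices of $X^i\cup\dots\cup X^k$; these lie in $V(X)$, hence at positions at most $2kB+B$, and thus strictly to the left of $\ell^i_1\in M$. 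Since each of them is within distance $b$ of $\pi(\ell^i_1)$, they occupy distinct positions of $[\pi(\ell^i_1)-b,\,2kB+B]$, which forces $\pi(\ell^i_1)\le 2kB+2iB+1$. Symmetrically, $r^i_B$ is adjacent to all $2iB+B-1$ vertices of $Y^1\cup\dots\cup Y^i\subseteq V(Y)$, which lie strictly to the right of $r^i_B\in M$, forcing $\pi(r^i_B)\ge 4kB+2iB+B$. As $\ell^i_1$ and $r^i_B$ belong to a common clique, the edge $\{\ell^i_1,r^i_B\}$ has stretch at most $b$, so $\pi(r^i_B)-\pi(\ell^i_1)\le b$; since $(4kB+2iB+B)-(2kB+2iB+1)=2kB+B-1=b$, all three inequalities are tight, giving $\pi(\ell^i_1)=2kB+2iB+1$ and $\pi(r^i_B)=4kB+2iB+B$. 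The two anchors then being at distance exactly $b$, the whole clique $L^i\cup R^i$ is confined to $[2kB+2iB+1,\,4kB+2iB+B]$, whence $\ell^i_1=\minofset{\pi}{L^i}$ and $r^i_B=\maxofset{\pi}{R^i}$, which settles the ``additionally'' part of the statement.

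It remains to extract the blocks. From $\pi(\ell^i_1)=2kB+2iB+1$ and the adjacency above, every vertex of $X^i\cup\dots\cup X^k$ has position at least $\pi(\ell^i_1)-b=2iB-B+2$, so $\pi(X^i\cup\dots\cup X^k)\subseteq[2iB-B+2,\,2kB+B]$ for every $i\in[k]$. For $i=1$ this interval has exactly $2kB-1=\lvert X^1\cup\dots\cup X^k\rvert$ integers, so the inclusion is an equality, and therefore $\pi(X^0)=\pi(V(X))\setminus\pi(X^1\cup\dots\cup X^k)=[1,B+1]$. A downward induction on $i$ now fixes each block: for $i=k$ the interval $[2kB-B+2,2kB+B]$ has exactly $\lvert X^k\rvert=2B-1$ integers, so $\pi(X^k)$ equals it; and if $X^{i+1}\cup\dots\cup X^k$ has already been shown to fill $[2iB+B+2,2kB+B]$, then the inclusion for $X^i\cup\dots\cup X^k$ confines $X^i$ to the remaining $[2iB-B+2,2iB+B+1]$, an interval with exactly $\lvert X^i\rvert=2B$ integers. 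This gives $\pi(X^i)=[w^X_i,w^X_i+\lvert X^i\rvert-1]$ with $w^X_i=2iB-B+2$. The $Y$-side is entirely symmetric: from $\pi(r^i_B)=4kB+2iB+B\le 6kB+B$ and the adjacency of $r^i_B$ to $Y^1\cup\dots\cup Y^i$, every vertex of $Y^1\cup\dots\cup Y^i$ has position at most $\pi(r^i_B)+b=6kB+2iB+2B-1$, so $\pi(Y^1\cup\dots\cup Y^i)\subseteq[6kB+B+1,\,6kB+2iB+2B-1]$, an interval with exactly $2iB+B-1=\lvert Y^1\cup\dots\cup Y^i\rvert$ integers, hence equality for every $i\in[k]$; taking differences of consecutive equalities yields $\pi(Y^1)=[6kB+B+1,6kB+4B-1]$, $\pi(Y^i)=[6kB+2iB,6kB+2iB+2B-1]$ for $i\in[2,k]$ (so $w^Y_i=6kB+2iB$), and $\pi(Y^{k+1})=\{8kB+2B\}$.

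The only delicate point is the anchor step: a clique on $2B$ vertices need \emph{not} occupy $2B$ consecutive positions---it may be spread over a window of $b+1$ positions---so one must resist bounding $\pi(r^i_B)-\pi(\ell^i_1)$ by $2B-1$. The correct picture is that delimiter clique $i$ behaves as a rigid rod of length at most $b$ whose left end $\ell^i_1$ is pushed as far right as the cliques $X^i,\dots,X^k$ allow and whose right end $r^i_B$ is pushed as far left as $Y^1,\dots,Y^i$ allow; the two one-sided packing bounds happen to meet exactly at a rod of length $b$, which rigidifies everything. After that, the rest is arithmetic bookkeeping with the intervals.
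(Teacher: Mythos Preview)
Your proof is correct and follows essentially the same approach as the paper: both pin down $\pi(\ell^i_1)$ and $\pi(r^i_B)$ by a tight packing count using the neighborhoods $X^i\cup\dots\cup X^k$ and $Y^1\cup\dots\cup Y^i$, together with the edge $\{\ell^i_1,r^i_B\}$, and then derive the block positions by counting. The only cosmetic difference is that the paper phrases the anchor step as a single three-link chain $x^i_{\min}\text{--}\ell^i_1\text{--}r^i_B\text{--}y^i_{\max}$ of total length at least $3(b+1)-2$ (forcing each link to be exactly $b$), whereas you split it into two one-sided packing bounds and then close with the edge constraint; the arithmetic is identical.
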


    \begin{claimproof}
        Let for $v_1, v_2 \in V(G)$, $g_\pi (v_1, v_2) = \abs{\pi(v_1) - \pi(v_2)} + 1$ denote the number of vertices
        positioned between $\pi(v_1)$ and $\pi(v_2)$ including $v_1$ and $v_2$.
        Notice that if $e = \braces{v_1, v_2} \in E(G)$, then $g_\pi (v_1, v_2) \leq b+1$, since $\stretchEdge{\pi}(e) \leq b$.

        Fix some $i \in [k]$.
        Moreover, let $x^i_{\min}$ be the vertex such that $\pi (x^i_{\min}) = \minofset{\pi}{N(\ell^i_1) \cap V(X)}$,
        i.e., $x^i_{\min}$ denotes the leftmost positioned neighbor of $\ell^i_1$.
        Analogously, let $y^i_{\max}$ be the vertex such that $\pi (y^i_{\max}) = \maxofset{\pi}{N(r^i_B) \cap V(Y)}$,
        i.e., $y^i_{\max}$ denotes the rightmost positioned neighbor of $r^i_B$. 
        
        Notice that for $\ell^i_1$ it holds that $N(\ell^i_1) \cap V(X) = \bigcup_{j = i}^k X^j$,
        therefore $\ell^i_1$ has exactly $(k-i+1) 2B - 1$ neighbors belonging to $X$.
        On the other hand, for $r^i_B$ it holds that $N(r^i_B) \cap V(Y) = \bigcup_{j = 1}^i Y^j$,
        therefore $r^i_B$ has exactly $3B - 1 + (i-1) 2B = 2iB + B - 1$ neighbors belonging to $Y$.
        Consequently, it holds that $g_\pi (x^i_{\min}, y^i_{\max}) \geq (k-i+1) 2B - 1 + 2iB + B - 1 + 4kB = 6kB + 3B - 2 = 3(b+1) - 2$.
        Furthermore, $g_\pi (x^i_{\min}, y^i_{\max}) \leq g_\pi (x^i_{\min}, \ell^i_1) + g_\pi (\ell^i_1, r^i_B) + g_\pi (r^i_B, y^i_{\max}) - 2$,
        since at least two vertices, namely $\ell^i_1$ and $r^i_B$, are counted more than once.
        Since the edges $e^i_1 = \braces{x^i_{\min}, \ell^i_1}$, $e^i_2 = \braces{\ell^i_1, r^i_B}$ and $e^i_3 = \braces{r^i_B, y^i_{\max}}$
        are present in $G$, it follows that $g_\pi (x^i_{\min}, \ell^i_1) + g_\pi (\ell^i_1, r^i_B) + g_\pi (r^i_B, y^i_{\max}) - 2 \leq 3 (b+1) - 2$.
        Hence, it follows that $g_\pi (x^i_{\min}, \ell^i_1) = g_\pi (\ell^i_1, r^i_B) = g_\pi (r^i_B, y^i_{\max}) = b+1$,
        while $g_\pi (x^i_{\min}, y^i_{\max}) = 3 (b+1) - 2$,
        thus only vertices $\ell^i_1$ and $r^i_B$ are counted twice in the previous sum.
      
        Consequently, it holds that $\pi (\bigcup_{j=i}^k X^j) = [2kB + B - |\bigcup_{j=i}^k X^j| + 1, 2kB + B]$,
        while $\pi (\bigcup_{j=1}^i Y^i) = [6kB + B + 1, 6kB + B + 1 + |\bigcup_{j=1}^i Y^i| - 1]$,
        i.e., the vertices of $N(\ell^i_1) \cap V(X)$ appear in the rightmost positions occupied by the vertices of $X$,
        while the vertices of $N(r^i_B) \cap V(Y)$ appear in the leftmost positions occupied by the vertices of $Y$.
        Having determined $\pi (x^i_{\min})$ and $\pi(y^i_{\max})$, as well as that $\pi (\ell^i_1) - \pi (x^i_{\min}) = \pi (y^i_{\max}) - \pi (r^i_B) = b$,
        the values of $\pi (\ell^i_1)$ and $\pi (r^i_B)$ follow.
        Since $\pi (r^i_B) - \pi (\ell^i_1) = b$ and the vertices of $L^i \cup R^i$ form a clique,
        it follows that $\pi (\ell^i_1) = \minofset{\pi}{L^i}$ and $\pi (r^i_B) = \maxofset{\pi}{R^i}$;
        if that were not the case, then there would exist an edge of stretch larger than $b$.

        Lastly, since $N(\ell^i_1) \cap V(X) = (N(\ell^{i+1}_1) \cap V(X)) \cup X^i$,
        and $N(r^{i+1}_B) \cap V(Y) = (N(r^i_B) \cap V(Y)) \cup Y^{i+1}$,
        the vertices of $X^i$ and $Y^i$ are positioned as in \cref{fig:boundary_vertices_ordering}.
    \end{claimproof}

    Next, we prove that all vertices of $L^i$ (respectively, $R^i$) appear before the vertices of $L^{i+1}$ (respectively, $R^{i+1}$).

    \begin{claim}\label{claim:structure2}
        For $i \in [k]$ and $j \in [B]$, it holds that
        \begin{itemize}
            \item $\pi (\ell^i_1) \leq \pi (\ell^i_j) < \pi (\ell^i_1) + |X^i|$ and
            \item $\pi (r^i_B) - |Y^i| < \pi (r^i_j) \leq \pi (r^i_B)$.
        \end{itemize}
    \end{claim}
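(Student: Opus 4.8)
The plan is to obtain all four inequalities of \cref{claim:structure2} from \cref{claim:structure1} together with the adjacencies hard-wired into the construction, treating the boundary indices $i=k$ (on the $L$-side) and $i=1$ (on the $R$-side) separately because the corresponding blocks $X^k$ and $Y^1$ have the asymmetric sizes $2B-1$ and $3B-1$.

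Two of the four bounds are immediate. Since $\ell^i_j\in L^i$ and $r^i_j\in R^i$, the identities $\pi(\ell^i_1)=\minofset{\pi}{L^i}$ and $\pi(r^i_B)=\maxofset{\pi}{R^i}$ proved in \cref{claim:structure1} already give $\pi(\ell^i_1)\le\pi(\ell^i_j)$ and $\pi(r^i_j)\le\pi(r^i_B)$.

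For the upper bound on $\pi(\ell^i_j)$, I would use that every vertex of $L^i$ is, by construction, adjacent to one designated vertex of $X^i$: write $\hat x^i$ for $x_{2iB+B+1}$ if $i\in[k-1]$ and for $x_{2kB+B}$ if $i=k$, so that $\{\hat x^i,\ell^i_j\}\in E(G)$ and $\hat x^i\in X^i$. By \cref{claim:structure1} the set $\pi(X^i)$ is an interval of length $|X^i|$ with left endpoint $2iB-B+2$, hence $\pi(\hat x^i)\le 2iB+B+1$ when $i<k$ and $\pi(\hat x^k)\le 2kB+B$. Since $\stretchEdge{\pi}(\{\hat x^i,\ell^i_j\})\le b=2kB+B-1$, we get $\pi(\ell^i_j)\le\pi(\hat x^i)+b$, which for $i<k$ is at most $2kB+2iB+2B<(2kB+2iB+1)+2B=\pi(\ell^i_1)+|X^i|$, and for $i=k$ is at most $4kB+2B-1<4kB+2B=\pi(\ell^k_1)+|X^k|$ (here using $\pi(\ell^k_1)=4kB+1$ and $|X^k|=2B-1$ from \cref{claim:structure1}).

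The lower bound on $\pi(r^i_j)$ is entirely symmetric: every vertex of $R^i$ is adjacent to a designated vertex $\hat y^i\in Y^i$, namely $y_{2iB-B}$ if $i\in[2,k]$ and $y_1$ if $i=1$. By \cref{claim:structure1} one has $\minofset{\pi}{Y^i}=6kB+2iB$ for $i\in[2,k]$ and $\minofset{\pi}{Y^1}=6kB+B+1$, so $\pi(r^i_j)\ge\pi(\hat y^i)-b\ge\minofset{\pi}{Y^i}-b$; this is at least $4kB+2iB-B+1>4kB+2iB-B=\pi(r^i_B)-|Y^i|$ for $i\in[2,k]$, and at least $4kB+2>4kB+1=\pi(r^1_B)-|Y^1|$ for $i=1$ (using $\pi(r^i_B)=4kB+2iB+B$ and $|Y^1|=3B-1$). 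The argument is thus essentially bookkeeping; the only place where one must be careful is the mismatch in block sizes at $i=k$ and $i=1$, which I consider the sole (minor) obstacle.
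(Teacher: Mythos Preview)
Your proposal is correct and follows essentially the same approach as the paper: both use the designated vertex $\hat x^i\in X^i$ (respectively $\hat y^i\in Y^i$) adjacent to all of $L^i$ (respectively $R^i$) together with the positions and identities established in \cref{claim:structure1}. The only cosmetic difference is that the paper avoids the explicit case split at $i=k$ and $i=1$ by writing the bound as $\pi(\ell^i_j)\le b+\pi(\hat x^i)=\pi(\ell^i_1)-\pi(x^i_{\min})+\pi(\hat x^i)\le \pi(\ell^i_1)+|X^i|-1$ (using $\pi(\ell^i_1)-\pi(x^i_{\min})=b$), whereas you plug in the numerical endpoint of each block; the content is identical.
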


    \begin{claimproof}
        Fix some $i \in [k]$.
        Set $x = x_{2iB+B+1}$ if $i \neq k$ and $x = x_{2kB+B}$ otherwise.
        Additionally, set $y = y_{2iB-B}$ if $i \neq 1$ and $y = y_1$ otherwise.
        Notice that $x \in X^i$ while $y \in Y^i$.
        Let $x^i_{\min}$ and $y^i_{\max}$ be vertices such that $\pi (x^i_{\min}) = \minofset{\pi}{N(\ell^i_1)}$ and $\pi (y^i_{\max}) = \maxofset{\pi}{N(r^i_B)}$.
        Due to \cref{claim:structure1}, it holds that $\pi (x^i_{\min}) \leq \pi (x) \leq \pi (x^i_{\min}) + |X^i| - 1$,
        while $\pi (y^i_{\max}) - |Y^i| + 1 \leq \pi (y) \leq \pi (y^i_{\max})$.

        Let $j \in [B]$.
        We will prove that $\pi (\ell^i_j) < \pi (\ell^i_1) + |X^i|$ as well as $\pi (r^i_B) - |Y^i| < \pi (r^i_j)$,
        since the other inequalities follow from \cref{claim:structure1}.

        For the first inequality, notice that $\braces{x, \ell^i_j} \in E(G)$,
        therefore it holds that $\pi(\ell^i_j) - \pi (x) \leq b$.
        Moreover, due to \cref{claim:structure1} it holds that $\pi(\ell^i_1) - \pi(x^i_{\min}) = b$.
        In that case, it follows that
        \begin{align*}
            \pi (\ell^i_j) &\leq b + \pi (x)\\
            &= \pi(\ell^i_1) - \pi(x^i_{\min}) + \pi (x)\\
            &\leq \pi(\ell^i_1) - \pi(x^i_{\min}) + \pi (x^i_{\min}) + |X^i| - 1\\
            &= \pi(\ell^i_1) + |X^i| - 1.
        \end{align*}

        As for the second inequality, notice that $\braces{y, r^i_j} \in E(G)$,
        therefore it holds that $\pi(y) - \pi(r^i_j) \leq b$.
        Moreover, due to \cref{claim:structure1} it holds that $\pi(y^i_{\max}) - \pi(r^i_B) = b$.
        In that case,
        \begin{align*}
            \pi (r^i_j) &\geq \pi (y) - b\\
            &= \pi (y) - \pi(y^i_{\max}) + \pi(r^i_B)\\
            &\geq \pi (y^i_{\max}) - |Y^i| + 1 - \pi(y^i_{\max}) + \pi(r^i_B)\\
            &= \pi(r^i_B) - |Y^i| + 1,
        \end{align*}
        and the statement follows.
    \end{claimproof}

    In the following, we will refer to the vertex placed at position $2kB+B$ as $\ell^0_1$,
    while we will additionally refer to $\ell^k_1$ as $r^0_B$.
    Notice that, due to Claims~\ref{claim:structure1} and~\ref{claim:structure2},
    it follows that the vertices of the item cliques are positioned in the following way:
    \begin{itemize}
        \item $B$ vertices are placed between positions $\pi (\ell^0_1)$ and $\pi (\ell^1_1)$,
        \item $B$ vertices are placed between positions $\pi (\ell^i_1)$ and $\pi (\ell^{i+1}_1)$,
        for every $i \in [k-1]$,
        since $B-1$ out of the $2B - 1$ free positions are occupied by vertices of $L^i$,
        \item $B$ vertices are placed between positions $\pi (\ell^k_1)$ and $\pi (r^1_B)$,
        since $2(B-1)$ out of the $3B - 2$ free positions are occupied by vertices of $L^k$ and $R^1$,
        \item $B$ vertices are placed between positions $\pi (r^i_B)$ and $\pi (r^{i+1}_B)$,
        for every $i \in [k-1]$,
        since $B-1$ out of the $2B - 1$ free positions are occupied by vertices of $R^i$.
    \end{itemize}
    Set $\mathcal{S}_i = \setdef{a \in \bigcup_{j \in [n]} A^{j,L}}{\pi (\ell^{i-1}_1) < \pi (a) < \pi (\ell^i_1)}$,
    while $\mathcal{S}'_i = \setdef{a \in \bigcup_{j \in [n]} A^{j,R}}{\pi (r^{i-1}_B) < \pi (a) < \pi (r^i_B)}$,
    for every $i \in [k]$.
    Notice that for every $a \in \bigcup_{j \in [n]} A^{j,R}$, $\pi(l^k_1) < \pi (a)$;
    if that were not the case, then $\minofset{\pi}{Y^1} - \pi (a) > 6kB + B + 1 - (4kB + 1) = 2kB + B > b$,
    while $\braces{a, y_1} \in E(G)$ and $y_1 \in Y^1$, which is a contradiction.
    Consequently, sets $\mathcal{S}_i$ partition $\bigcup_{j=1}^n A^{j,L}$, while sets $\mathcal{S}'_i$ partition $\bigcup_{j=1}^n A^{j,R}$.
    Next, define $\mathcal{I}_i = \setdef{j \in [n]}{\mathcal{S}_i \cap A^{j,L} \neq \varnothing}$
    to be the set of indices of the items of the {\UBP} instance for which vertices of the corresponding
    item clique are present in $\mathcal{S}_i$.
    
    \begin{claim}\label{claim:structure3}
        For any $i \in [k]$, it holds that $\mathcal{S}_i = \bigcup_{j \in \mathcal{I}_i} A^{j,L}$
        as well as $\mathcal{S}'_i = \bigcup_{j \in \mathcal{I}_i} A^{j,R}$.
    \end{claim}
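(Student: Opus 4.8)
The plan is to prove, for every $i \in [k]$, a single \emph{prefix statement}: with $J_i := \setdef{j \in [n]}{A^{j,L} \cap \bigcup_{m \le i} \mathcal{S}_m \neq \emptyset}$, the set $\bigcup_{m \le i} \mathcal{S}_m$ equals $\bigcup_{j \in J_i} A^{j,L}$, the set $\bigcup_{m \le i} \mathcal{S}'_m$ equals $\bigcup_{j \in J_i} A^{j,R}$, and $\sum_{j \in J_i} a_j = iB$. \cref{claim:structure3} then follows at once by subtracting the statement for $i-1$ from the one for $i$.

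To establish the prefix statement for a fixed $i$, I would first observe that $\bigcup_{m \le i} \mathcal{S}_m$ is a set of exactly $iB$ left-item vertices (the $\mathcal{S}_m$ partition $\bigcup_j A^{j,L}$ and each has $B$ vertices), all belonging to items indexed by $J_i$, so that $iB = \sum_{j \in J_i} |A^{j,L} \cap \bigcup_{m \le i} \mathcal{S}_m| \le \sum_{j \in J_i} a_j$. For the opposite inequality, fix $j \in J_i$ and let $u$ be the leftmost vertex of $A^{j,L}$ in $\pi$; since $A^{j,L}$ meets $\bigcup_{m \le i}\mathcal{S}_m$ and $u$ is its leftmost vertex, $u$ lies in some $\mathcal{S}_m$ with $m \le i$ and hence $\pi(u) < \pi(\ell^m_1) \le \pi(\ell^i_1)$. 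As $A^j$ is a clique and $\str(\pi) \le b$, every vertex $v$ of $A^j$ satisfies $\pi(v) \le \pi(u) + b < \pi(\ell^i_1) + b$, and by \cref{claim:structure1} this last quantity equals $(2kB+2iB+1) + (2kB+B-1) = 4kB+2iB+B = \pi(r^i_B)$. Since the right-item vertices lying in $\mathcal{S}'_{i+1}, \dots, \mathcal{S}'_k$ all occupy positions strictly larger than $\pi(r^i_B)$, this forces $A^{j,R} \subseteq \bigcup_{m \le i} \mathcal{S}'_m$. Summing over $j \in J_i$ and using pairwise disjointness of the $A^{j,R}$ gives $\sum_{j \in J_i} a_j = |\bigcup_{j \in J_i} A^{j,R}| \le |\bigcup_{m \le i} \mathcal{S}'_m| = iB$. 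The two inequalities force equality throughout: $\sum_{j \in J_i} a_j = iB$; every $A^{j,L}$ with $j \in J_i$ is contained in $\bigcup_{m \le i} \mathcal{S}_m$, so that union equals $\bigcup_{j \in J_i} A^{j,L}$; and $\bigcup_{j \in J_i} A^{j,R}$, a subset of $\bigcup_{m \le i} \mathcal{S}'_m$ of the same size $iB$, equals it.

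Finally, I would apply the prefix statement at $i$ and at $i-1$ (with the conventions $J_0 = \emptyset$ and $\bigcup_{m \le 0} \mathcal{S}_m = \emptyset$). Monotonicity gives $J_{i-1} \subseteq J_i$, and any $j \in J_{i-1}$ has $A^{j,L} \subseteq \bigcup_{m \le i-1} \mathcal{S}_m$, hence $A^{j,L} \cap \mathcal{S}_i = \emptyset$; therefore the set of items whose left half meets $\mathcal{S}_i$ is exactly $J_i \setminus J_{i-1} = \mathcal{I}_i$. Subtracting the two prefix statements and using disjointness of the $A^{j,L}$ (resp.\ $A^{j,R}$) yields $\mathcal{S}_i = \bigcup_{j \in \mathcal{I}_i} A^{j,L}$ and $\mathcal{S}'_i = \bigcup_{j \in \mathcal{I}_i} A^{j,R}$, and as a by-product $\sum_{j \in \mathcal{I}_i} a_j = iB - (i-1)B = B$, which is precisely what will later make $(\mathcal{I}_1, \dots, \mathcal{I}_k)$ a solution of the \UBP{} instance.

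The main obstacle I anticipate is that no purely local, per-item argument works: analyzing the clique $A^j$ in isolation only produces one-sided relations between the buckets touched by $A^{j,L}$ and those touched by $A^{j,R}$ (roughly, $A^{j,R}$ never reaches further right than $A^{j,L}$), because the whole left-item region is short enough to fit within $b+1$ consecutive positions, so the clique edges inside one item put no lower bound on its span. The argument only closes once the global cardinality constraints $|\mathcal{S}_m| = |\mathcal{S}'_m| = B$ are fed in through the telescoping prefix trick; getting the bookkeeping between $J_i$ and $\mathcal{I}_i$ right, together with the arithmetic identity $\pi(\ell^i_1) + b = \pi(r^i_B)$ that makes the containment $A^{j,R} \subseteq \bigcup_{m \le i} \mathcal{S}'_m$ tight, is the crux.
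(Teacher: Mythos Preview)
Your proposal is correct and follows essentially the same argument as the paper: both use the arithmetic identity $\pi(\ell^i_1)+b=\pi(r^i_B)$ to force $A^{j,R}$ not to spill past $r^i_B$ whenever some vertex of $A^{j,L}$ lies left of $\ell^i_1$, and then close the argument by the counting constraint $|\mathcal{S}_m|=|\mathcal{S}'_m|=B$. The only difference is organizational: the paper runs an explicit induction on $i$ (using the hypothesis to rule out $A^{j,R}\cap\mathcal{S}'_z\neq\emptyset$ for $z<i$), whereas your prefix-sum formulation proves the cumulative statement for $\bigcup_{m\le i}\mathcal{S}_m$ directly and then subtracts, which sidesteps the explicit inductive appeal but is logically equivalent.
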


    \begin{claimproof}
        We will prove the statement by induction on $i \in [k]$.
        Let $i \in [k]$ and suppose that the statement holds for all $z \in [1, i-1]$ (notice that this is trivially true when $i = 1$).
        We will show that the statement also holds for $i$.

        First, it holds that $\mathcal{I}_i \cap \mathcal{I}_z = \varnothing$ for all $z \in [1, i-1]$,
        as for any such $z$, if $j \in \mathcal{I}_z$ then $A^{j,L} \subseteq \mathcal{S}_z$ due to the induction hypothesis,
        therefore $A^{j,L} \cap \mathcal{S}_i = \varnothing$,
        which implies that $j \notin \mathcal{I}_i$.

        We now argue that $\bigcup_{j \in \mathcal{I}_i} A^{j,R} \subseteq \mathcal{S}'_i$.
        Fix an arbitrary $j \in \mathcal{I}_i$.
        In that case, there exists $a \in A^{j,L} \cap \mathcal{S}_i$ with $\pi (a) < \pi (\ell^i_1)$.
        Additionally, for all $z \in [1, i-1]$ it holds that $j \notin \mathcal{I}_z$, since $\mathcal{I}_i \cap \mathcal{I}_z = \varnothing$,
        thus $A^{j,R} \cap \mathcal{S}'_z = \varnothing$ due to the induction hypothesis.
        Moreover, recall that for every $a' \in A^{j,R}$ it holds that $\pi (a') > \pi (\ell^k_1)$,
        thus $\pi (a') > \pi (r^{i-1}_B)$ follows.
        Assume there exists $a' \in A^{j,R}$ such that $\pi (a') > \pi (r^i_B)$.
        Then, $\pi (a') - \pi (a) > \pi (r^i_B) - \pi (\ell^i_1) = 2kB + B - 1 = b$,
        which is a contradiction since $e = \braces{a, a'} \in E(G)$,
        thus $\stretchEdge{\pi}(e) \leq b$.
        Consequently, it follows that $A^{j,R} \subseteq \mathcal{S}'_i$ for any $j \in \mathcal{I}_i$,
        thus $\bigcup_{j \in \mathcal{I}_i} A^{j,R} \subseteq \mathcal{S}'_i$,
        which implies that $| \bigcup_{j \in \mathcal{I}_i} A^{j,L} | = | \bigcup_{j \in \mathcal{I}_i} A^{j,R} | \leq |\mathcal{S}'_i| = B$.

        Notice that, by the definition of $\mathcal{I}_i$, it holds that $\bigcup_{j \in \mathcal{I}_i} A^{j,L} \supseteq \mathcal{S}_i$,
        which implies that $| \bigcup_{j \in \mathcal{I}_i} A^{j,R} | = | \bigcup_{j \in \mathcal{I}_i} A^{j,L} | \geq |\mathcal{S}_i| = B$.

        Since we have shown both that $\mathcal{S}_i \subseteq \bigcup_{j \in \mathcal{I}_i} A^{j,L}$ while $|\mathcal{S}_i| = |\bigcup_{j \in \mathcal{I}_i} A^{j,L}|$,
        and that $\mathcal{S}'_i \supseteq \bigcup_{j \in \mathcal{I}_i} A^{j,R}$ while $|\mathcal{S}'_i| = |\bigcup_{j \in \mathcal{I}_i} A^{j,R}|$,
        the claim follows.
    \end{claimproof}

    We argue that $(\mathcal{I}_1, \ldots, \mathcal{I}_k)$ certifies that $(A,k)$ is a YES-instance of \UBP.
    \cref{claim:structure3} implies that $\mathcal{I}_{i} \cap \mathcal{I}_{j} = \varnothing$ for all $1 \leq i < j \leq k$.
    Additionally, for every $j \in [n]$, there exists $i \in [k]$ such that $j \in \mathcal{I}_i$,
    since, as already argued, sets $\mathcal{S}_1, \ldots, \mathcal{S}_k$ partition set $\bigcup_{j=1}^n A^{j,L}$.
    Consequently, $(\mathcal{I}_1, \ldots, \mathcal{I}_k)$ partitions $[n]$.
    It remains to show that $\sum_{j \in \mathcal{I}_i} a_j = B$ for all $i \in [k]$.
    In order to do so, notice that $a_j = |A^{j,L}|$, for all $j \in [n]$.
    Moreover, $|\mathcal{S}_i| = B$, while due to \cref{claim:structure3}
    it holds that $\mathcal{S}_i = \bigcup_{j \in \mathcal{I}_i} A^{j,L}$ for all $i \in [k]$.
    Consequently, 
    \[
        \sum_{j \in \mathcal{I}_i} a_j =
        \sum_{j \in \mathcal{I}_i} | A^{j,L} | =
        \biggl| \bigcup_{j\in \mathcal{I}_i} A^{j,L} \biggr| =
        |\mathcal{S}_i|=B,
    \]
    and the statement follows.
\end{proof}

\begin{lemmarep}
    It holds that $\clusterDel(G) = \bO (k)$.
\end{lemmarep}

\begin{proof}
    Let $S = \{ x_{2kB+B} \} \cup \setdef{x_{2iB+B+1}}{i \in [k-1]} \cup \{ y_1 \} \cup \setdef{y_{2iB-B}}{i \in [2,k]}
    \cup \setdef{\ell^i_1, r^i_B}{i \in [k]}$.
    It holds that $|S| = 4k = \bO(k)$, while $G-S$ is a collection of cliques,
    therefore $\clusterDel(G) = \bO(k)$ follows.
\end{proof}

\section{Conclusion}\label{sec:conclusion}

In the current work we extend our understanding of {\BW} in the setting of parameterized complexity.
In particular, we have shown that the problem is FPT when parameterized by the cluster vertex deletion number $\clusterDel$ plus the clique number $\omega$ of the input graph,
becomes W[1]-hard when parameterized solely by $\clusterDel$.

The most natural research direction would be to explore the tractability of the problem when parameterized by twin cover,
modular-width, or vertex integrity,
given the lack of any relevant FPT/XP algorithms or hardness results.
As a matter of fact, it is not even known whether the problem is in XP when parameterized by $\clusterDel$ or tree-depth.

Finally, most tractability results for the various structural parameters rely on some ILP formulation.
This raises the question of whether any other kind of approach is applicable,
as is the case for \CW~\cite{algorithmica/CyganLPPS14}.



\bibliography{bibliography}


\end{document}